\documentclass[]{elsarticle}

\usepackage{lineno,hyperref}
\modulolinenumbers[5]
\let\linenumbers\nolinenumbers\nolinenumbers

\journal{Journal of \LaTeX\ Templates}

\usepackage{amsfonts, amsmath, amsthm, amssymb}
\usepackage{microtype}
\usepackage{multirow}
\usepackage{diagbox}
\usepackage{graphicx}
\usepackage{subfigure}
\usepackage{booktabs} 
\usepackage{algorithm}
\usepackage{wrapfig}
\usepackage{dsfont}
\usepackage{xcolor}

\newtheorem{definition}{Definition}[]
\newtheorem{assumption}{Assumption}[]
\newtheorem{theorem}{Theorem}[]
\newtheorem{prop}{Proposition}[]
\newtheorem{corollary}{Corollary}[]

\newtheorem{lemma}{Lemma}[]
\def\rz{\mathrm{z}}
\def\ry{{\mathrm{y}}}
\def\rx{\mathrm{x}}
\def\ru{\mathrm{u}}









\bibliographystyle{elsarticle-num}

\begin{document}

\begin{frontmatter}

\title{Entropy Estimation via Uniformization\tnoteref{mytitlenote}}

\author{Ziqiao Ao}
\ead{zxa029@bham.ac.uk}
\author{Jinglai Li}
\ead{j.li.10@bham.ac.uk}
\address{School of Mathematics, University of Birmingham, Birmingham B15 2TT, UK}




\begin{abstract}
Entropy estimation is of practical importance in information theory and statistical science. 
Many existing entropy estimators suffer from fast growing estimation bias with respect to dimensionality, 
rendering them unsuitable for high-dimensional problems. 
In this work we propose a transform-based method for high-dimensional entropy estimation, which consists of the following two main ingredients. 
First by modifying the k-NN based entropy estimator, we propose a new estimator which enjoys 
small estimation bias for samples that are close to a uniform distribution. 
Second we  design a normalizing flow based mapping that pushes samples toward a uniform distribution, and the relation between the entropy of the original samples and the transformed ones is also derived. 
As a result the entropy of a given set of samples is estimated by first transforming them toward a uniform distribution and then applying the proposed estimator to the transformed samples. { The performance of the proposed method is 
compared against several existing entropy estimators,
with both mathematical examples and real-world applications.}
\end{abstract}

\begin{keyword}
entropy estimation\sep $k$ nearest neighbor estimator\sep normalizing flow\sep uniformization
\MSC[2010] 00-01\sep  99-00
\end{keyword}

\end{frontmatter}

\linenumbers

\section{Introduction}
{ Entropy, a fundamental concept  in information theory, 
has found applications in various fields such as physics,  statistics,
signal processing, and machine learning.}
For example, in the statistics and data science contexts,  various applications rely critically on the estimation of entropy, including goodness-of-fit testing \cite{vasicek1976test, goria2005new}, sensitivity analysis \cite{azzi2020sensitivity},  parameter estimation \cite{ranneby1984maximum, wolsztynski2005minimum}, and Bayesian experimental design \cite{sebastiani2000maximum,ao2020approximate}.

In this work we focus on the continuous version of entropy that takes the form, 
\begin{equation}
	H(X)=-\int \log[p_\mathrm{x}(\mathrm{x})]p_\mathrm{x}(\mathrm{x})d\mathrm{x}, \label{eq:entX}
\end{equation}
where $p_{\mathrm{x}}(\mathrm{x})$ is probability density function of a random variable $X$. Despite the rather simple definition, entropy only admits an analytical expression for a limited family of distributions and needs to be evaluated numerically in general. When the distribution of interest is analytically available, in principle its entropy can be estimated by numerical integration schemes  such as the Monte Carlo method. However, in many real-world applications, the distribution of interest is not analytically available, and one  has to {estimate the entropy} from the realizations
drawn from the target distribution, which makes it  difficult or even impossible
to directly compute the entropy via numerical integration.

Entropy estimation has attracted considerable attention from various communities in the last a few decades, 
and numerous methods have been developed to 
directly estimate entropy from realizations. 
In this work we only consider non-parametric approaches which do not assume any parametric model of the target distribution, and 
those methods can be broadly classified into two categories. The first class of {methods,
are known} as the plug-in estimators, which first estimates the underlying probability density, and then compute the integral in Eq.~\eqref{eq:entX} using numerical integration or Monte Carlo (see \cite{beirlant1997nonparametric} for a detailed description). Some examples of density estimation approaches that have been studied for plug-in methods are kernel density estimator \cite{joe1989estimation, hall1993estimation, moon2018ensemble,pichler2022differential}, histogram estimator \cite{gyorfi1987density, hall1993estimation} and field-theoretic approach \cite{chen2018density}. A major limitation of this type of methods is that they rely on an effective density estimation, which is a difficult problem in its own right, especially when the dimensionality of the problem is high. 
A different strategy is to {directly estimate} the entropy from the independent samples of the random variable. Popular methods falling in this category  include the sample-spacing \cite{miller2003new} and the k-nearest neighbors (k-NN) \cite{kozachenko1987sample,kraskov2004estimating} based estimators. The latter is particularly appealing among the existing estimation methods thanks to its theoretical and computational advantages and has been widely used in practical problems. 
Efforts have been constantly devoted to extending and improving 
the k-NN methods, and some recent variants and extensions of 
the methods are \cite{ gao2015efficient,lord2018geometric,berrett2019efficient}. 
It is also worth mentioning that there are many other types of direct entropy estimators available. For example, Ariel and Louzoun \cite{ariel2020estimating} decoupled the target entropy to a sum of the entropy of marginals, which is estimated using one-dimensional methods, and the entropy of copula, which is estimated recursively by splitting the data along statistically dependent dimensions. Kandasamy et al. \cite{kandasamy2015nonparametric} 
suggested a leave-one-out technique for the  von Mises expansion based estimator \cite{fernholz2012mises}.

It is well known that, entropy estimation becomes increasingly more difficult as the dimensionality grows, and such difficulty is mainly due to the \emph{estimation bias}, which decays very slowly with respect to sample size for high-dimensional problems. 
For example in many popular approaches including the k-NN method~\cite{kozachenko1987sample}, the estimation bias decays at the rate 
of $O(N^{-\gamma/d})$ where $N$ is the sample size, $d$ is the dimensionality, and $\gamma$ is a positive constant \cite{krishnamurthy2014nonparametric, kandasamy2015nonparametric,gao2018demystifying, sricharan2013ensemble}. 
As a result, very few, if not none, of the existing entropy estimation methods can effectively handle high-dimensional problems  without 
making strong assumptions about the smoothness of the underlying distribution~\cite{kandasamy2015nonparametric}. 
 Indeed, the well-known minimax bias results (e.g.,~\cite{han2020optimal, birge1995estimation})
indicate that without the strong smoothness assumption \cite{kandasamy2015nonparametric}, the curse of dimensionality is unavoidable. However, efforts can still be made to reduce the difference between the actual estimation bias and the theoretical bound.
%

The main goal of this work is to provide an effective entropy estimation approach which can achieve faster bias decaying rate under mild smoothness assumption,
and thus can effectively deal with high-dimensional problems.
The method presented here consists of two main ingredients. 
We propose two truncated k-NN estimators based on those by \cite{kozachenko1987sample} and \cite{kraskov2004estimating} respectively,
and also provide the bounds of the estimation bias in these estimators.  
Interestingly our theoretical results suggest that the estimators achieve \emph{zero bias} for uniform distributions, while there is no such a result for 
any existing k-NN based estimators,  according to the bias analysis that are available to date~\cite{gao2018demystifying,singh2016finite,biau2015lectures}. This property offers the possibility to significantly improve the performance of entropy estimation by mapping the data points
toward a uniform distribution,
 a procedure that we refer to as \emph{uniformization}. Therefore the second main ingredient of the method is 
 to conduct the uniformization of the data points, with the normalizing flow (NF) technique \cite{rezende2015variational, papamakarios2021normalizing}.
 Simply speaking, NF 
constructs a sequence of invertible and differentiable mappings that transform a simple base distribution such as standard Gaussian
into a more complicated distribution whose density function may not be available. 
Specifically we use the Masked Autoregressive Flow~\cite{papamakarios2017masked}, a NF algorithm originally developed for density estimation,
combined with the probability integral transform, 
to push the original data points towards the uniform distribution.
We then estimate the entropy of the resulting near-uniform data points with the proposed truncated k-NN estimators,  
and derive that of the original ones accordingly (by adding an entropic correction term due to the transformation). 
Therefore, by combining the truncated k-NN estimators and the normalizing flow model, we are able to decode a complex high-dimensional distribution represented by the realizations, and obtain an accurate estimation of its entropy. 

{The rest of the paper is organized as follows. In Section \ref{sec:knn}, we describe the traditional k-NN based methods of entropy estimation and their convergence properties. 
In Section \ref{sec:umbee}, we introduce the truncated k-NN estimators for distributions with compact support,
and then show how to combine these new estimators with the NF-based uniformization procedure
to estimate the entropy of general distributions.
Numerical examples and applications are
presented in Sections \ref{sec:examples} and Section \ref{sec:application} respectively to demonstrate the effectiveness of the proposed methods. 
Finally, in Section \ref{sec:conclusion}, we summarize our findings and 
discuss some  future research directions.
} 

\section{k-NN Based Entropy Estimation} \label{sec:knn}
We provide a brief introduction to two commonly used k-NN based entropy estimators in this section. We 
start with the original k-NN entropy estimator proposed in \cite{kozachenko1987sample}, where the $k$-th nearest neighbor is contained in the smallest possible closed ball. Next, we introduce a popular variant of the k-NN estimator proposed in \cite{kraskov2004estimating}, and this method uses the smallest possible hyper-rectangle to cover at least $k$ points. We finally discuss some theoretical analysis of estimation errors in the estimators.

\subsection{Kozachenko-Leonenko Estimator}
Recall the definition of entropy in Eq.~\eqref{eq:entX}. Given a density estimator $\widehat{p_\mathrm{x}}(\mathrm{x})$ for ${p_\mathrm{x}}(\mathrm{x})$ and a set of $N$ i.i.d. samples $S=\{\mathrm{x}^{(i)}\}_{i=1}^N$ drawn from ${p_\mathrm{x}}(\mathrm{x})$, the entropy of the random variable $X$ can be estimated as follows:
\begin{equation}\label{eq:subest}
	\widehat{H}(X)=-N^{-1}\sum_{i=1}^{N}\log \widehat{p_\mathrm{x}}(\mathrm{x}^{(i)}).
\end{equation}
The Kozachenko-Leonenko (KL) estimator depends on a local uniformity assumption to obtain the estimate $\widehat{p_\mathrm{x}}(\mathrm{x})$. 
For each $x^{(i)}$, one first identifies the $k$-nearest neighbors (in terms of the $p$-norm distance) of it,  
and defines the smallest closed ball covering all these $k$ neighbors as:
\[B(\mathrm{x}^{(i)},\epsilon_{i}/2)=\{\mathrm{x}\in \mathbb{R}^d~ \big|~ \|\mathrm{x}-\mathrm{x}^{(i)}\|_p\leq \epsilon_{i}/2\},\]
where $\epsilon_{i}$ be twice the distance between $\mathrm{x}^{(i)}$ and its $k$-th nearest neighbor among the set $S$.
We shall refer to the closed ball $B(\mathrm{x}^{(i)},\epsilon_{i}/2)$ as a \emph{cell} centered at $\mathrm{x}^{(i)}$,   and let $q_i$ be the mass of the cell $B(\mathrm{x}^{(i)},\epsilon_{i}/2)$ , i.e., 
$$q_i(\epsilon_{i})=\int_{\mathrm{x}\in B(\mathrm{x}^{(i)},\epsilon_{i}/2)}p_\mathrm{x}(\rx)d\mathrm{x}.$$ 
It can be derived that the expectation value of $\log q_i$ over $\epsilon_{i}$ is given by
\begin{equation}
	\begin{aligned}\label{eq:log_p}
		\mathbb{E}(\log q_i)=\psi(k)-\psi(N),
	\end{aligned}
\end{equation}
where $\psi(x) = \frac{\Gamma'(x)}{\Gamma(x)}$ with $\Gamma(x)$ being the Gamma function~\cite{kraskov2004estimating}.
KL estimator then assumes that the density is constant in $B(x^{(i)},\epsilon_{i})$, which gives
\begin{equation}\label{eq:localconst}
	q_i(\epsilon_{i})\approx c_d \epsilon_{i}^d p_\mathrm{x}(\mathrm{x}^{(i)}),
\end{equation}
where $d$ is the dimension of $X$ and $$c_d=\Gamma(1+\frac{1}{p})^d/\Gamma(1+\frac{d}{p}),$$ is the volume of the $d$-dimensional unit  ball
with respect to $p$-norm.
Combining \eqref{eq:log_p} and \eqref{eq:localconst} one can get an estimate of the log-density at each sample point,
\begin{equation}
	\log \widehat{p_\mathrm{x}}(\mathrm{x}^{(i)}) = \psi(k)-\psi(N)-\log c_d - d\log\epsilon_{i}.
\end{equation}
Plugging the above estimates for $i=1,...,N$ into \eqref{eq:subest} yields the KL estimator:
\begin{equation}
	\widehat{H}_\mathrm{KL}(X)= -\psi(k)+\psi(N)+\log c_d + \frac{d}{N}\sum_{i=1}^{N}\log\epsilon_{i}.
\end{equation}

\subsection{KSG Estimator}\label{sec:ksg_est}
As is mentioned earlier, the Kraskov-St{\"o}gbauer-Grassberger (KSG) estimator is an important variant of $\hat{H}_\mathrm{KL}$. 
Unlike KL estimator that is based on closed balls, KSG estimator uses hyper-rectangles to form the cells at each data point.  Namely one chooses the $\infty$-norm as the distance metric (i.e $p=\infty$), and as a result the cell $B(x^{(i)},\epsilon_{i}/2)$ becomes 
a hyper-cube with side length $\epsilon_{i}$. 
Next, we  allow the hyper-cube to become a hyper-rectangle: i.e., the cells
admit different side lengths along different dimensions.
Specifically, for $j=1,...,d$, we define
$\epsilon_{i,j}$  to be twice of the distance between $x^{(i)}$ and its $k$-th nearest neighbor along dimension $j$,
and the cell centered at $\mathrm{x}^{(i)}$ covering its $k$-nearest neighbors becomes
\begin{equation} 
	\begin{split}
	B(\mathrm{x}^{(i)},\epsilon_{i,1:d}/2) = \{ \mathrm{x}=(\mathrm{x}_1,...,\mathrm{x}_d)\,|\, |\mathrm{x}_j-\mathrm{x}_j^{(i)}|\leq \epsilon_{i,j}/2,\,\, \\ \mathrm{for}\,\,j=1,...,d\},
	\end{split}
\end{equation}
where $\epsilon_{i,1:d} = (\epsilon_{i,1},...,\epsilon_{i,d})$. 
This change leads to 
a different formula for computing
the mass of the cell $B(\mathrm{x}^{(i)},\epsilon_{i,1:d}/2)$, 
\begin{equation} \label{eq:log_q} 
	\mathbb{E}(\log q_i)\approx \psi(k)-\frac{d-1}{k}-\psi(N).
\end{equation}
It is worth noting that the equality in Eq.~\eqref{eq:log_p} is replaced by approximate equality in Eq.~\eqref{eq:log_q}, because a uniform density within the rectangle has to be assumed to obtain Eq.~\eqref{eq:log_q} (see Lemma 2 in \ref{sec:lemmas} for details). 
Using a similar local assumption as Eq.~\eqref{eq:localconst}, the KSG estimator is derived as, 
\begin{equation}
	\widehat{H}_\mathrm{KSG}(X)= -\psi(k)+\psi(N)+\frac{d-1}{k} + \frac{1}{N}\sum_{i=1}^{N}\sum_{j=1}^{d}\log\epsilon_{i,j}.
\end{equation}
We note that the KSG method was actually developed in the context of estimating mutual information~\cite{kraskov2004estimating},
and has been reported to outperform the KL estimator in a wide range of problems~\cite{gao2018demystifying}.
As has been shown above, it is straightforward to extend it to entropy estimation, and our numerical experiments also
suggest that it has competitive performance as an entropy estimator, which will be demonstrated in Section~\ref{sec:examples}. 


\subsection{Convergence Analysis}
Another important issue is to analyze the estimation errors in these entropy estimators and especially how they behave as the sample size increases.  
In most of the k-NN based estimators including the two mentioned above, the variance is generally well controlled, decaying at a rate of 
$O(N^{-1})$ with $N$ being the sample size, while the main issue lies on the estimation bias. 
In fact, the bias of estimator $\widehat{H}_\mathrm{KL}$ has been well studied, but that of $\widehat{H}_\mathrm{KSG}$ receives very little attention. Previous results related to the former are listed as follows. The original \cite{kozachenko1987sample} paper established the asymptotic unbiasedness for $k=1$ while \cite{singh2003nearest} obtained the same result for general $k$. For distributions with unbounded support, \cite{tsybakov1996root} proved that the bias bound decays at a rate of $O(\frac{1}{\sqrt{N}})$ for $d=1$. \cite{gao2018demystifying} generalized it to higher dimensions, obtaining a bias bound of $O(N^{-\frac{1}{d}})$ up to polylogarithmic factors. For distributions compactly supported, usually densities satisfying the $\beta$-H{\"o}lder condition are considered. \cite{biau2015lectures} gave a quick-and-dirty
upper bound of bias, $O(N^{-\beta})$, for a simple class of univariate densities supported on $[0,1]$ and bounded away from zero. \cite{singh2016finite}
proved the bias is around $O(N^{-\frac{\beta}{d}})$ ($\beta \in (0,2]$) for general $d$ with some additional conditions on the boundary of support. 
We reinstate that all these works obtained a variance bound of $O(N^{-1})$.

It should be noted that the bias bounds given by previous studies typically depend on some properties of target densities, such as smoothness parameter and Hessian matrix, providing insights that these estimators perform well on certain distributions that satisfy certain conditions. 
This motivates the idea that one can transform the given data points toward a desired distribution for a more accurate entropy estimation,
which is detailed in next section. 

\section{Uniformizing Mapping Based Entropy Estimation} \label{sec:umbee}
In this section, we shall present an entropy estimation approach that is based on normalizing flow. 
As is mentioned earlier, it consists of two main ingredients: a truncated version of the k-NN entropy estimators, 
and a transformation that can map data points toward a uniform distribution. 

\subsection{Truncated KL/KSG Estimators}\label{sec:truncation}

\begin{figure}
	\vspace{-0.0in}
	\begin{center}
		\includegraphics[width=0.5\textwidth]{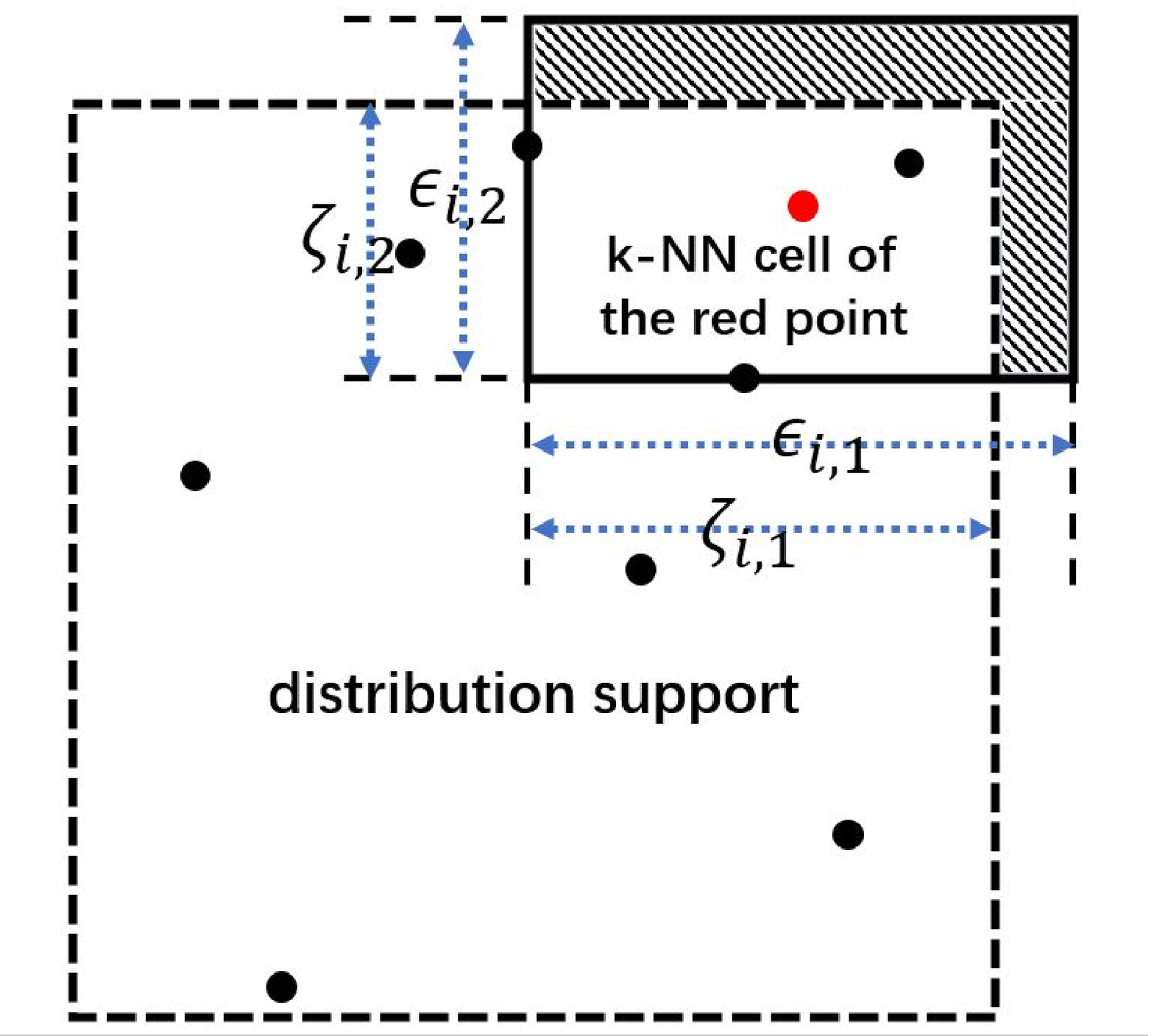}
	\end{center}
	\vspace{-0.0in}
	\caption{The schematic illustration of the truncated estimator.
		The shaded area is that removed from the k-NN cell.}
	\label{fig:tksg_ex}
	\vspace{-0.0in}
\end{figure}

For compactly supported distributions, a significant source of bias 
comes from the boundary of the support, where the $k$-NN cells are constructed including areas outside of the support of 
the distribution density~\cite{singh2016finite}. 
Intuitively speaking, incorrectly including such areas results in an underestimate of the densities, leading to bias in the estimator. 
We thus propose a method to reduce the estimation bias by excluding the areas outside of the distribution support, 
and remarkably the resulting estimator enjoy certain convergence properties which enable us to design the NF based estimation approach. 
The only additional requirement for using these estimators is that the bound of support of density should be specified.
Without loss of generality, we suppose the target density is supported on the unit cube $\mathcal{Q}:=[0,1]^d$ in $\mathbb{R}^d$. 
The procedure of our method is as follows: we first determine all the cells using either KL or KSG, 
then examine whether each k-NN cell covers area out of the distribution support, and
if so, truncate the cell at the boundary to exclude such area (see Fig.~\ref{fig:tksg_ex} for a schematic illustration).
Mathematically the truncated KL (tKL) estimator (with $\infty$-norm), is given by
\begin{equation}\label{eq:tkl}
	\widehat{H}_\mathrm{tKL}(X)= -\psi(k)+\psi(N) + \frac{1}{N}\sum_{i=1}^{N}\sum_{j=1}^{d}\log\xi_{i,j},
\end{equation}
where $$\xi_{i,j}=\min\{\mathrm{x}^{(i)}_j+\epsilon_{i}/2,1\}-\max\{\mathrm{x}^{(i)}_j-\epsilon_{i}/2,0\};$$
and the truncated KSG (tKSG) esitmator is given by
\begin{equation}\label{eq:tksg}
	\begin{aligned}
	\widehat{H}_\mathrm{tKSG}(X)= -\psi(k)+\psi(N)&+(d-1)/k + \frac{1}{N}\sum_{i=1}^{N}\sum_{j=1}^{d}\log\zeta_{i,j},
	\end{aligned}
\end{equation}
where $$\zeta_{i,j}=\min\{\mathrm{x}^{(i)}_j+\epsilon_{i,j}/2,1\}-\max\{\mathrm{x}^{(i)}_j-\epsilon_{i,j}/2,0\}.$$

Next we shall theoretically analyze the bias of the truncated estimators. 
Our analysis relies on some assumptions on the density function $p_\mathrm{x}$, which are summarized as below: 
\begin{assumption}\label{assumption11}
	The distribution $p_\rx$ satisfies: 
	\begin{enumerate}
		\item[(a)] $p_\mathrm{x}$ is continuous and supported on $\mathcal{Q}$;
		\item[(b)] $p_\mathrm{x}$ is bounded away from 0, i.e., $C_1 = \inf\limits_{\mathrm{x}\in\mathcal{Q}}p_\mathrm{x}(\mathrm{x})>0$;
		\item[(c)] The gradient of $p_\rx$ is uniformly bounded on ${\mathcal{Q}^o}$, i.e., $C_2 = \sup\limits_{\mathrm{x}\in\mathcal{Q}^o}||\triangledown p_\mathrm{x}(\mathrm{x})||_1<\infty$.
	\end{enumerate}
\end{assumption}

First we consider the bias of  estimator $\widehat{H}_\mathrm{tKL}$ and the following  theorem states that,  the bias
in  $\widehat{H}_\mathrm{tKL}$ is bounded and vanishes at the rate of $O(N^{-\frac{1}{d}})$.
\begin{theorem}\label{thm1}
	Under Assumption~\ref{assumption11} and for any finite $k$ and $d$, the bias of the truncated KL estimator is bounded by 
	$$\big|\mathbb{E}[\widehat{H}_\mathrm{tKL}(X)]-H(X)\big|\leq \frac{C_2}{C_1^{1+1/d}}\big(\frac{k}{ N}\big)^\frac{1}{d}.$$
	The variance of the truncated KL estimator is bounded by
	$$\mathrm{Var}[\widehat{H}_\mathrm{tKL}(X)]\leq C\frac{1}{N},$$
	for some $C>0$.
\end{theorem}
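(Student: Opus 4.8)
The plan is to handle the bias and the variance separately; the bias is the conceptual core and reduces to a short computation once the truncation is used correctly, while the variance is a routine (if technical) concentration estimate.

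For the bias, set $V_i=\prod_{j=1}^d\xi_{i,j}$, which is exactly the volume of the truncated cell $\widetilde B_i:=B(\rx^{(i)},\epsilon_i/2)\cap\mathcal Q$, so that $\widehat H_\mathrm{tKL}(X)=-\psi(k)+\psi(N)+\frac1N\sum_{i=1}^N\log V_i$. Because $p_\rx$ is supported on $\mathcal Q$, the cell mass $q_i=\int_{B(\rx^{(i)},\epsilon_i/2)}p_\rx$ is unchanged by the truncation, so the identity $\mathbb E[\log q_i]=\psi(k)-\psi(N)$ of Eq.~\eqref{eq:log_p} still applies. Taking expectations, using exchangeability of the samples and $H(X)=-\mathbb E[\log p_\rx(\rx^{(1)})]$, everything collapses to
$$\mathbb E[\widehat H_\mathrm{tKL}(X)]-H(X)=\mathbb E\Big[\log\tfrac{V_1\,p_\rx(\rx^{(1)})}{q_1}\Big]=\mathbb E\Big[\log\tfrac{p_\rx(\rx^{(1)})}{\bar p_1}\Big],$$
where $\bar p_1:=q_1/V_1$ is the average of $p_\rx$ over $\widetilde B_1$. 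This expression vanishes identically when $p_\rx$ is constant (the promised zero bias on uniform data), and in general it is controlled by the local oscillation of $p_\rx$: since $\widetilde B_1$ has $\infty$-diameter at most $\epsilon_1/2$, Assumption~\ref{assumption11}(c) gives $|\bar p_1-p_\rx(\rx^{(1)})|\le C_2\epsilon_1/2$, and since $\bar p_1,p_\rx(\rx^{(1)})\ge C_1$ (Assumption~\ref{assumption11}(b)) and $\log$ is $1/C_1$-Lipschitz on $[C_1,\infty)$, we get the pointwise bound $\big|\log(p_\rx(\rx^{(1)})/\bar p_1)\big|\le C_2\epsilon_1/(2C_1)$.

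It then remains to bound $\mathbb E[\epsilon_1/2]$, the expected $k$-NN radius. Here I would use that a cube $B(\rx^{(1)},r)$ with $r\le1$ intersected with $\mathcal Q=[0,1]^d$ has volume at least $r^d$ (each coordinate slab has length at least $r$ since $\rx^{(1)}\in\mathcal Q$), hence mass at least $C_1 r^d$; combined with $\epsilon_1/2\le1$ (the $\infty$-diameter of $\mathcal Q$) this forces $\epsilon_1/2\le(q_1/C_1)^{1/d}$. Conditionally on $\rx^{(1)}$, $q_1$ is $\mathrm{Beta}(k,N-k)$-distributed with mean $k/N$ — the same fact underlying Eq.~\eqref{eq:log_p} — so Jensen's inequality applied to the concave map $t\mapsto t^{1/d}$ yields $\mathbb E[\epsilon_1/2]\le C_1^{-1/d}(k/N)^{1/d}$. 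Multiplying by $C_2/C_1$ gives exactly the claimed bound $\big|\mathbb E[\widehat H_\mathrm{tKL}(X)]-H(X)\big|\le(C_2/C_1^{1+1/d})(k/N)^{1/d}$.

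For the variance, write $\widehat H_\mathrm{tKL}=\mathrm{const}+\frac1N\sum_i\phi_i$ with $\phi_i=\sum_{j=1}^d\log\xi_{i,j}\le0$, and apply the Efron--Stein inequality. Resampling a single data point $\rx^{(m)}$ perturbs $\phi_i$ only for those $i$ whose $k$-nearest-neighbour cell contains $\rx^{(m)}$ (old configuration) or its replacement (new configuration); a standard packing argument bounds the number of such indices by $c_d k$ with $c_d$ depending only on $d$. Moreover each perturbed increment $\phi_i-\phi_i'$ has second moment bounded uniformly in $N$: removing or inserting one point changes the cell radius of $\rx^{(i)}$ by only one ``rank'', so $|\log\xi_{i,j}-\log\xi_{i,j}'|$ is controlled by $\log$ of a ratio of consecutive near-neighbour distances of $\rx^{(i)}$, whose law (after the $N^{-1/d}$ rescaling) does not blow up with $N$. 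A Cauchy--Schwarz step over the $O(k)$ perturbed terms, summed over $m$, then gives $\mathrm{Var}[\widehat H_\mathrm{tKL}]=O(k^2/N)$, i.e.\ the stated $C/N$ for fixed $k,d$. I expect the bias argument to go through cleanly; the main obstacle is this last moment estimate in the variance bound — one must verify that although each individual $\phi_i$ is only $O(\log N)$ in $L^2$ (small cells are rare but not forbidden), the \emph{increment} produced by a single-sample swap stays $O(1)$ in $L^2$ uniformly in $N$, which is exactly what rules out a spurious polylogarithmic factor.
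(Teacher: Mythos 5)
Your bias argument is correct and is essentially the paper's own proof in slightly different clothing: rewriting the bias as $\mathbb{E}\big[\log\big(p_\rx(\rx^{(1)})/\bar p_1\big)\big]$ with $\bar p_1=q_1/V_1$ the average of $p_\rx$ over the truncated cell is exactly the paper's identity $\mathbb{E}[\widehat H_\mathrm{tKL}(X)]=-\mathbb{E}\big[\log\tfrac{P(\overline{B}(\rx;\epsilon_k/2))}{\mu(\overline{B}(\rx;\epsilon_k/2))}\big]$, and your Lipschitz step $|\log p_\rx-\log\bar p_1|\le C_2\epsilon_k/(2C_1)$ is the same inequality the paper derives. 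The one place you genuinely deviate is in bounding $\mathbb{E}[\epsilon_k]$: the paper invokes a tail-integration lemma of Singh--P\'oczos giving $\mathbb{E}[\epsilon_k^\alpha]\le(1+\tfrac{\alpha}{d})\big(\tfrac{k}{C_1N}\big)^{\alpha/d}$, whereas you combine the pointwise inequality $\epsilon_k/2\le(q_1/C_1)^{1/d}$ with the exact $\mathrm{Beta}(k,N-k)$ law of the cell mass and Jensen, obtaining the marginally cleaner $\mathbb{E}[\epsilon_k/2]\le\big(\tfrac{k}{C_1N}\big)^{1/d}$; both land on the stated constant, so this half of the proposal is sound.

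The variance half contains a genuine gap, and it is exactly the one you flag yourself. The Efron--Stein-plus-packing skeleton matches the paper's, but the decisive claim --- that each single-swap increment $\phi_i-\phi_i'$ has second moment bounded uniformly in $N$ --- is asserted, not proved, and it is not a routine verification. Each individual $\phi_i=\sum_j\log\xi_{i,j}$ concentrates around $\log\big(k/(p_\rx(\rx^{(i)})N)\big)$, so $\mathbb{E}[\phi_i^2]\asymp(\log N)^2$, and the crude bound $\mathbb{E}[(\phi_i-\phi_i')^2]\le 2\mathbb{E}[\phi_i^2]+2\mathbb{E}[\phi_i'^2]$ reintroduces precisely the polylogarithmic factor you want to exclude; one must make explicit either the cancellation of the common $-\log N$ drift (i.e.\ that $\mathrm{Var}(\phi_i)=O(1)$ even though $\mathbb{E}[\phi_i^2]$ is not), or a quantitative control of how a rank-one change in the $k$-NN ordering perturbs $\log\xi_{i,j}$, and neither is carried out. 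For comparison, the paper takes the leave-one-out route: it reduces everything to the three moments $\mathbb{E}[\alpha_1^2]$, $(N-1)\mathbb{E}[\mathds{1}_{E_2}\alpha_2^2]$ and $(N-1)\mathbb{E}[\mathds{1}_{E_2}\alpha_2^{*2}]$, uses $\mathbb{E}[\mathds{1}_{E_2}]=k/(N-1)$ together with the independence of $E_2$ from $\epsilon_k(\rx^{(2)})$, and then bounds $\mathbb{E}[\alpha_1^2]$ by an explicit integral against the density of $\epsilon_k$. Note, however, that since $\alpha_1$ itself drifts like $\log(k/N)$, its uncentered second moment cannot be $O(1)$, so on either route some form of centering or cancellation has to be made explicit to reach $C/N$ rather than $C(\log N)^2/N$; your instinct to work with the resampled differences, in which that drift cancels, is the right one, but the proposal stops short of the estimate that would close the argument.
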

\begin{proof}
    We provide a skeleton proof here, where the complete proof including 
    the notations 
    is detailed in 
    \ref{biastkl} and \ref{vartkl}. 

    \textit{Proof of the bias bound for the truncated KL estimator proceeds as follows.}
    \begin{enumerate}
        \item Show that 
        \begin{equation}\label{eq:thm1eq1}
		\begin{aligned}
		\mathbb{E}[\widehat{H}_{tKL}(X)]=-\mathbb{E}\big[\log\frac{P(\overline{B}(\mathrm{x};\epsilon_k/2))}{\mu(\overline{B}(\mathrm{x};\epsilon_k/2))}\big].
		\end{aligned}
		\end{equation}
		
		\item Bound the following difference by
		\begin{equation}\label{eq:thm1eq2}
		\begin{aligned}
		\bigg|\log p(\mathrm{x})-\log\frac{P(\overline{B}(\mathrm{x};\epsilon_k/2))}{\mu(\overline{B}(\mathrm{x};\epsilon_k/2))}\bigg|
		\leq\frac{C_2}{2C_1}\epsilon_k.
		\end{aligned}
		\end{equation}
		
		\item Note that $H(X)= -\mathbb{E}(\log p(x))$,
  and using Eq.~\eqref{eq:thm1eq1}, Eq.~\eqref{eq:thm1eq2} and the upper bound of $\mathbb{E}(\epsilon_k)$ obtained from Lemma~\ref{lemma4},
  we can derive that the bias $\mathbb{E}[\widehat{H}_{tKL}(X)]$ is bounded by
		\begin{equation}
		\begin{aligned}
		\big|\mathbb{E}[\widehat{H}_{tKL}(X)]-H(X)\big|
		\leq\frac{C_2}{C_1^{1+1/d}}\big(\frac{k}{ N}\big)^\frac{1}{d}.
		\end{aligned}
		\end{equation}
    \end{enumerate}
    
    \textit{Proof of the variance bound for the truncated KL estimator proceeds as follows.}
    \begin{enumerate}
        \item Let $\alpha_i=\sum_{j=1}^{d}\log\xi_{i,j}$ and let $\alpha^*_i$ (for $i=2,...,N$) be the estimators with sample $\mathrm{x}^{(1)}$  removed. Then, by the Efron-Stein inequality \cite{efron1981jackknife},
		\begin{equation}
		\begin{aligned}
		\mathrm{Var}[\widehat{H}_{tKL}(X)]=\mathrm{Var}\bigg[\frac{1}{N}\sum_{i=1}^{N}\alpha_i\bigg]\leq
		2 N \mathbb{E}\bigg[\bigg(\frac{1}{N}\sum_{i=1}^{N}\alpha_i-\frac{1}{N}\sum_{i=2}^{N}\alpha^*_i\bigg)^2\bigg].
		\end{aligned}
		\end{equation}
		
		\item Let $\mathds{1}_{E_i}$ be the indicator function of the event $E_i=\{\epsilon_k(\mathrm{x}^{(1)})\neq\epsilon^*_k(\mathrm{x}^{(1)})\}$, where $\epsilon^*_k(\mathrm{x}^{(1)})$ is twice the $k$-NN distance of $\mathrm{x}^{(1)}$ when $\alpha^*_i$ are used. Then we show that
		\begin{equation}
		\begin{aligned}
		N^2\bigg(\frac{1}{N}\sum_{i=1}^{N}\alpha_i-\frac{1}{N}\sum_{i=2}^{N}\alpha^*_i\bigg)^2\leq (1+C_{k,d})\bigg(\alpha_1^2+2\sum_{i=2}^{N}\mathds{1}_{E_i}(\alpha_i^2+\alpha_i^{*2})\bigg),
		\end{aligned}
		\end{equation}
		where $C_{k,d}$ is a constant. 
		\item Since $\alpha_i$ and $\alpha_i^*$ are identically distributed, we  only need to 
  derive the upper bounds of the following three expectations:
			$\mathbb{E}[\alpha_1^2]$, 
			$(N-1)\mathbb{E}[\mathds{1}_{E_2}\alpha_2^2]$ 
and			$(N-1)\mathbb{E}[\mathds{1}_{E_2}\alpha_2^{*2}]$.
		
		\item Finally we obtain the bound of the variance of $\widehat{H}_{tKL}(X)$
		\begin{equation}
		\mathrm{Var}[\widehat{H}_{tKL}(X)]\leq C \frac{1}{N},
		\end{equation}
		for some $C>0$.
    \end{enumerate}
\end{proof}
Note that $C_2=0$ when $p_\mathrm{x}$ is uniform on $\mathcal{Q}$, and the following corollary follows directly: 
\begin{corollary}\label{cor1}
	Under the assumption in Theorem~\ref{thm1}, if $X$ is uniformly distributed on $\mathcal{Q}$, then the truncated KL estimator is unbiased.
\end{corollary}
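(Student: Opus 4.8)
The plan is to obtain Corollary~\ref{cor1} as an immediate consequence of the bias bound in Theorem~\ref{thm1}. First I would observe that the uniform distribution on $\mathcal{Q}$ has a density $p_\rx$ that is constant (equal to $1$) on $\mathcal{Q}$, so it satisfies Assumption~\ref{assumption11}: it is continuous and supported on $\mathcal{Q}$; it is bounded away from zero with $C_1 = 1$; and its gradient vanishes identically on $\mathcal{Q}^o$, so that $C_2 = \sup_{\rx \in \mathcal{Q}^o}\|\triangledown p_\rx(\rx)\|_1 = 0 < \infty$. In particular, the hypotheses of Theorem~\ref{thm1} hold (the theorem only requires $C_2$ to be finite, and the value $C_2 = 0$ is admissible).

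Next I would simply instantiate the bias bound of Theorem~\ref{thm1} with $C_2 = 0$:
\begin{equation*}
0 \le \bigl|\mathbb{E}[\widehat{H}_\mathrm{tKL}(X)] - H(X)\bigr| \le \frac{C_2}{C_1^{1+1/d}}\Bigl(\frac{k}{N}\Bigr)^{1/d} = 0 .
\end{equation*}
Hence $\mathbb{E}[\widehat{H}_\mathrm{tKL}(X)] = H(X)$, which is exactly the statement that the truncated KL estimator is unbiased on the uniform distribution.

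For completeness one could also give a direct argument along the lines of the skeleton proof of Theorem~\ref{thm1}: for the uniform law $H(X) = -\int_\mathcal{Q} \log 1 \, d\rx = 0$, while Eq.~\eqref{eq:thm1eq1} expresses $\mathbb{E}[\widehat{H}_\mathrm{tKL}(X)]$ as $-\mathbb{E}\bigl[\log\bigl(P(\overline{B}(\rx;\epsilon_k/2))/\mu(\overline{B}(\rx;\epsilon_k/2))\bigr)\bigr]$; because the truncated cell $\overline{B}(\rx;\epsilon_k/2)$ is contained in $\mathcal{Q}$ and the uniform probability of any measurable subset of $\mathcal{Q}$ coincides with its Lebesgue measure, the ratio inside the logarithm is identically $1$, so the expectation equals $0 = H(X)$. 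There is essentially no obstacle here; the only point requiring a line of care is checking that the degenerate case $C_2 = 0$ is covered by Theorem~\ref{thm1}, after which the conclusion is immediate.
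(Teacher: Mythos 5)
Your proof is correct and follows essentially the same route as the paper: the authors likewise observe that $C_2=0$ for the uniform density on $\mathcal{Q}$ and deduce the corollary directly from the bias bound of Theorem~\ref{thm1}. Your additional direct verification via Eq.~\eqref{eq:thm1eq1} is a nice sanity check but is not needed.
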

This corollary is the theoretical foundation of the proposed method,
as it suggests that if one can transform the data points into a uniform distribution, the tKL method can yield an unbiased estimate. 
In reality, it is usually impossible to map the data point exactly into a uniform distribution to achieve the unbiased estimate. 
To this end, Theorem~\ref{thm1} suggests that, as long as the transformed samples are close to a uniform distribution in 
the sense that $C_2$ is small, the transformation can still significantly reduce the bias.
Since the main contribution of the mean-square estimation error comes from the bias (as the variance decays at the rate of $O(N^{-1})$), 
reducing the bias therefore leads much more accurate estimation of the entropy.

We next consider the bias of the tKSG estimator. 
The second theorem shows that the expectation of $\widehat{H}_\mathrm{tKSG}$ has the same limiting behavior up to a polylogarithmic factor in $N$. \begin{theorem}\label{thm2}
	Under Assumption~\ref{assumption11} and for any finite $k$ and $d$, the bias of the truncated KSG estimator is bounded by 
	$$\big|\mathbb{E}[\widehat{H}_\mathrm{tKSG}(X)]-H(X)\big|\leq C \frac{(\log N)^{k+2}}{C_1^{k+1} N^{\frac{1}{d}}}$$
	for some $C>0$. The variance of the truncated KSG estimator is bounded by 
	$$\mathrm{Var}[\widehat{H}_\mathrm{tKSG}(X)]\leq C’\frac{(\log N)^{k+2}}{N},$$
	for some $C’>0$.
\end{theorem}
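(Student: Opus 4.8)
The plan is to follow the template of the proof of Theorem~\ref{thm1}, the extra difficulty being that the rectangular-cell identity Eq.~\eqref{eq:log_q} holds only approximately (it rests on the local-uniformity assumption, cf.\ Lemma~2 of~\ref{sec:lemmas}), so one further error term has to be controlled. First I would reduce the bias to a single generic sample point. Writing $\overline{R}_1 = R_1\cap\mathcal{Q}$ for the truncated cell at $\rx^{(1)}$, so that $\sum_{j=1}^{d}\log\zeta_{1,j} = \log\mu(\overline{R}_1)$ with $\mu$ Lebesgue measure, the summands in Eq.~\eqref{eq:tksg} are identically distributed and
\[
\mathbb{E}[\widehat{H}_\mathrm{tKSG}(X)] = -\psi(k)+\psi(N)+\tfrac{d-1}{k}+\mathbb{E}\big[\log\mu(\overline{R}_1)\big].
\]
The key structural fact, exactly as for tKL, is that truncation removes only zero-mass regions because $p_\rx$ is supported on $\mathcal{Q}$, so $P(\overline{R}_1) = P(R_1) =: q_1$ while $\mu(\overline{R}_1)\le\mu(R_1)$; this is what tames the boundary bias.

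Next I would split $\log\mu(\overline{R}_1) = \log q_1 - \log\big(P(\overline{R}_1)/\mu(\overline{R}_1)\big)$. The ratio $P(\overline{R}_1)/\mu(\overline{R}_1)$ is the average of $p_\rx$ over $\overline{R}_1$, whose $\ell_\infty$-radius about $\rx^{(1)}$ is at most $\tfrac12\max_j\epsilon_{1,j} = \tfrac12\epsilon_1$, the max-norm $k$-NN distance; so Assumption~\ref{assumption11}(b)--(c) give, as in the second step of the proof of Theorem~\ref{thm1},
\[
\Big|\log\frac{P(\overline{R}_1)}{\mu(\overline{R}_1)} - \log p_\rx(\rx^{(1)})\Big| \le \frac{C_2}{2C_1}\,\epsilon_1 .
\]
Combined with $H(X) = -\mathbb{E}[\log p_\rx(X)]$ this gives
\[
\big|\mathbb{E}[\widehat{H}_\mathrm{tKSG}(X)] - H(X)\big| \le \Big|\mathbb{E}[\log q_1] - \psi(k)+\psi(N)+\tfrac{d-1}{k}\Big| + \frac{C_2}{2C_1}\,\mathbb{E}[\epsilon_1].
\]
The last term is handled by the same binomial-tail estimate behind Lemma~\ref{lemma4} (now in the $\infty$-norm, allowing cells that meet $\partial\mathcal{Q}$), which gives $\mathbb{E}[\epsilon_1] = O\big((k/(C_1 N))^{1/d}\big)$, already dominated by the claimed bound.

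So the crux is to upgrade Eq.~\eqref{eq:log_q}/Lemma~2 to a quantitative estimate, namely $\big|\mathbb{E}[\log q_1] - (\psi(k)-\psi(N)-\tfrac{d-1}{k})\big| \le C(\log N)^{k+2}/(C_1^{k+1}N^{1/d})$. I would obtain this by conditioning on the max-norm $k$-NN configuration of $\rx^{(1)}$ and splitting on the event $\{\epsilon_1\le r_0\}$ with $r_0 \asymp (Ck\log N/(C_1 N))^{1/d}$: on that event, replacing $p_\rx$ over the cell by $p_\rx(\rx^{(1)})$ up to a factor $1\pm\tfrac{C_2}{C_1}r_0$ reduces the expectation to the uniform computation behind Lemma~2 plus an $O(\tfrac{C_2}{C_1}\epsilon_1)$-type correction, while on its complement one uses the exponential binomial lower-tail bound $\mathbb{P}(\epsilon_1 > r_0) \le N^{-cCk}$ together with Cauchy--Schwarz and the finiteness of $\mathbb{E}[(\log q_1)^2]$. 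The polylogarithmic factor comes out of these moment estimates: the governing tail bounds, such as $\mathbb{P}(\epsilon_{1,j}\le\delta\mid\epsilon_1)\lesssim(\delta/\epsilon_1)^{k}$ and $\mathbb{P}(q_1\le t)\lesssim(C_1^{-1}Nt)^{k}$, carry the $k$-th power, so moments of $\log q_1$ and $\log\zeta_{1,j}$ inherit $k$-dependent constants, and tracking the exact power $k+2$ is the fiddly but routine part. This non-uniformity/tail analysis is the \emph{main obstacle}: in contrast to the $\ell_\infty$-ball of a tKL cell (whose mass, conditioned on its radius, has a $p_\rx$-free law, which is exactly why Eq.~\eqref{eq:log_p} is an equality), the mass of an anisotropic rectangle is genuinely coupled to $p_\rx$, and an elongated cell can have a side $\zeta_{1,j}$ as small as an inter-point gap; note also that, unlike for tKL, this error term does not vanish when $p_\rx$ is uniform, because the correction $(d-1)/k$ is itself only approximate for cells meeting $\partial\mathcal{Q}$, which is why Theorem~\ref{thm2} carries no unbiasedness corollary.

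For the variance I would proceed exactly as in the variance part of the proof of Theorem~\ref{thm1}: set $\alpha_i = \sum_{j=1}^{d}\log\zeta_{i,j}$, let $\alpha_i^*$ be the leave-one-out values, apply the Efron--Stein inequality, and reduce to bounding $\mathbb{E}[\alpha_1^2]$, $(N-1)\mathbb{E}[\mathds{1}_{E_2}\alpha_2^2]$ and $(N-1)\mathbb{E}[\mathds{1}_{E_2}\alpha_2^{*2}]$, where $E_2$ is the event that $\rx^{(2)}$'s $k$-NN radius changes upon removing $\rx^{(1)}$. The only genuinely new ingredient is that a rectangular side $\zeta_{1,j}$ can be far smaller than the corresponding cube side, so $(\log\zeta_{1,j})^2$ is heavier-tailed than in the tKL case; controlling $\mathbb{E}[\alpha_1^2]$ and the truncated expectations via the same lower-tail estimates again costs a $(\log N)^{k+2}$ factor, after which $\mathbb{P}(E_2)\lesssim k(\log N)^{k}/N$ and the crude control of the increments yield $\mathrm{Var}[\widehat{H}_\mathrm{tKSG}(X)] \le C'(\log N)^{k+2}/N$.
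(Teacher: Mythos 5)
Your proposal is correct and follows essentially the same route as the paper's proof: you reduce the bias to the error in the local-uniformity identity behind Eq.~\eqref{eq:log_q} (i.e.\ $\big|\mathbb{E}[\log q_1]-(\psi(k)-\psi(N)-\tfrac{d-1}{k})\big|$) plus a density-smoothing term of order $\mathbb{E}[\epsilon_1]$, and your plan for the crux --- splitting at $\epsilon_1\asymp(k\log N/(C_1N))^{1/d}$, replacing $p_\rx$ by $p_\rx(\rx^{(1)})$ up to a factor $1\pm O(r_0)$ on the small-cell event, and killing the complement with the exponential tail bound --- is exactly the paper's $I_1/I_2/I_3$ decomposition built on Lemma~\ref{lemma2} with $l=p(\rx)^{-1/d}$ and the Fa\`a di Bruno comparison of $f_{N,k}$ with $\widetilde{f}_{N,k}$, while the variance argument (Efron--Stein plus $\mathbb{E}[\beta_1^2]=O((\log N)^{k+2})$) is identical. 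The one slip is your bound $\mathbb{P}(E_2)\lesssim k(\log N)^{k}/N$ in the variance step: taken literally it yields $(N-1)\mathbb{E}[\mathds{1}_{E_2}\alpha_2^2]=O(k(\log N)^{2k+2})$ and hence only $\mathrm{Var}\lesssim(\log N)^{2k+2}/N$, whereas the paper uses the exact identity $\mathbb{E}[\mathds{1}_{E_2}]=k/(N-1)$ together with the independence of $E_2$ from $\epsilon_k(\rx^{(2)})$ to land on the claimed $(\log N)^{k+2}/N$.
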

\begin{proof}
    Again, we  only provide a skeleton proof here, with the complete details given in  \ref{biasksg} and \ref{varksg}.
    
    \textit{Proof of the bias bound for the truncated KSG estimator proceeds as follows.}
    
    \begin{enumerate}
        \item Suppose that $\widetilde{P}$, $ \widetilde{p}$, and $\widetilde{q}_{\epsilon_k^{\mathrm{x}_1},...,\epsilon_k^{\mathrm{x}_d}}(\mathrm{x})$ are defined as in Lemma~\ref{lemma2} with $l=p(\mathrm{x})^{-\frac{1}{d}}$, and by Lemma~\ref{lemma2} and the fact that $\sum_{j=1}^{d}\log\zeta_{i,j}$ are identically distributed, we have
		\begin{equation}
		\begin{aligned}
		\mathbb{E}[\widehat{H}_{tKSG}(X)]=\underset{\mathrm{x}\sim p}{\mathbb{E}}\underset{P}{\mathbb{E}}\big[\log{\zeta_k^{\mathrm{x}_1}\cdots \zeta_k^{\mathrm{x}_d}}\big]
		-
		\underset{\mathrm{x}\sim p}{\mathbb{E}}\underset{\widetilde{P}}{\mathbb{E}}\big[\log\big({p(\mathrm{x})\epsilon_k^{\mathrm{x}_1}\cdots\epsilon_k^{\mathrm{x}_d}}\big)\big].
		\end{aligned}
		\end{equation}
		
		\item We separate the $d$-dimensional unit cube $\mathcal{Q}$ into two subsets, $\mathcal{Q}=\mathcal{Q}_1+\mathcal{Q}_2$, where $\mathcal{Q}_1:= [\frac{a_N}{2}, 1-\frac{a_N}{2}]^d$, $a_N=\big(\frac{2k\log N}{C_1N}\big)^{\frac{1}{d}}$, and $\mathcal{Q}_2=\mathcal{Q}-\mathcal{Q}_1$. 
		
		\item Note that $H(X)=-\mathbb{E}(\log p(x))$, and we can then decompose the bias into three terms according to the above separation of unit cube: 
		\begin{equation}
		\begin{aligned}
		&\big|\mathbb{E}[\widehat{H}_{tKSG}(X)]-H(X)\big|\\
		=&\bigg|\underset{\mathrm{x}\sim p}{\mathbb{E}}\underset{P}{\mathbb{E}}\big[\log\big({\zeta_k^{\mathrm{x}_1}\cdots \zeta_k^{\mathrm{x}_d}}\big)\big]
		-
		\underset{\mathrm{x}\sim p}{\mathbb{E}}\underset{\widetilde{P}}{\mathbb{E}}\big[\log\big({\epsilon_k^{\mathrm{x}_1}\cdots\epsilon_k^{\mathrm{x}_d}}\big)\big]\bigg|\\
		\leq&I_1+I_2+I_3,
		\end{aligned}
		\end{equation}
		with
		\begin{equation}
		\begin{aligned}
		I_1&=\bigg|\underset{\mathrm{x}\in \mathcal{Q}_2}{\mathbb{E}}\underset{P:\epsilon_k< a_N}{\mathbb{E}}\big[\log\big({\zeta_k^{\mathrm{x}_1}\cdots \zeta_k^{\mathrm{x}_d}}\big)\big]\bigg| 
		+
		\bigg|\underset{\mathrm{x}\in \mathcal{Q}_2}{\mathbb{E}}\underset{\widetilde{P}:\epsilon_k< a_N}{\mathbb{E}}\big[\log\big({\epsilon_k^{\mathrm{x}_1}\cdots\epsilon_k^{\mathrm{x}_d}}\big)\big]\bigg|,\\
		I_2&=\bigg|\underset{\mathrm{x}\in \mathcal{Q}_1}{\mathbb{E}}\underset{P:\epsilon_k<a_N}{\mathbb{E}}\big[\log\big({\zeta_k^{\mathrm{x}_1}\cdots \zeta_k^{\mathrm{x}_d}}\big)\big]
		-
		\underset{\mathrm{x}\in \mathcal{Q}_1}{\mathbb{E}}\underset{\widetilde{P}:\epsilon_k<a_N}{\mathbb{E}}\big[\log\big({\epsilon_k^{\mathrm{x}_1}\cdots\epsilon_k^{\mathrm{x}_d}}\big)\big]\bigg|,\\
		I_3&=\bigg|\underset{\mathrm{x}\in \mathcal{Q}}{\mathbb{E}}\underset{P:\epsilon_k\geq a_N}{\mathbb{E}}\big[\log\big({\zeta_k^{\mathrm{x}_1}\cdots \zeta_k^{\mathrm{x}_d}}\big)\big]
		\bigg| 
		+
		\bigg|
		\underset{\mathrm{x}\in \mathcal{Q}}{\mathbb{E}}\underset{\widetilde{P}:\epsilon_k\geq a_N}{\mathbb{E}}\big[\log\big({\epsilon_k^{\mathrm{x}_1}\cdots\epsilon_k^{\mathrm{x}_d}}\big)\big]\bigg|,
		\end{aligned}
		\end{equation}
		where $\underset{P:\epsilon_k<a_N}{\mathbb{E}}$ means taking expectation under the probability measure $P$ over $\epsilon_k^{\mathrm{x}_j}<a_N, j=1,...,d$.
		
		\item Finally, by bounding the three terms separately, we obtain 
		\begin{equation}
		    \big|\mathbb{E}[\widehat{H}_\mathrm{tKSG}(X)]-H(X)\big|\leq C \frac{(\log N)^{k+2}}{C_1^{k+1} N^{\frac{1}{d}}},
		\end{equation}
		for some $C>0$.
    \end{enumerate}
    
    \textit{Proof of variance bound for the truncated KSG estimator proceeds as fllows.}
    \begin{enumerate}
        \item 
        Let $\beta_i=\sum_{j=1}^{d}\log\zeta_{i,j}$, and define $\beta^*_i$ (for $ i=2,...,N$) to be the estimators with sample $\mathrm{x}^{(1)}$  removed.
        Next we show that
			$(N-1)\mathbb{E}[\mathds{1}_{E_2}\beta_2^2]$
			and $(N-1)\mathbb{E}[\mathds{1}_{E_2}\beta_2^{*2}]$ are of the same order as $\mathbb{E}[\beta_1^2]$. 
   As such we only need to prove that $\mathbb{E}[\beta_1^2]=O({(\log N)^{k+2}})$, which is done in Steps 2 and 3. 
        
        \item Separate $\mathbb{E}\big[\beta_1^2\big]$ into two parts,
		\begin{equation}
		\begin{aligned}
		\mathbb{E}\big[\beta_1^2\big]=\underset{\mathrm{x}\in \mathcal{Q}}{\mathbb{E}}\underset{P:\epsilon_k< a_N}{\mathbb{E}}\big[\beta_1^2\big]
		+
		\underset{\mathrm{x}\in \mathcal{Q}}{\mathbb{E}}\underset{P:\epsilon_k\geq a_N}{\mathbb{E}}\big[\beta_1^2\big],
		\end{aligned}
		\end{equation}
		where $a_N=\big(\frac{2k\log N}{C_1N}\big)^{\frac{1}{d}}$.
		
		\item By bounding the two parts separately, we obtain 
	    the bound of the expectation of $\beta_1^2$
		\begin{equation}
		\mathbb{E}[\beta_1^2]\leq C_9 (\log N)^{k+2},
		\end{equation}
		for some $C_9>0$. 
		
		\item With the above bound, we can obtain the bound of the variance of $\widehat{H}_{tKSG}(X)$
		\begin{equation}
		\mathrm{Var}[\widehat{H}_{tKSG}(X)]\leq C'\frac{(\log N)^{k+2}}{N},
		\end{equation}
		for some $C'>0$.
		
    \end{enumerate}
    
\end{proof}
As one can see from Theorem~\ref{thm2},
while  the uniform distribution leads to zero bias for $\widehat{H}_\mathrm{tKL}$, we can not obtain the same result for $\widehat{H}_\mathrm{tKSG}$,
which means no theoretical justification for mapping the data points toward a uniform distribution for this estimator. 
That said, the tKSG estimator and Theorem~\ref{thm2} are still useful, and the reason for that is two-fold.  
First as is mentioned earlier,  no existing result on the bound of bias is available for the KSG estimator to 
the best of our knowledge,
and to this end our analysis on tKSG is the first known bias bound for this type of estimators, and
may provide useful information for understanding the convergence property of them.  
More importantly, our numerical experiments demonstrate that mapping the data points toward a uniform distribution does significantly improve 
the performance of tKSG as well. In fact, we have found that tKSG can achieve the same or slightly better results than tKL on the transformed samples in  our test cases.

\subsection{Estimating Entropy via Transformation}
As is mentioned earlier, based on the interesting convergence properties of the truncated estimators in particularly tKL, we want to 
estimate the entropy of a given set of samples by mapping them toward a uniform distribution. To implement this idea, an essential question to ask is that,
how the entropy of the transformed samples relates to that of the original ones. 
Proposition~\ref{prop1} provides an answer to this question. 
\begin{prop}[\cite{ihara1993information}]\label{prop1}
	Let $f$ be a mapping: $\mathcal{R}^d\rightarrow\mathcal{R}^d$, $X$ be random variable defined on $\mathcal{R}^d$ following distribution $p_\rx$,
	and $Z=f(X)$. 
	If $f$ is bijective and differentiable, we have
	\begin{equation}\label{eq:connect}
		H(X) = H(Z)+\int p_\mathrm{z}(\mathrm{z})\log\bigg|\det\frac{\partial f^{-1}(\mathrm{z})}{\partial \mathrm{z}}\bigg|d\mathrm{z},
	\end{equation}
	where $p_\rz(\rz)$ is the distribution of $Z$. 
\end{prop}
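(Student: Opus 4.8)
The plan is to obtain the identity as an immediate consequence of the change-of-variables formula for probability densities; no new machinery is needed beyond the standing assumption that $f$ is a diffeomorphism. \emph{Step 1.} I would first record the density transformation rule: because $f$ is bijective and differentiable (with differentiable inverse), the density of $Z=f(X)$ is
\[
p_\mathrm{z}(\mathrm{z}) = p_\mathrm{x}\big(f^{-1}(\mathrm{z})\big)\,\bigg|\det\frac{\partial f^{-1}(\mathrm{z})}{\partial \mathrm{z}}\bigg|.
\]
Taking logarithms, multiplying by $-p_\mathrm{z}(\mathrm{z})$, and integrating over $\mathbb{R}^d$ then splits $H(Z)$ into two pieces:
\[
H(Z) = -\int p_\mathrm{z}(\mathrm{z})\log p_\mathrm{x}\big(f^{-1}(\mathrm{z})\big)\,d\mathrm{z}\; -\; \int p_\mathrm{z}(\mathrm{z})\log\bigg|\det\frac{\partial f^{-1}(\mathrm{z})}{\partial \mathrm{z}}\bigg|\,d\mathrm{z}.
\]

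\emph{Step 2.} Next I would evaluate the first integral on the right by the substitution $\mathrm{z}=f(\mathrm{x})$. Since $d\mathrm{z}=|\det(\partial f/\partial \mathrm{x})|\,d\mathrm{x}$ and the density rule read in the reverse direction gives $p_\mathrm{z}(f(\mathrm{x}))\,|\det(\partial f/\partial \mathrm{x})|=p_\mathrm{x}(\mathrm{x})$, we obtain
\[
\int p_\mathrm{z}(\mathrm{z})\log p_\mathrm{x}\big(f^{-1}(\mathrm{z})\big)\,d\mathrm{z} = \int p_\mathrm{x}(\mathrm{x})\log p_\mathrm{x}(\mathrm{x})\,d\mathrm{x} = -H(X).
\]
Substituting this back into the expression for $H(Z)$ from Step 1 and rearranging yields exactly Eq.~\eqref{eq:connect}.

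\emph{Step 3 (the only delicate point).} The two applications of the change-of-variables theorem are routine; the care required is entirely in the hypotheses. I would state explicitly that (i) $f$ is a diffeomorphism, i.e. $f^{-1}$ is differentiable and the Jacobian determinants are nonzero almost everywhere --- this is already implicit in writing $\partial f^{-1}/\partial\mathrm{z}$ --- and (ii) the integrals defining $H(X)$, $H(Z)$ and the correction term are well defined (e.g. finite, or at least with a consistent sign) so that the additive split of $H(Z)$ in Step 1 is legitimate. Since the statement is classical (it is the cited result of \cite{ihara1993information}), I do not expect any genuine obstacle beyond presenting these regularity conditions cleanly; if one prefers to phrase the correction term through the forward map, the chain-rule identity $\det\big(\partial f^{-1}(\mathrm{z})/\partial\mathrm{z}\big) = 1/\det\big(\partial f(\mathrm{x})/\partial\mathrm{x}\big)\big|_{\mathrm{x}=f^{-1}(\mathrm{z})}$ can be invoked at the end.
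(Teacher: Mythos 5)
Your derivation is correct: it is the standard change-of-variables argument, and both applications of the substitution rule are applied properly, yielding exactly Eq.~\eqref{eq:connect}. The paper itself gives no proof of this proposition --- it is quoted as a known result from the cited reference --- so there is nothing to compare against beyond noting that your argument is the usual textbook one and that your Step~3 correctly identifies the only hypotheses (diffeomorphism, well-definedness of the three integrals) needed to make it rigorous.
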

Therefore given a data set $S=\{\mathrm{x}^{(i)})\}_{i=1}^N$ and a mapping $Z=f(X)$, from Eq.~\eqref{eq:connect} we can construct an entropy estimator of $X$ as,
\begin{equation}\label{eq:norm1}
	\widehat{H}(X) = \widehat{H}(Z)+\frac{1}{n}\sum_{i=1}^{n}\log\bigg|\det\frac{\partial f^{-1}(\mathrm{z}^{(i)})}{\partial \mathrm{z}}\bigg|,
\end{equation}
where $\widehat{H}(Z)$ is an entropy estimator of $Z$ (either tKL or tKSG) based on the transformed samples $S_Z=\{\mathrm{z}^{(i)}=f(\mathrm{x}^{(i)})\}_{i=1}^n$. 

We refer to such a mapping $f(\cdot)$ as a uniformizing mapping (UM) and the resulting methods as UM based
entropy estimators where the main procedure is outlined in Algorithm~\ref{alg:UM}. A central question in the implementation of Algorithm~\ref{alg:UM} is obviously how to construct a UM which can push the samples toward a uniform distribution, which is discussed in next section. 

The bias of the UM based estimators rely on the property of the UM (or equivalently the NF), on which we make the following assumption:
\begin{assumption}\label{as:2}
	Let $S=\{\mathrm{x}^{(i)}\}_{i=1}^N$ be the set of i.i.d samples used to construct the UM and  $p_{\mathrm{z}}^{S}$ be the resulting density of $Z$ in Eq.~\eqref{eq:norm1}. 
	Denote $C_2^N= \sup\limits_{\mathrm{z}\in\mathcal{Q}^o}||\triangledown p_\mathrm{z}^{S}(\mathrm{z})||_1$,  
and assume that $C_2^N$ satisfies: (1) $C_2^N \underset{N \rightarrow \infty}{\stackrel{\mathbb{P}}{\longrightarrow}} 0$; (2) 
	There exist a positive integer $M$ and a positive real number $\bar{C} < 1$ such that:
	$$\forall N>M,\quad C_2^N \leq \bar{C}, \,a.s.$$
\end{assumption}

Based on Theorem~\ref{thm1} and Theorem~\ref{thm2}, we can obtain the bias bounds and the MSE bounds of the UM based estimators.

\begin{corollary}\label{cor2}
	Suppose that the density function of the original distribution is differentiable and the UM satisfies Assumption~\ref{as:2}. The bias of UM-tKL estimator is bounded by
	\begin{equation} 		    		  
		\big|\mathbb{E}[\widehat{H}_\mathrm{UM-tKL}(X)]-H(X)\big|\leq C_\mathrm{UM-tKL}^N\big(\frac{k}{ N}\big)^\frac{1}{d},
	\end{equation}
	where $\lim\limits_{N\rightarrow \infty}C_\mathrm{UM-tKL}^N=0$. The MSE of UM-tKL estimator is bounded by
	\begin{equation}
	   \mathbb{E}[(\widehat{H}_\mathrm{UM-tKL}(X)-H(X))^2]\leq C_1\frac{1}{N}+D^N_{UM-tKL}\big(\frac{k}{ N}\big)^\frac{2}{d},
	\end{equation}
	where $C_1$ is a positive constant and $\lim\limits_{N\rightarrow \infty}D^N_{UM-tKL}=0$.
\end{corollary}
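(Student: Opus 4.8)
The plan is to derive Corollary~\ref{cor2} as a conditional-expectation argument built on top of Theorem~\ref{thm1}. The key observation is that, conditioned on the sample set $S$ that is used to train the UM, the map $f$ is fixed and deterministic, so $Z=f(X)$ has a fixed density $p_\mathrm{z}^S$ on $\mathcal{Q}$. Provided $p_\mathrm{z}^S$ satisfies Assumption~\ref{assumption11} — which holds once $N>M$ by part (2) of Assumption~\ref{as:2}, since $p_\mathrm{z}^S$ is differentiable with $\|\nabla p_\mathrm{z}^S\|_1\le C_2^N\le\bar C<\infty$ and (being a density on the compact cube $\mathcal Q$ that is bounded away from zero, say with constant $C_1^N$) — we may apply Theorem~\ref{thm1} to the transformed samples. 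This gives, conditionally on $S$ and using an independent fresh sample set $S_Z$ for the estimator,
\begin{equation}
\Big|\mathbb{E}\big[\widehat{H}(Z)\,\big|\,S\big]-H(Z)\Big|\le \frac{C_2^N}{(C_1^N)^{1+1/d}}\Big(\frac{k}{N}\Big)^{\frac{1}{d}}.
\end{equation}
Here $H(Z)$ is the (random, $S$-dependent) entropy of $p_\mathrm{z}^S$. First I would combine this with the exact identity of Proposition~\ref{prop1}, $H(X)=H(Z)+\int p_\mathrm{z}^S(\mathrm{z})\log|\det\partial f^{-1}/\partial\mathrm{z}|\,d\mathrm{z}$, together with the corresponding Monte Carlo term in Eq.~\eqref{eq:norm1}, whose conditional expectation (over $S_Z$) equals exactly $\int p_\mathrm{z}^S\log|\det\partial f^{-1}/\partial\mathrm{z}|\,d\mathrm{z}$. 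So the Jacobian correction contributes zero conditional bias, and the whole conditional bias of $\widehat H_{\mathrm{UM-tKL}}(X)$ reduces to the conditional bias of $\widehat H(Z)$.

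Next I would take the outer expectation over $S$. Writing $C_\mathrm{UM-tKL}^N:=\mathbb{E}_S\big[C_2^N/(C_1^N)^{1+1/d}\big]$, the tower property and Jensen's inequality give $\big|\mathbb{E}[\widehat H_{\mathrm{UM-tKL}}(X)]-H(X)\big|\le C_\mathrm{UM-tKL}^N(k/N)^{1/d}$. It remains to argue $C_\mathrm{UM-tKL}^N\to 0$. From Assumption~\ref{as:2}(1), $C_2^N\xrightarrow{\mathbb{P}}0$; and $C_2^N$ is bounded above (by $\bar C$ for $N>M$, and by any crude deterministic bound for finitely many small $N$, which can be absorbed into the limit), so $C_2^N\to 0$ in $L^1$ by bounded convergence. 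The denominator $(C_1^N)^{1+1/d}$ is bounded below away from zero — this needs a uniform lower bound on $C_1^N=\inf_{\mathcal Q}p_\mathrm{z}^S$, which one gets because a density on $\mathcal Q$ with $\|\nabla p_\mathrm{z}^S\|_1\le\bar C<1$ and total mass $1$ cannot be too small anywhere (its oscillation across the unit cube is at most $\bar C\cdot\mathrm{diam}\le \bar C\sqrt d$ in the relevant norm; more carefully, since $\int p_\mathrm{z}^S=1$ over a set of volume $1$, there is a point where $p_\mathrm{z}^S\ge 1$, and then $p_\mathrm{z}^S\ge 1-\bar C\,\mathrm{diam}(\mathcal Q)$ gives a uniform positive lower bound once $\bar C$ is small enough — alternatively one simply takes $C_1^N$ as an additional controlled quantity). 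Hence $C_2^N/(C_1^N)^{1+1/d}$ is dominated by an integrable (in fact bounded) function and converges to $0$ in probability, so its expectation $C_\mathrm{UM-tKL}^N\to 0$.

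Finally, for the MSE bound I would use the standard decomposition $\mathbb{E}[(\widehat H_{\mathrm{UM-tKL}}(X)-H(X))^2]=\mathrm{Var}[\widehat H_{\mathrm{UM-tKL}}(X)]+(\mathrm{bias})^2$. The squared bias is $\le (C_\mathrm{UM-tKL}^N)^2(k/N)^{2/d}=o(1)\cdot(k/N)^{2/d}$, matching the $D^N_{\mathrm{UM-tKL}}(k/N)^{2/d}$ term with $D^N_{\mathrm{UM-tKL}}=(C_\mathrm{UM-tKL}^N)^2\to 0$ (possibly plus a cross term from the variance decomposition over $S$, which I would also show is $o((k/N)^{2/d})$ or absorb it). For the variance: conditionally on $S$, Theorem~\ref{thm1} gives $\mathrm{Var}[\widehat H(Z)\mid S]\le C/N$ with $C$ depending on $(C_1^N,C_2^N,k,d)$; since those are uniformly controlled for $N>M$, this is $\le C_1/N$ for a universal $C_1$. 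The law of total variance adds $\mathrm{Var}_S\big(\mathbb{E}[\widehat H_{\mathrm{UM-tKL}}(X)\mid S]\big)$; since the conditional mean equals $H(X)$ plus something bounded by $C_2^N(k/N)^{1/d}\to 0$, this extra variance is also $o(1/N)$ and gets folded into the $C_1/N$ term or the vanishing coefficient.

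The main obstacle will be the uniform control of the constants from Theorem~\ref{thm1} when they are evaluated at the \emph{random} density $p_\mathrm{z}^S$ — specifically, ensuring $C_1^N$ (the lower bound of $p_\mathrm{z}^S$) stays bounded away from $0$ and the variance constant $C$ stays bounded, uniformly in $N$ and almost surely for $N>M$. Assumption~\ref{as:2} was crafted precisely to enable this: the condition $\bar C<1$ forces the transformed density to be close to uniform, hence bounded below, hence usable in Theorem~\ref{thm1}; the rest is bookkeeping with the tower property, bounded convergence, and the bias–variance decomposition.
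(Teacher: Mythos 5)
Your proposal is correct and follows essentially the same route as the paper: condition on the UM training sample, apply Theorem~\ref{thm1} to the transformed density $p_\mathrm{z}^S$ (using $\bar C<1$ and the fact that a density on the unit cube must hit the value $1$ somewhere to get $C_1^N\geq 1-\bar C>0$), then take the outer expectation and invoke bounded convergence to send the coefficient to zero. The only cosmetic difference is in the MSE step, where the paper uses the crude bound $(a+b)^2\leq 2a^2+2b^2$ conditionally on $S$ rather than your exact bias--variance plus law-of-total-variance decomposition; note that the extra term $\mathrm{Var}_S\big(\mathbb{E}[\widehat H\mid S]\big)$ is $o\big((k/N)^{2/d}\big)$ rather than $o(1/N)$, so it must be absorbed into the vanishing coefficient $D^N_{UM-tKL}$ as you correctly offer in the alternative.
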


\begin{proof}
    See \ref{sec:cor2}.
\end{proof}
    
\begin{corollary}\label{cor3}
    Suppose that the density function of the original distribution is differentiable and the UM satisfies Assumption~\ref{as:2}. The bias of UM-tKSG estimator is bounded by
	\begin{equation}
		\big|\mathbb{E}[\widehat{H}_\mathrm{UM-tKSG}(X)]-H(X)\big|\leq C_{UM-tKSG}\frac{(\log N)^{k+2}}{ N^{\frac{1}{d}}},
	\end{equation}
	where $C_{UM-tKSG}=C\frac{(1+\bar{C})\big((1+\bar{C})^d+1\big)}{(1-\bar{C})^{k+1}}$ and $C$ is a positive constant. The MSE of UM-tKSG estimator is bounded by
	\begin{equation}
	    \mathbb{E}[(\widehat{H}_\mathrm{UM-tKSG}(X)-H(X))^2]\leq C_2\frac{(\log N)^{k+2}}{N}+D^N_{UM-tKSG}\frac{(\log N)^{2(k+2)}}{ N^{\frac{2}{d}}},
	\end{equation}
	where $C_2$ is a positive constant and $D^N_{UM-tKSG}=\Big(C\frac{(1+\bar{C})\big((1+\bar{C})^d+1\big)}{(1-\bar{C})^{k+1}}\Big)^2$.	
\end{corollary}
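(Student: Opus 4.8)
The plan is to follow the route used for Corollary~\ref{cor2}, conditioning on the uniformizing mapping $f$ and then applying Theorem~\ref{thm2} to the transformed variable $Z=f(X)$, whose density is the (random) function $p_\rz^S$. Because $f$ is bijective and differentiable, Proposition~\ref{prop1} gives the exact identity $H(X)=H(Z)+\int p_\rz^S(\rz)\log\big|\det\tfrac{\partial f^{-1}(\rz)}{\partial\rz}\big|\,d\rz$, so that
\begin{equation*}
\widehat{H}_\mathrm{UM-tKSG}(X)-H(X)=\big(\widehat{H}_\mathrm{tKSG}(Z)-H(Z)\big)+\Big(\tfrac1N\textstyle\sum_{i=1}^{N}\log\big|\det\tfrac{\partial f^{-1}(\rz^{(i)})}{\partial\rz}\big|-\int p_\rz^S\log\big|\det\tfrac{\partial f^{-1}}{\partial\rz}\big|\Big).
\end{equation*}
Conditioning on $f$ (equivalently, on the sample $S$ that determines it) and treating the evaluation samples as independent of $f$, exactly as in the proof of Corollary~\ref{cor2}, the second bracket is a Monte Carlo average that is unbiased for the integral correction term; hence the conditional bias of $\widehat{H}_\mathrm{UM-tKSG}(X)$ equals that of $\widehat{H}_\mathrm{tKSG}$ run on i.i.d.\ draws from $p_\rz^S$.

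Next I would verify the hypotheses of Theorem~\ref{thm2} for $p_\rz^S$ and track the constant. By construction of the uniformizing mapping $p_\rz^S$ is continuous (indeed differentiable, given differentiability of $p_\rx$ and smoothness of the flow) and supported on $\mathcal{Q}=[0,1]^d$, and Assumption~\ref{as:2}(2) gives $\|\triangledown p_\rz^S\|_1\le C_2^N\le\bar C<1$ for all $N>M$. Since $\int_{\mathcal{Q}}p_\rz^S=1$ over a set of unit volume, $\sup_{\mathcal{Q}}p_\rz^S\ge 1$ and $\inf_{\mathcal{Q}}p_\rz^S\le 1$; combined with $|p_\rz^S(\rz)-p_\rz^S(\rz')|\le\|\rz-\rz'\|_\infty\,\|\triangledown p_\rz^S\|_1\le\bar C$ this yields $1-\bar C\le p_\rz^S(\rz)\le 1+\bar C$ on $\mathcal{Q}$; in particular the constant $C_1$ of Theorem~\ref{thm2}, when applied to $p_\rz^S$, satisfies $C_1\ge 1-\bar C$. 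Re-reading the proof of Theorem~\ref{thm2} to express its leading constant as an explicit function of $\inf p_\rz^S$ and $\sup p_\rz^S$ — the infimum enters through $a_N$ and the $C_1^{-(k+1)}$ factor, the supremum through the cell side-length and near-boundary volume estimates, which together contribute the factor $(1+\bar C)\big((1+\bar C)^d+1\big)$ — and substituting the two-sided bound above gives, for $N>M$ and almost surely in $f$,
\begin{equation*}
\big|\mathbb{E}[\widehat{H}_\mathrm{tKSG}(Z)\mid f]-H(Z)\big|\le C\,\frac{(1+\bar C)\big((1+\bar C)^d+1\big)}{(1-\bar C)^{k+1}}\,\frac{(\log N)^{k+2}}{N^{1/d}}=C_{UM-tKSG}\,\frac{(\log N)^{k+2}}{N^{1/d}}.
\end{equation*}
Taking expectation over $f$ yields the stated bias bound.

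For the MSE I would write $\widehat{H}_\mathrm{UM-tKSG}(X)-H(X)=\big(\widehat{H}_\mathrm{UM-tKSG}(X)-\mu(f)\big)+\big(\mu(f)-H(X)\big)$ with $\mu(f)=\mathbb{E}[\widehat{H}_\mathrm{UM-tKSG}(X)\mid f]$, so that the MSE is at most $2\mathbb{E}\big[\mathrm{Var}[\widehat{H}_\mathrm{UM-tKSG}(X)\mid f]\big]+2\mathbb{E}\big[(\mu(f)-H(X))^2\big]$. Conditionally on $f$, Theorem~\ref{thm2} gives $\mathrm{Var}[\widehat{H}_\mathrm{tKSG}(Z)\mid f]\le C'(\log N)^{k+2}/N$, uniformly in $f$ for $N>M$ because $C_1\ge 1-\bar C$ there, and the log-Jacobian Monte Carlo average contributes a term of the same or lower order (handled as in the proof of Corollary~\ref{cor2}); this gives the $C_2(\log N)^{k+2}/N$ part. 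The second expectation is bounded by the squared conditional bias just derived, uniformly over $f$ with $N>M$, producing the $D^N_{UM-tKSG}(\log N)^{2(k+2)}/N^{2/d}$ part with $D^N_{UM-tKSG}=C_{UM-tKSG}^2$ after absorbing constants. The main obstacle is the second step: re-deriving Theorem~\ref{thm2}'s constant as an explicit function of $\inf p_\rz^S$ and $\sup p_\rz^S$ so that substituting $1\mp\bar C$ reproduces exactly $C_{UM-tKSG}$, while also rigorously justifying that the evaluation samples may be treated as independent of the fitted flow (via sample splitting, as in the proof of Corollary~\ref{cor2}).
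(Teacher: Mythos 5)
Your proposal matches the paper's argument: both condition on the fitted uniformizing map, use Assumption~\ref{as:2} together with the fact that $p_\rz^S$ integrates to one on a unit-volume domain to sandwich $1-\bar C\le p_\rz^S\le 1+\bar C$, apply Theorem~\ref{thm2} with the constant tracked through $\inf p_\rz^S$ and $\sup p_\rz^S$ (the factor $\bar p_\rz^S\big((\bar p_\rz^S)^d+1\big)/C_1^{k+1}$), and obtain the MSE via the same conditional bias--variance decomposition used for Corollary~\ref{cor2}. This is essentially the paper's proof, and your identification of where the $(1+\bar C)\big((1+\bar C)^d+1\big)$ and $(1-\bar C)^{-(k+1)}$ factors arise is exactly right.
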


\begin{proof}
	See \ref{sec:cor3}. 
\end{proof}

		\begin{algorithm}[H]
			\caption{UM based entropy estimator}
			\label{alg:UM}
			Input: a set of i.i.d samples: $S_X=\{\mathrm{x}^{(i)}\}$;\\
			Output: an entropy estimate $\widehat{H}(X)$;
			\begin{itemize}
				\item compute a uniformizing map $f(\cdot)$;
				\item let $S_Z=\{\mathrm{z}^{(i)}=f(\mathrm{x}^{(i)}),\,i=1,...,n\}$;
				\item estimate $\widehat{H}(Z)$ from $S_Z$ using Eq.~\eqref{eq:tkl} or Eq.~\eqref{eq:tksg};
				\item compute $\widehat{H}(X)$ using Eq.~\eqref{eq:norm1}.
			\end{itemize}
		\end{algorithm}

\subsection{Constructing UM  via Normalizing Flow}\label{sec:um}
We discuss in this section how to construct a UM via the NF method. First since the image of $f$ is $[0,1]^d$, we assume that 
$f$ is in the form of $f= \Phi\circ g$ where $g:\mathcal{R}^d\rightarrow\mathcal{R}^d$ {is learned} and $\Phi:\mathcal{R}^d\rightarrow[0,1]^d$ is prescribed. 
Recall that $p_\rz$ is the distribution of $Z=f(X)$ with $X$ following $p_\rx$, and we want the function $g$ by 
minimize the Kullback-Leibler divergence (KLD) between $p_\rz$ and the uniform distribution $p_\mathrm{u}$:
\begin{equation}
	\min_{g\in \Omega} D(p_\rz|p_\ru):=\int p_\rz(\rz) \log\left[\frac{p_\rz(\rz)}{p_\ru(\rz)}\right] d\rz,\label{e:minkld1}
\end{equation}
where $z=\Phi\circ g (x)$ and $\Omega$ is a suitable function space. 
Solving Eq.~\eqref{e:minkld1} directly poses some computational difficulty as the calculation involves the function $\Phi$, the choice of 
which may affect the computational efficiency. 
To simplify the computation, we recall the following proposition: 
\begin{prop}[\cite{papamakarios2021normalizing}]\label{prop3}
	Let $T:\mathcal{Y}\rightarrow \mathcal{Z}$ be a bijective and differentiable transformation,  $p_\mathrm{z}(\mathrm{z})$ be the distribution obtained by passing $p_\mathrm{y}(\mathrm{y})$ through $T$, and $\pi_\mathrm{z}(\mathrm{z})$ be the distribution obtained by passing $\pi_\mathrm{y}(\mathrm{y})$ through $T$. Then the equality
	\begin{equation}
		D(\pi_\mathrm{y}(\mathrm{y})||p_\mathrm{y}(\mathrm{y}))=D(\pi_\mathrm{z}(\mathrm{z})||p_\mathrm{z}(\mathrm{z}))
	\end{equation}
	holds.
\end{prop}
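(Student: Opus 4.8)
The plan is to prove Proposition~\ref{prop3} directly from the change-of-variables formula for densities, which is the standard way to establish that KL divergence is invariant under a common bijective differentiable reparametrization. First I would fix notation: write $J_T(\mathrm{y}) = \det\frac{\partial T(\mathrm{y})}{\partial \mathrm{y}}$ for the Jacobian determinant of $T$, and recall that because $p_\mathrm{z}$ is obtained by pushing $p_\mathrm{y}$ through $T$, the change-of-variables formula gives $p_\mathrm{z}(T(\mathrm{y})) = p_\mathrm{y}(\mathrm{y}) \, |J_T(\mathrm{y})|^{-1}$, and likewise $\pi_\mathrm{z}(T(\mathrm{y})) = \pi_\mathrm{y}(\mathrm{y}) \, |J_T(\mathrm{y})|^{-1}$. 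The key observation is that the ratio $\pi_\mathrm{z}(T(\mathrm{y}))/p_\mathrm{z}(T(\mathrm{y}))$ equals $\pi_\mathrm{y}(\mathrm{y})/p_\mathrm{y}(\mathrm{y})$ exactly, because the Jacobian factors cancel.

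Next I would write out $D(\pi_\mathrm{z}\|p_\mathrm{z}) = \int_{\mathcal{Z}} \pi_\mathrm{z}(\mathrm{z})\log\frac{\pi_\mathrm{z}(\mathrm{z})}{p_\mathrm{z}(\mathrm{z})}\,d\mathrm{z}$ and perform the substitution $\mathrm{z} = T(\mathrm{y})$, so that $d\mathrm{z} = |J_T(\mathrm{y})|\,d\mathrm{y}$. This turns the integral into $\int_{\mathcal{Y}} \pi_\mathrm{z}(T(\mathrm{y}))\log\frac{\pi_\mathrm{z}(T(\mathrm{y}))}{p_\mathrm{z}(T(\mathrm{y}))}\,|J_T(\mathrm{y})|\,d\mathrm{y}$. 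Substituting $\pi_\mathrm{z}(T(\mathrm{y})) = \pi_\mathrm{y}(\mathrm{y})|J_T(\mathrm{y})|^{-1}$ for the leading factor and using the cancellation of the Jacobian inside the logarithm, the $|J_T(\mathrm{y})|$ factors cancel and the integrand collapses to $\pi_\mathrm{y}(\mathrm{y})\log\frac{\pi_\mathrm{y}(\mathrm{y})}{p_\mathrm{y}(\mathrm{y})}$, so the integral equals $D(\pi_\mathrm{y}\|p_\mathrm{y})$, which is exactly the claimed identity.

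The only genuine subtlety — and the part I would be most careful about — is measure-theoretic bookkeeping rather than any deep difficulty: one should note that $T$ being a bijective differentiable transformation (with differentiable inverse, as is implicit in the normalizing-flow setting) justifies the change of variables, that the densities are related almost everywhere, and that the argument of each logarithm is well defined wherever the relevant density is positive (the usual convention $0\log 0 = 0$ and the fact that $\pi_\mathrm{z} \ll p_\mathrm{z}$ iff $\pi_\mathrm{y} \ll p_\mathrm{y}$ handle the degenerate cases, including the case where $D = +\infty$). Since the result is quoted from \cite{papamakarios2021normalizing}, I would keep the write-up short, presenting the Jacobian cancellation as the one crucial algebraic step and relegating the integrability caveats to a remark. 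No delicate estimates or obstacles are expected here; the proof is essentially a one-line computation once the change-of-variables formula is invoked.
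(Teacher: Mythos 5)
Your argument is correct: the Jacobian factors from the pushforward densities cancel inside the logarithm, and the same factor reappears from the substitution $\mathrm{z}=T(\mathrm{y})$ in the integral, so the two KL divergences coincide. The paper does not actually prove Proposition~\ref{prop3} — it quotes it from \cite{papamakarios2021normalizing} — and your change-of-variables computation is exactly the standard argument given there, so there is nothing to reconcile; your caveat that $T$ should be a diffeomorphism (differentiable inverse) is also the right one, since the statement as written only says ``bijective and differentiable.''
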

We now construct the mapping $\Phi$ with the  cumulative distribution function of the standard normal distribution, a technique known as the probability integral transform, yielding, for a given $\ry\in R^d$, 
$$ \Phi(\mathrm{y})=(\phi_1(\mathrm{y}_1),...,\phi_d(\mathrm{y}_d)),\,\, \phi_i(\mathrm{y}_i) = \frac12(1+\mathrm{erf}(\frac{\mathrm{y}}{\sqrt{2}})),$$
where $\mathrm{erf}(\cdot)$ is the error function. 
It should be clear that if $\ry$ follows a standard normal distribution, $\rz=\Phi(\ry)$ follows a uniform distribution in $[0,1]^d$, and vice versa. 
Now applying Proposition~\ref{prop3}, we can show that Eq.~\eqref{e:minkld1} is equivalent to 
\begin{equation}
	\min_{g\in \Omega} D(p_\ry(\ry)|q(\ry)),\label{e:minkld2}
\end{equation}
where $\ry=g(\rx)$ follows distribution $p_\ry(\cdot)$ and $q(\cdot)$ is the standard normal distribution. 
Now assume that $g(\cdot)$ is invertible and let its inverse be $h= g^{-1}$. We also assume that both $g$ and $h$ are differentiable. 
Applying Proposition~\ref{prop3} to Eq.~\eqref{e:minkld2} with $T=h$, we find that Eq.~\eqref{e:minkld2} is equivalent to 
\begin{equation}
	\min_{h\in \Omega^{-1}} D(p_\rx(\rx)|q_\mathrm{h}(\rx)),\label{e:minkld3}
\end{equation}
where $\Omega^{-1}=\{g^{-1}|g\in\Omega\}$ and $q_\mathrm{h}$ is the distribution obtained by passing $q$ through the mapping $h$:
\begin{equation}
	q_\mathrm{h}(\mathrm{x}) = q\left(h^{-1}(\mathbf{x})\right) \bigg| \operatorname{det}\left(\frac{\partial h^{-1}}{\partial \mathbf{x}}\right)\bigg|.
\end{equation}

Eq.~\eqref{e:minkld3} essentially says that we want to push a standard normal distribution $q$ toward a target distribution $p_\rx$,
and therefore solving Eq.~\eqref{e:minkld3} falls naturally into the framework of NF. 
Specifically, NF aims to build such a mapping $h$  by composing multiple simple mappings: $h=h_1\circ ...\circ h_K$. 
	Each $h_k$ needs to be a  diffeomorphism: namely it is invertible and both it and its inverse are differentiable,
	which ensures that their composition $h$ is also a diffeomorphism.
	Next by plugging in the data, we can rewrite Eq.~\eqref{e:minkld3} as a maximum likelihood problem:  
	\begin{equation}
	\max_{h=(h_1,...,h_K) } E_{p_\rx}[\log q_\mathrm{h}(\rx)] : \approx \frac1N \sum_{i=1}^N \log q_\mathrm{h}(x^{(i)}).
	\end{equation}
	As is mentioned earlier, the intermediate mapping $h_i$ is usually taken to be of a simple parametrized form and so that its gradient and inverse
	are easy to compute. 
	Once $h_1,...,h_K$ are computed, the function $g$ can be obtained as
	\begin{equation}
	g=	(h_1\circ\cdots\circ h_K)^{-1}=h_K^{-1}\circ\cdots\circ h_1^{-1},
	\end{equation}
	and recall that in Eq.~(13) in the main paper we also need the   det-Jacobian of mapping $g^{-1}$ (i.e., $h$), which can be calculated as, 
	\begin{equation}
	\det \frac{\partial g^{-1}(\ry)}{\partial \ry} = 
	\det \frac{\partial h_1(\mathrm{y}_1)}{\partial \mathrm{y}_1}\circ\cdots\circ\det \frac{\partial h_K(\mathrm{y}_K)}{\partial \mathrm{y}_K},
	\end{equation}
	where $\ry_K=\ry$, $\ry_0=\rx$ and $\ry_{k-1} = h_k(\ry_{k})$ for $k=1,...,K$. 
	
	The NF methods depend critically on the component layers, the choice of which has to be balanced between 
	computational efficiency and representing flexibility. 
	In this paper, we use a special version of NF, the Masked Autoregressive Flow (MAF)~\cite{papamakarios2017masked} that is originally designed for density estimation. 
	Since the purpose of  MAF 
	is to estimate the density $p_\rx$, it is  specifically designed to efficiently evaluate the inverse mappings, which is 
	thus particularly useful for our application.   
	We note, however, our method does not rely on any specific implementation of NF. 

Once the mapping $h(\cdot)$ (or equivalently $g^{-1}(\cdot)$) is obtained, it can be inserted directly into Algorithm~1 to estimate the sought entropy. 
In practice, the samples  are split into two sets, where one of them is used to construct the UM and the other is used to estimate the entropy.


\section{Numerical Experiments}\label{sec:examples}
{ Before diving into the applications, we conduct several numerical comparisons of the proposed estimators using mathematical examples. 
The code for reproducing these examples can be found in \url{https://github.com/ziq-ao/NFEE}.}
\subsection{An Illustrating Example for the Truncated Estimators}
	Here we use a toy example to demonstrate 
	the improvement of the truncated estimators over the na\"ive version. 
	Specifically, the test example is an independent multivariate Beta distributions $B(b,b)$ 
	with dimensionality $d$ and shape parameter $b$.
	In the numerical experiments, the dimensionality is varied from 1 to 40 and the 
	parameter $b$ takes three values $1$, $1.5$ and $2$. 
	In each setup, we generate 1000 samples from the distribution and use KL, KSG, tKL and tKSG to estimate the entropy. 
	All experiments are repeated 100 times and the Root-mean-square-error (RMSE) of estimates are computed. 
	In Fig.~\ref{fig:ksg_vs_tksg}, we plot the RMSE (on a logarithmic scale) against the dimensionality $d$. From this figure, we can see that the truncated methods (blue lines) significantly outperform the na{\"i}ve ones (red lines) in all cases,
	indicating that the truncation technique can improve the performance of the KL/KSG estimators for compactly supported distributions.
	\begin{figure}[ht]
		\vspace{-0.1in}
		\centering
		\includegraphics[scale=0.6]{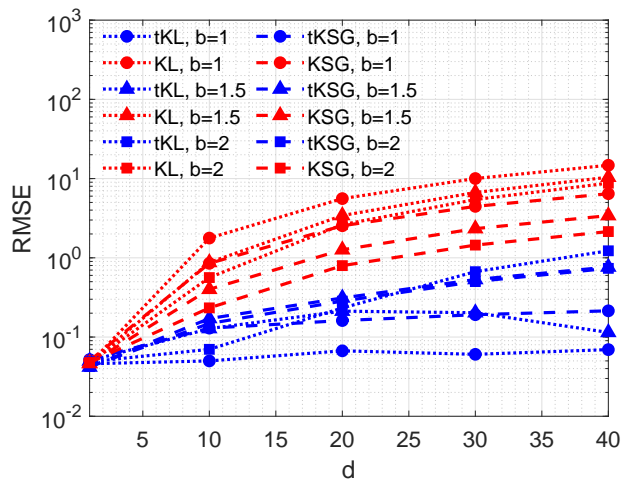}
		\caption{truncated estimators vs non-truncated estimators for multidimensional Beta distributions with various shape parameters $b$.}
		\label{fig:ksg_vs_tksg}
		\vspace{-0.1in}
	\end{figure}
	
\subsection{Multivariate Normal Distribution}
\begin{figure}
	\vspace{-0.0in}
	\begin{minipage}{0.49\linewidth}
		\centering
		\includegraphics[scale=0.30]{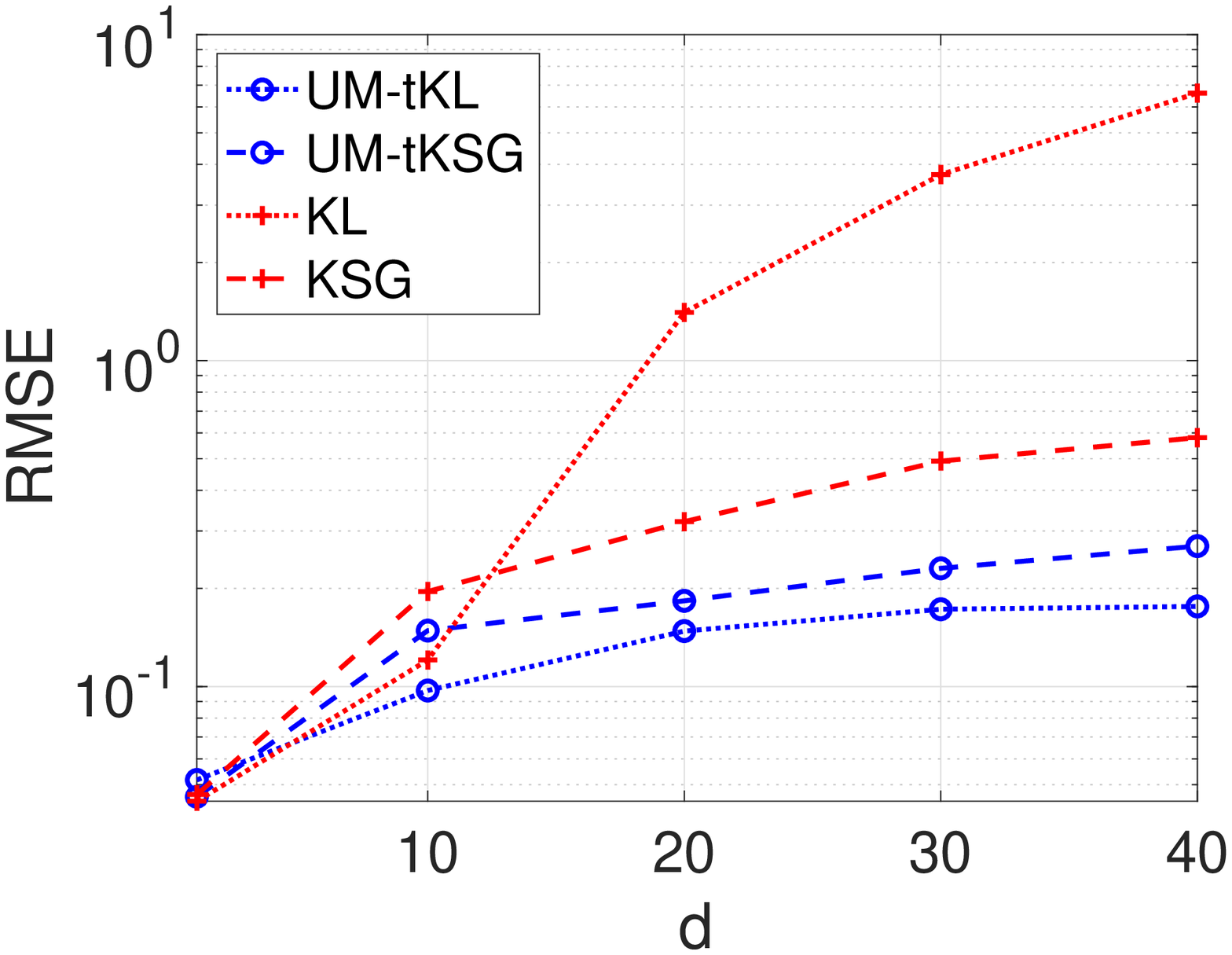}
	\end{minipage}
	\begin{minipage}{0.49\linewidth}
		\centering
		\includegraphics[scale=0.30]{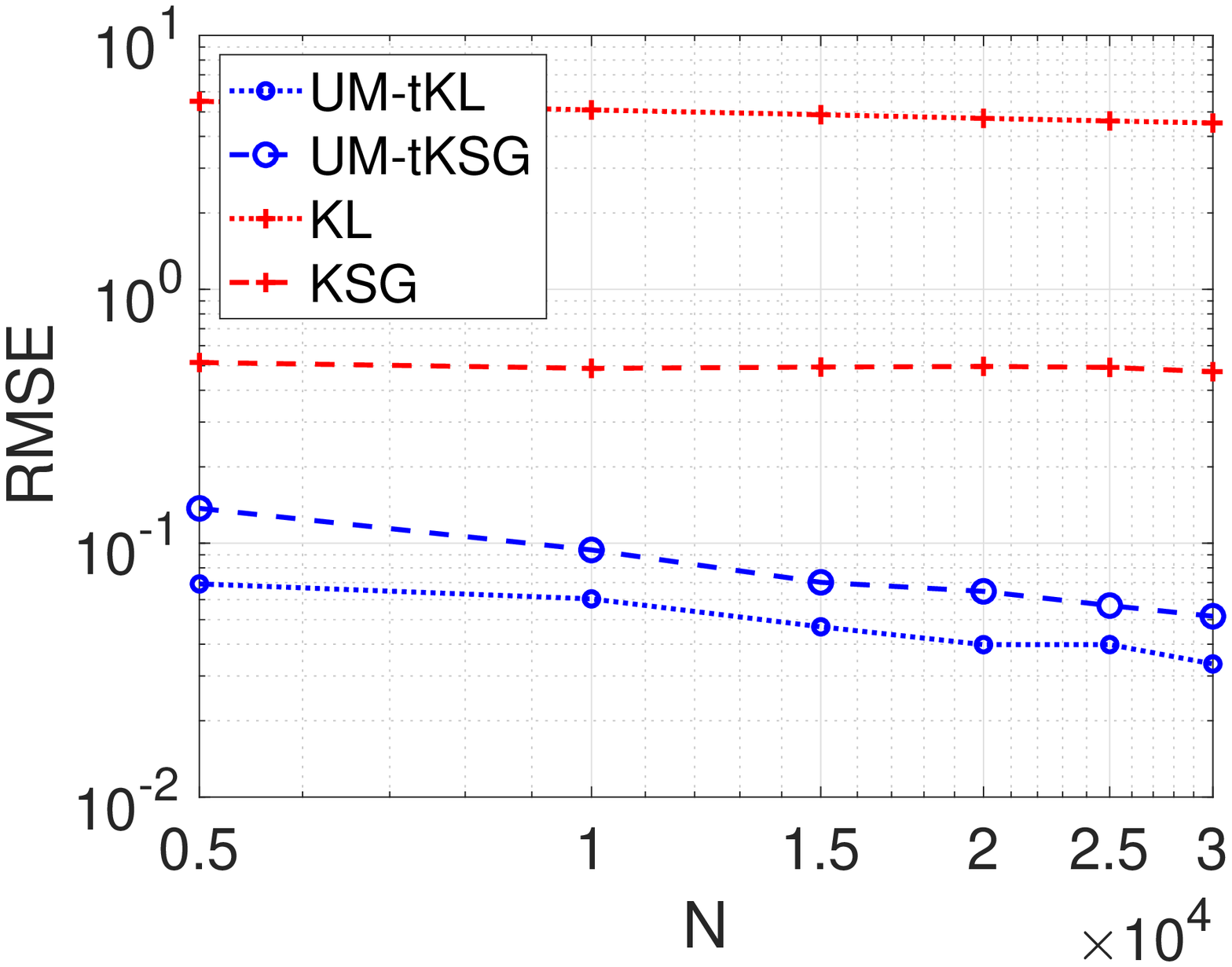}
	\end{minipage}
		\vspace{-0.0in}
	\caption{Left: RMSE plotted against the dimensionality $d$. Right: RMSE (on a logarithmic scale) plotted against the sample size $N$. }
	\label{fig:mvn_d}
	\vspace{-0.0in}
\end{figure}

To validate the idea of UM based entropy estimator, a natural question to ask is that how it works with a perfect NF transformation,
that yields exactly normally distributed samples.
To answer this question, we first conduct the numerical tests with the standard multivariate normal distribution, corresponding to the situation
that one has done a perfect NF {(in this case the function $g$ in Section~\ref{sec:um} is chosen to be identity map)}.  

Specifically we test the four methods: KL, KSG, UM-tKL and UM-tKSG,
and we conduct two sets of tests: in the first one we fix the sample size to be 1000 and vary the dimensionality, while in the second one 
we fix the dimensionality to be 40 and vary the sample size. 
All the tests are repeated 100 times and the RMSE of the estimates are calculated.
In Fig.~\ref{fig:mvn_d} (left),
we plot the RMSE  (on a logarithmic scale) as a function of the dimensionality. One can see from this figure that, as the dimensionality increases, the estimation error in KL and KSG grows significantly faster than that in the two UM based ones,
with the error in KL being particularly large. 
Next in Fig.~\ref{fig:mvn_d} (right) we plot the RMSE against the sample size $N$ (note that the plot is on a log-log scale) for $d=40$, which shows 
that for this high-dimensional case, the two UM based estimators yield much lower and faster-decaying RMSE than those two estimators on the 
original samples.  
Overall these results support the theoretical findings in Section~\ref{sec:truncation} that the estimation error can be significantly reduced by mapping the target samples toward a uniform distribution. 

\subsection{Multivariate Rosenbrock Distribution}\label{sec:mrd}
In  this example we shall see how the proposed method performs when  NF is included.
Specifically our example is the Rosenbrock type of distributions -- the standard  Rosenbrock distribution is 2-D and  widely used as a testing example for various of statistical methods. Here we consider two high-dimensional  extensions 
of the 2-D Rosenbrock ~\cite{pagani2019n}: the hybrid Rosenbrock (HR) and the even Rosenbrock (ER) distributions.  
The details of the two distributions including their density functions are provided in \ref{sec:hr}. 
The Rosenbrock distribution is strongly non-Gaussian, and that can be demonstrated by Fig.~\ref{f:rosen} (left) which shows the samples drawn from 2-D Rosenbrock.  As a comparison, Fig.~\ref{f:rosen} (right) shows the samples that have been transformed toward a uniform distribution and  used in entropy estimation. 

\begin{figure}[h]
	\vspace{-0.0in}
	\centerline{\includegraphics[width=1.0\linewidth]{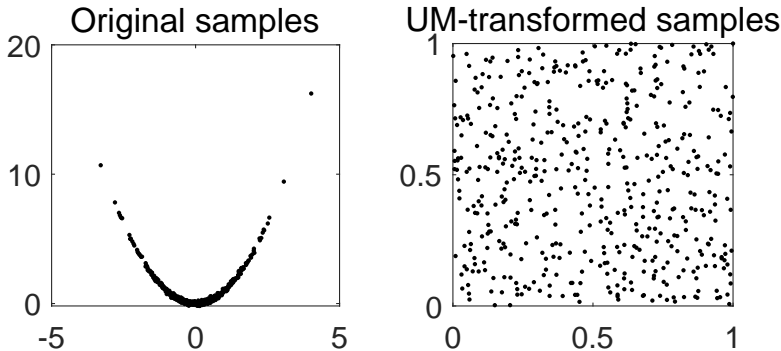}}
	\vspace{-0.0in}
	\caption{Left: the original samples drawn from a 2-D Rosenbrock distribution; Right: the UM-transformed samples 
		used in the entropy estimation.}\label{f:rosen}
	\vspace{-0.0in}
\end{figure}

\begin{figure*}[]
	\vspace{-0.0in}
	\begin{minipage}{0.49\textwidth}
		\centering
		\centerline{(a)}
		\includegraphics[scale=0.35]{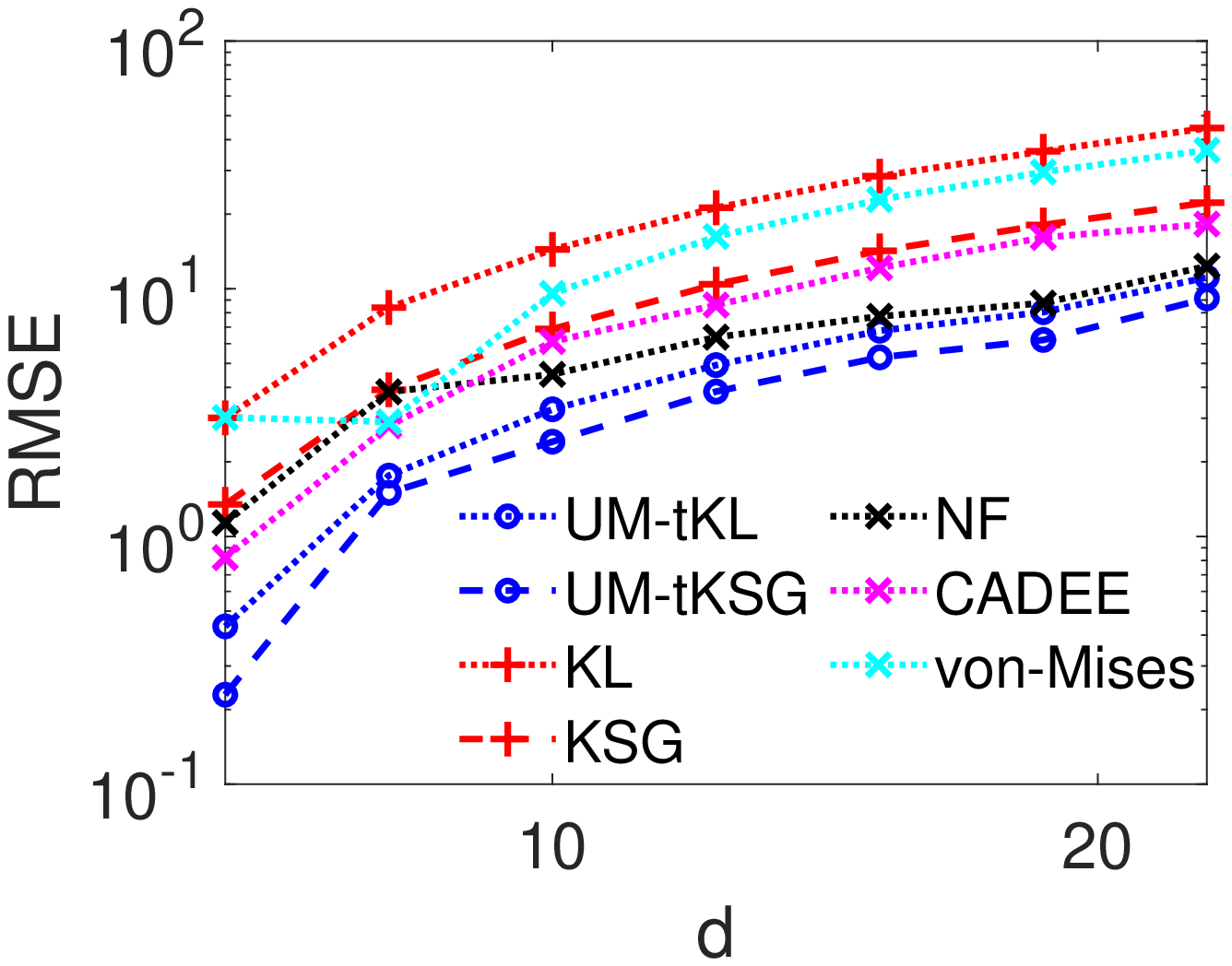}
	\end{minipage}
	\begin{minipage}{0.49\textwidth}
		\centering
		\centerline{(b)}
		\includegraphics[scale=0.35]{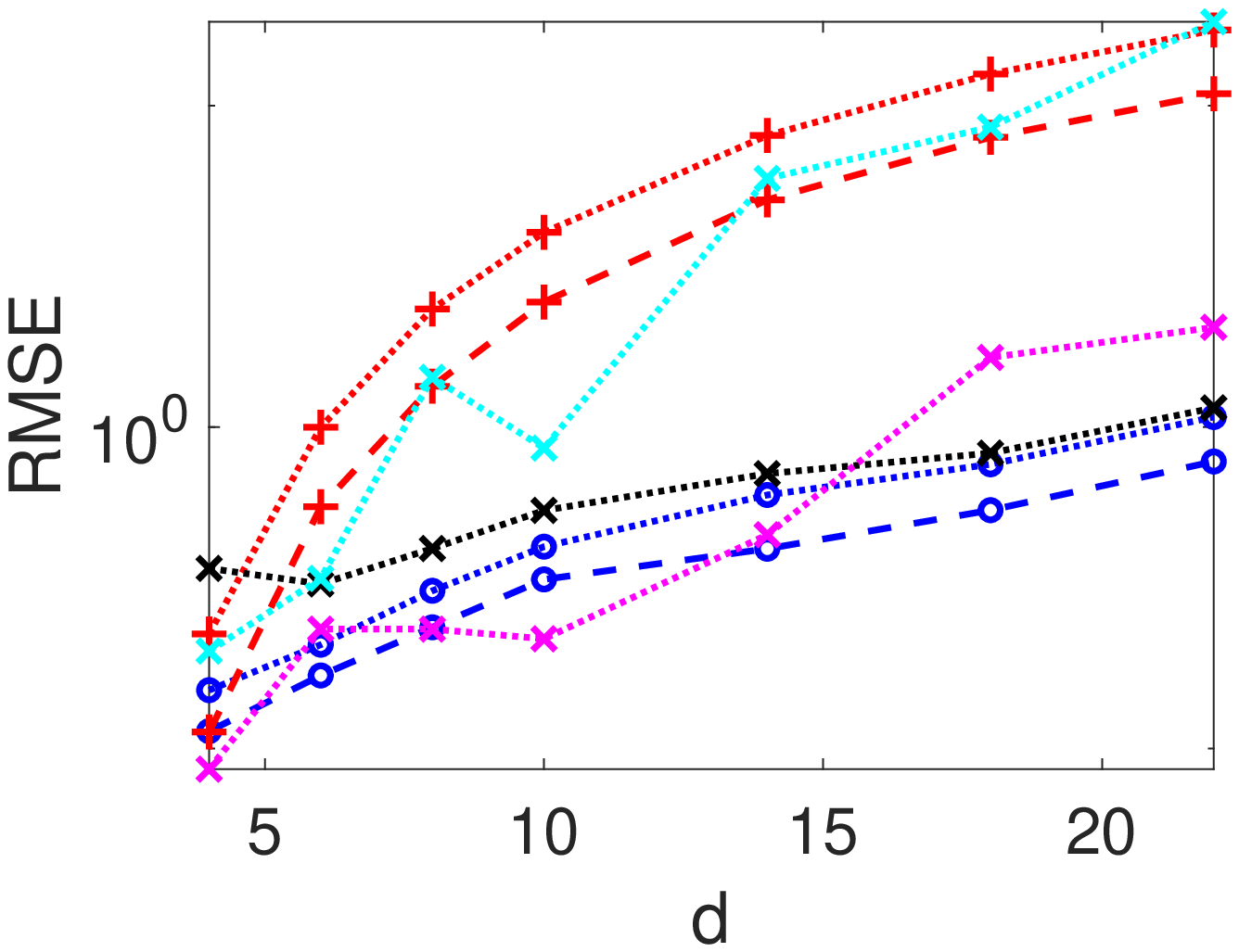}
	\end{minipage}
    \\
	\begin{minipage}{0.49\textwidth}
		\centering
		\centerline{(c)}
		\includegraphics[scale=0.35]{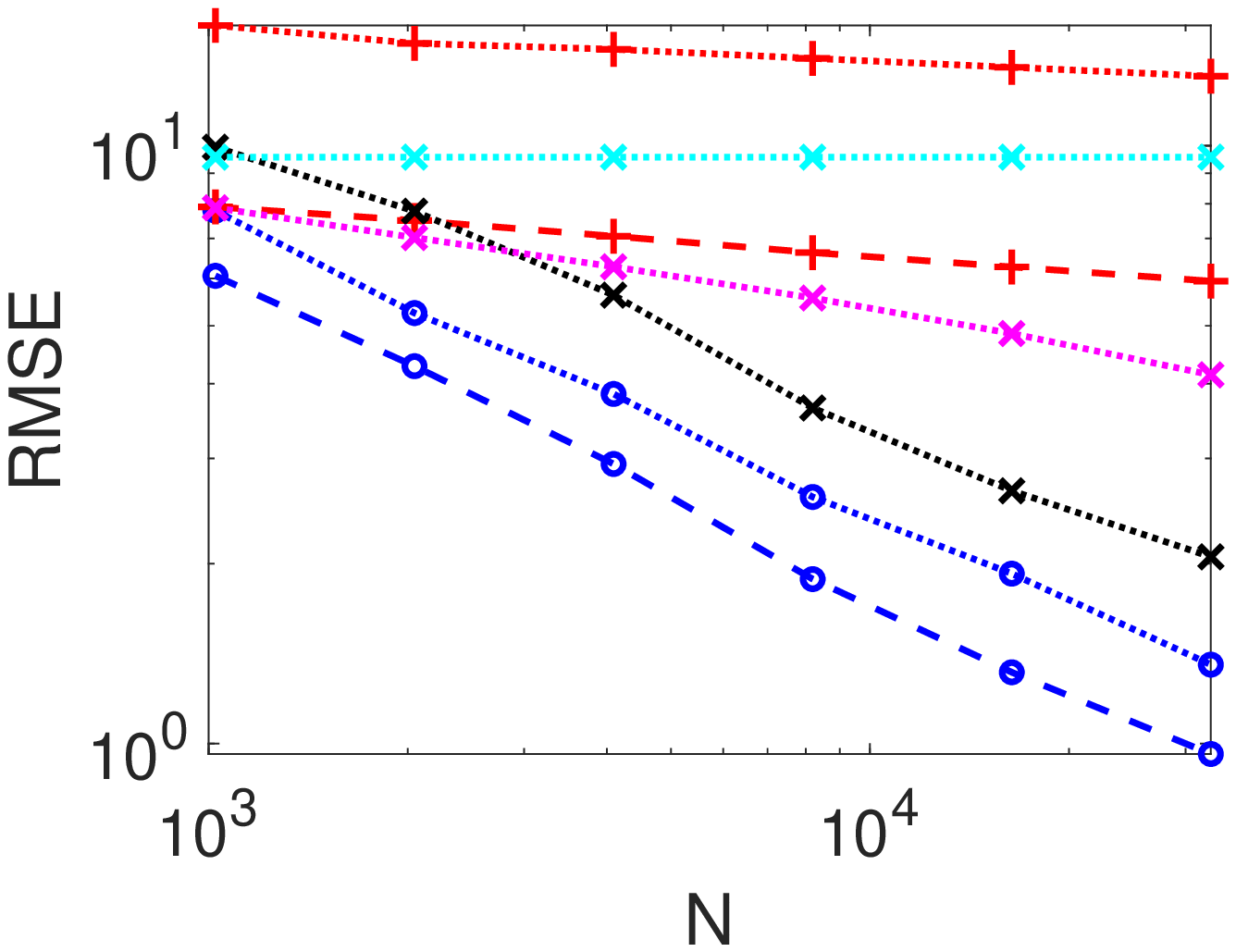}
	\end{minipage}
	\begin{minipage}{0.49\textwidth}
		\centering
		\centerline{(d)}
		\includegraphics[scale=0.35]{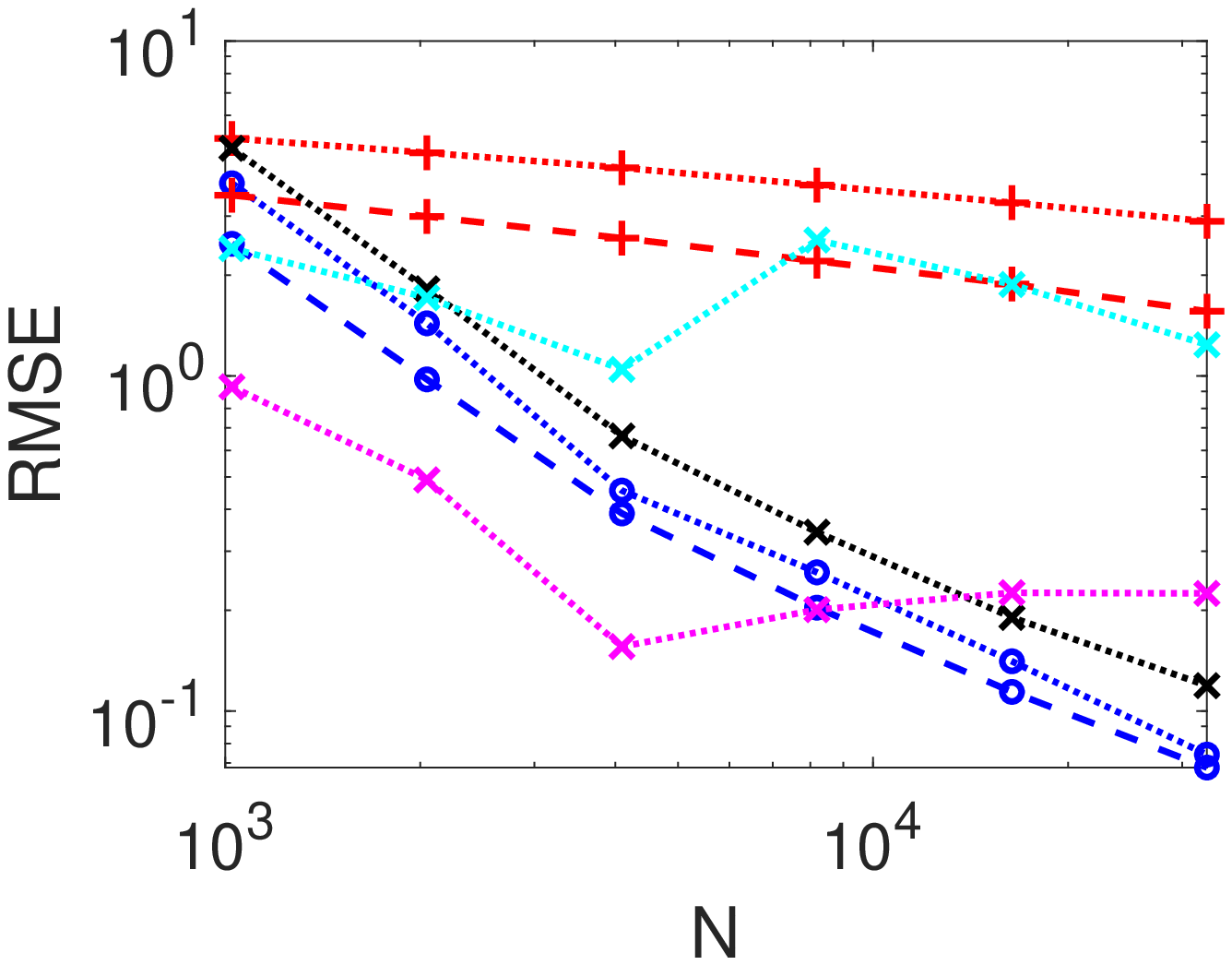}
	\end{minipage}
	\vspace{-0.0in}
	\caption{ Top: RMSE vs. dimensionality for HR (a) and  ER (b); Bottom:
		RMSE vs. sample size for HR (c) and ER (d).}
	\label{fig:mr_d}
	\vspace{-0.00in}
\end{figure*}

In this example we compare the performance of seven estimators: in addition to the four used in the previous example, 
we include an estimator only using NF (details in SI) 
as well as two state-of-the-art entropy estimators: CADEE \cite{ariel2020estimating} and the von-Mises based estimator \cite{kandasamy2015nonparametric}. 
First we test how the estimators scale with respect to dimensionality, where the sample size is taken to be $N=500d$. 
With each method, the experiment is repeated 20 times  and the RMSE is calculated.  
The RMSE against the dimensionality $d$ for both test distributions is plotted in Figs.~\ref{fig:mr_d} (a) and (b). 
One can observe here that in most cases, the UM based methods (especially UM-tKSG) offer the best performance. 
An exception is that  CADEE performs better in low dimensional cases for ER, but its RMSE grows much higher than that of the UM methods 
in the high-dimensional regime ($d>15$). 
Our second experiment is to fix the dimensionality at $d=10$ and vary the sample size, where the RMSE is plotted against the sample size 
for both HR and ER in Figs.~\ref{fig:mr_d} (c) and (d). The figures show clearly that the RMSE of the UM based estimators decays faster than other methods in both examples,  with the only exception being CADEE in the small sample ($\leq 10^{4}$) regime of ER. 
It is also worth noting that, though it is not justified theoretically, 
UM-tKSG seems to perform slightly better than UM-tKL in all the cases.

\subsection{Multivariate Rosenbrock Distribution with Discontinuous Density}

Recall that Corollaries~\ref{cor2} and \ref{cor3} assume the differentiability of the original density functions, which is often not satisfied by practice. Thus, it is also of interest to examine the performance of the proposed methods for distributions with discontinuous densities. To this end, we modify the multivariate Rosenbrock distributions studied in Section~\ref{sec:mrd}, so that their densities are discontinuous on the boundaries of their supports (see \ref{sec:hr} for the details), and repeat the comparisons conducted in Section~\ref{sec:mrd}.  The results are shown in Figs.~\ref{fig:mr_d2}. For the modified HR (in Fig.~\ref{fig:mr_d2} (a) and (c)), only the von-Mises estimator achieves a smaller RMSE than the UM based ones in the low-dimensional regime (d$\leq$10), while the UM based estimators  perform the best in the high-dimensional regime. For modified ER (in Fig.~\ref{fig:mr_d2} (b) and (d)), the UM based estimators are inferior to CADEE but outperform any other methods in most cases.

\begin{figure*}[]
	\vspace{-0.0in}
	\begin{minipage}{0.49\textwidth}
		\centering
		\centerline{(a)}
		\includegraphics[scale=0.35]{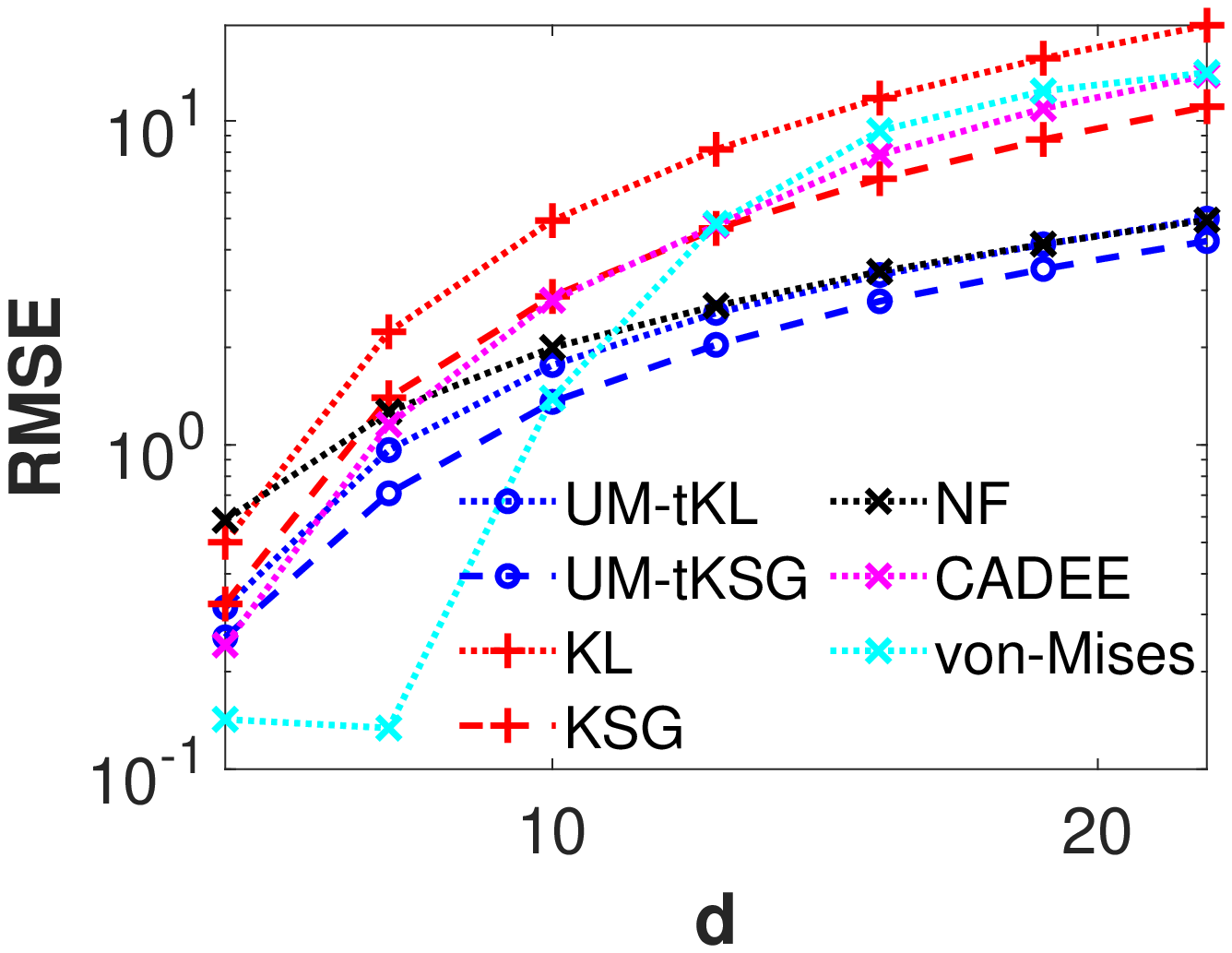}
	\end{minipage}
	\begin{minipage}{0.49\textwidth}
		\centering
		\centerline{(b)}
		\includegraphics[scale=0.35]{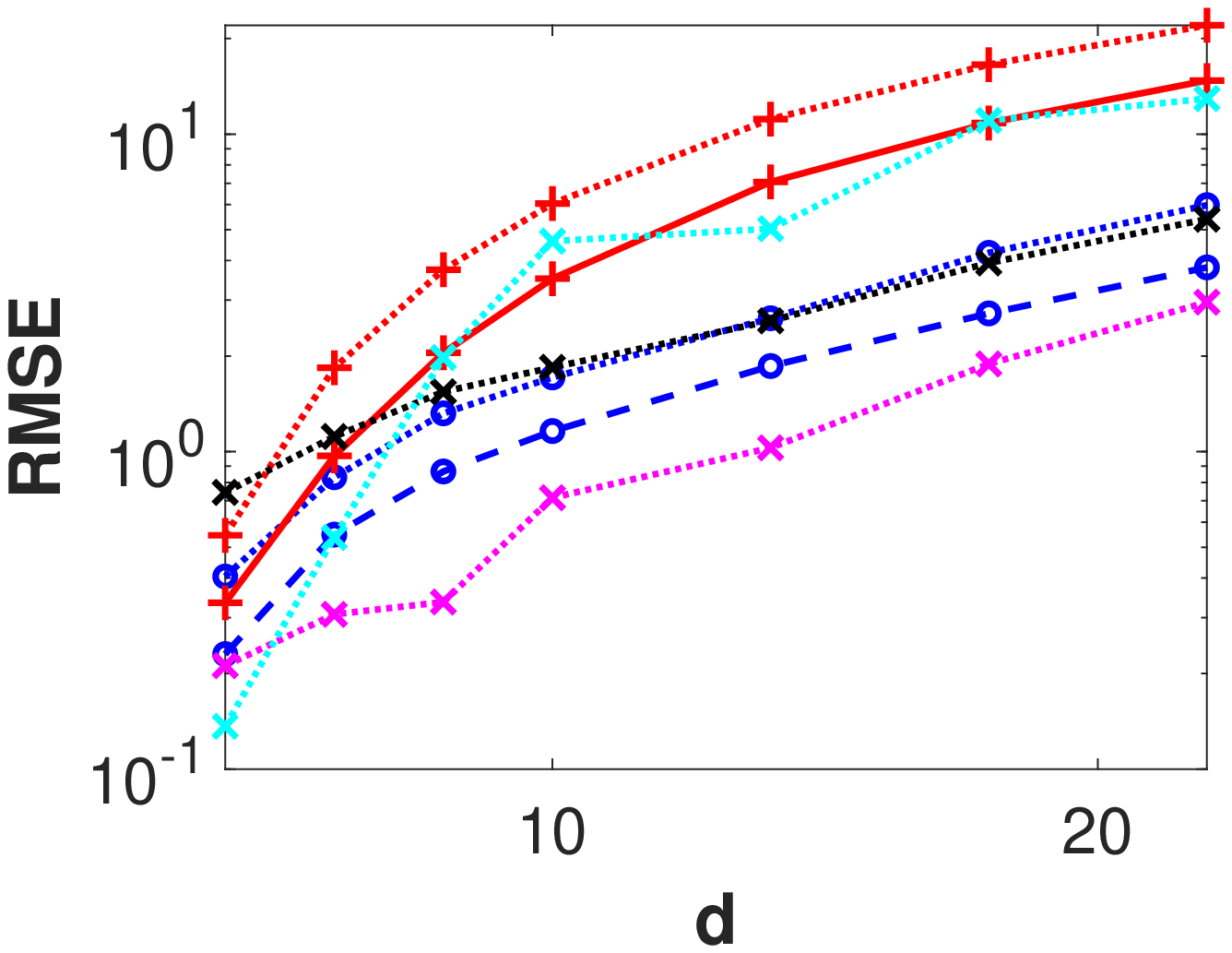}
	\end{minipage}
    \\
	\begin{minipage}{0.49\textwidth}
		\centering
		\centerline{(c)}
		\includegraphics[scale=0.35]{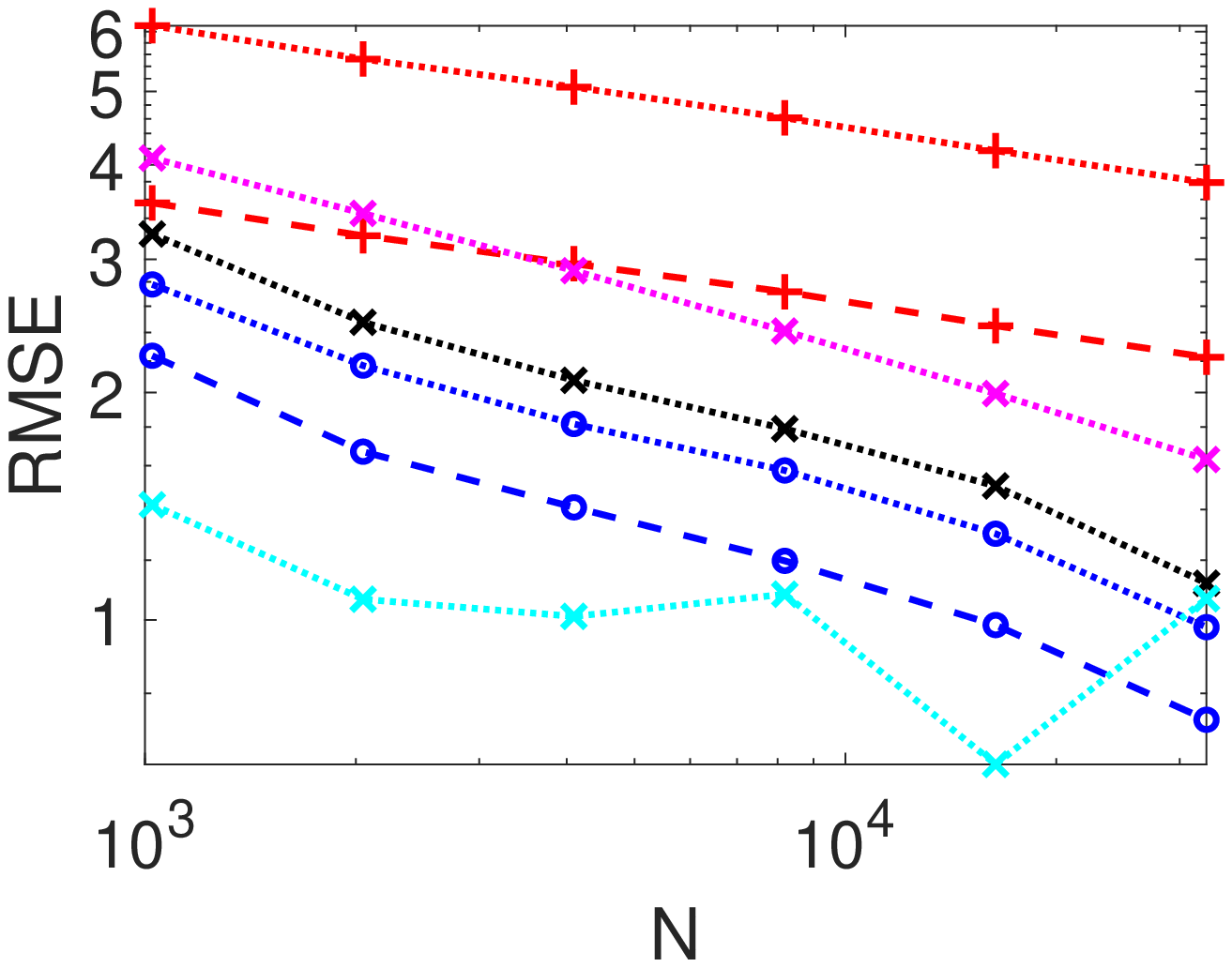}
	\end{minipage}
	\begin{minipage}{0.49\textwidth}
		\centering
		\centerline{(d)}
		\includegraphics[scale=0.35]{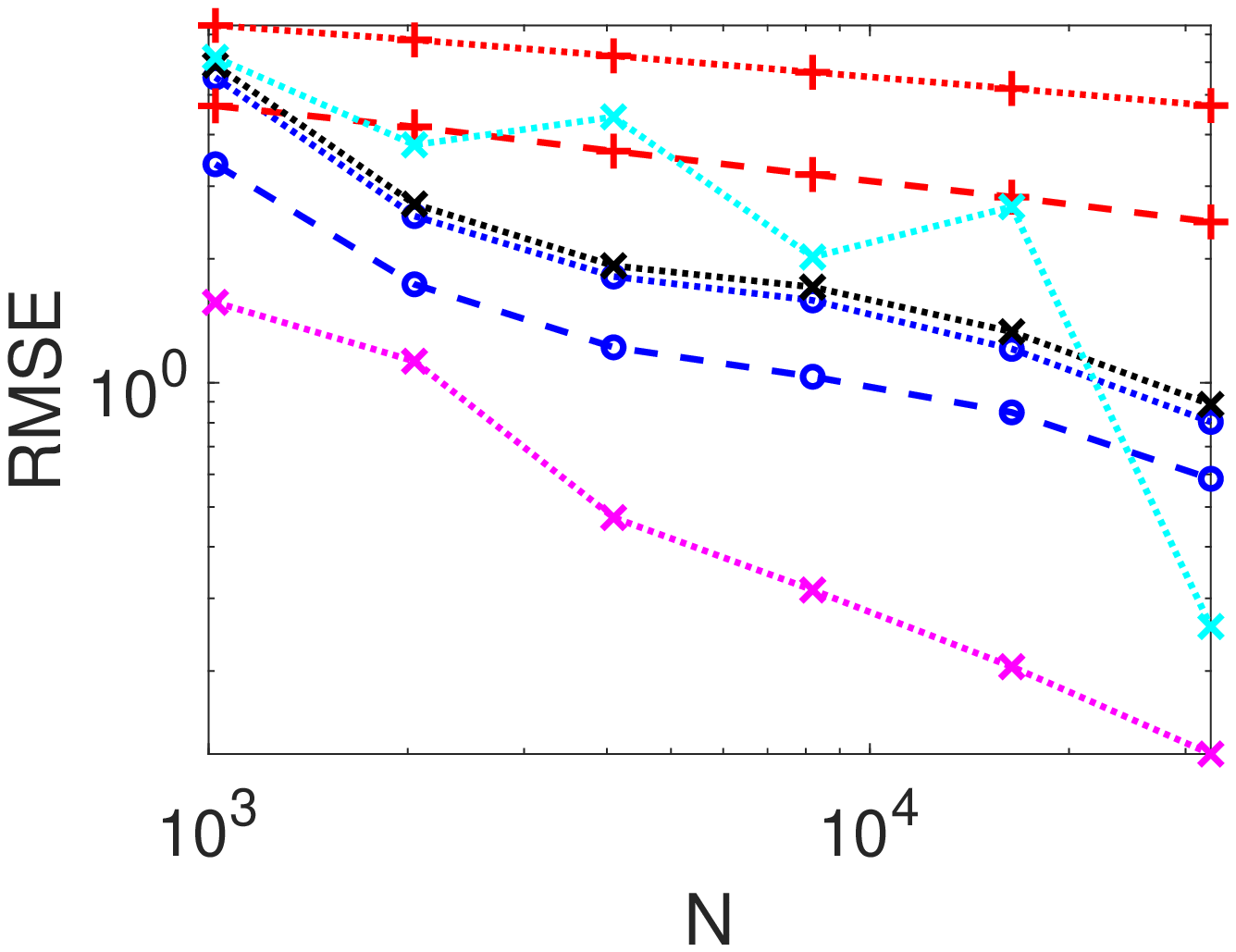}
	\end{minipage}
	\vspace{-0.0in}
	\caption{ Top: RMSE vs. dimensionality for modified HR (a) and  ER (b); Bottom:
		RMSE vs. sample size for modified HR (c) and ER (d).}
	\label{fig:mr_d2}
	\vspace{-0.00in}
\end{figure*}

\section{Application Examples}\label{sec:application}
{ In this section, we consider two applications involving entropy estimation,
in which our methods are compared with the existing ones.}
\subsection{Application to Entropy Rate Estimation}
\label{sec:ere}
    Our first application 
    example is to estimate the differential entropy rate of a continuous-valued time series. Shannon entropy rate \cite{shannon1948mathematical} measures the uncertainty of a stochastic process $\mathcal{X}=\{X_i\}_{i \in \mathbb{N}}$. For a stationary process, it is defined as, 
    \begin{equation}
        \bar{H}(\mathcal{X})=\lim\limits_{t\rightarrow\infty}H(X_t\mid X_{t-1},...,X_{1}),
    \end{equation}
    where $H(\cdot\mid\cdot)$ is the conditional entropy of two random variables. In this example, we consider the stochastic processes that satisfy the following two assumptions:
    \begin{itemize}
        \item  First $\mathcal{X}$ is a conditionally stationary process of order $p$: there exists a fixed positive integer $p$ such that, for
       any integer $t>p$,  the conditional density function of $X_t$ given $X_{t-1}=x_{t-1},..., X_{t-p}=x_{t-p}$  satisfies
    \begin{equation}
        p(X_t=x_t\mid X_{t-1}=x_{t-1},..., X_{t-p}=x_{t-p})=f(x_{t}\mid x_{t-1},..., x_{t-p}),
    \end{equation}
    where $f$ is a fixed conditional density function independent from $t$.
    \item Second $\mathcal{X}$ is  a Markov process of order $p$: there exists a positive integer $p$ such that, for any integer $t>p$, 
    \begin{multline}p(X_{t}=x_{t}\mid X_{t-1}=x_{t-1},..., X_{1}=x_{1})\\=p(X_{t}=x_{t}\mid X_{t-1}=x_{t-1},..., X_{t-p}=x_{t-p}.)\end{multline}
    \end{itemize}
    Under these assumptions, the entropy rate of $\mathcal{X}$ can be calculated as, 
    \begin{equation}
        \bar{H}=H(X_t\mid X_{(t-1):(t-p)})
        =H(X_{t:(t-p)})-H(X_{(t-1):(t-p)}), \label{eg:hbar1}
         \end{equation}
       where $X_{t:(t-p)}=(X_{t},X_{t-1},...,X_{t-p})$ and so on. 
       Note here that $t$ can be taken to be any integer $>p$, and 
       for simplicity we can take it to be $t=p+1$, and as a result Eq.~\eqref{eg:hbar1} is simplified to, 
        \[\bar{H}=H(X_t\mid X_{(t-1):(t-p)})\\
        =H(X_{(p+1):1})-H(X_{p:1}).\]
    Suppose that we have a $T$-step (with $T>p$) observation of $\mathcal{X}$: $\{x_{t}\}_{t=1}^T$, 
    and we can compute its entropy rate as follows~\cite{darmon2016specific}:
    \[\hat{H} = \widehat{H}(X_{(p+1):1})-\widehat{H}(X_{p:1}),\]
    where $\widehat{H}(X_{(p+1):1})$
    and $\widehat{H}(X_{p:1})$
    are estimated with a desired estimator from the observation $\{x_{t}\}_{t=1}^T$.

    
   
    
    In this example, we consider three autoregressive models of orders $3$,  $7$ and  $15$ respectively, which are given by
    \begin{subequations}
    \begin{align}
        AR(3): X_t = -1.35+0.5X_{t-1}+0.4X_{t-2}^2-0.3X_{t-3}+\epsilon_t,\\
        AR(7): X_t = -1.35+0.5X_{t-1}+0.3X_{t-5}^2-0.3X_{t-7}+\epsilon_t,\\
      AR(15): X_t  = -1.35+0.5X_{t-1}+0.05(X_{t-5}+X_{t-6}+X_{t-7})^2\notag\\
        -0.005(X_{t-11}+X_{t-12}+X_{t-13})^2-0.1X_{t-15}+\epsilon_t,
    \end{align}
    \end{subequations}
    where $\epsilon_t\sim\mathcal{N}(0, (0.03)^2)$ is white noise. Fig.~\ref{f:ere} shows the simulated snapshots of the three models. 
    We implemented the procedure described above to estimate the entropy rate of these 
    three models where the entropy is estimated with the seven estimators used in Section~\ref{sec:examples}. 
      On the other hand,  since the conditional density functions are analytically available in this example,  
      the entropy rate can also be directly estimated via the standard Monte Carlo integration, which will be used as the {\it ground truth}. We apply the aforementioned entropy estimators to compute the entropy rate with a simulated sequence of  $10,000$ steps. With each method, 20 repeated trials are conducted and the RMSE is calculated. The results are reported in Table~\ref{tb:ere}, from which we make the following observations. 
      The performance of the von-Mises estimator appears to be the best for the $AR(3)$ model, however, all estimators yield very small Root Mean Squared Error (RMSE) suggesting that this problem is not particularly challenging. For the $AR(7)$ model, the UM-based methods have smaller RMSE than the others, and for the $AR(15)$ model, the two UM-based methods and KSG perform better than the other three. Overall,  UM-KSG  results in the smallest RMSE for both $AR(7)$
      and $AR(15)$.
      

\begin{figure}
	\vspace{-0.0in}
	\centerline{\includegraphics[width=0.65\linewidth]{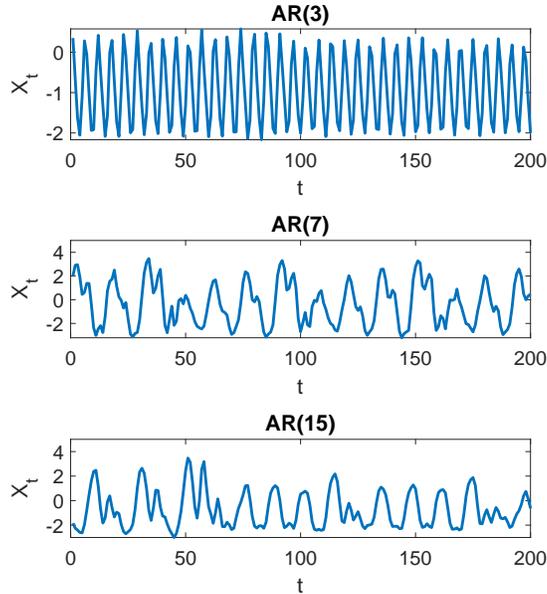}}
	\vskip -0.1in
	\caption{Snapshots of the simulated time series.}\label{f:ere}
	\vspace{-0.0in}
\end{figure}

\begin{table*}[htp]
	\footnotesize
	\resizebox{\textwidth}{8mm}{
	\centering
	\begin{tabular}{|c|c|c|c|c|c|c|c|}
		\hline
		{\bf Method}& {UM-tKL} & {UM-tKSG} & KL &KSG  & NF & CADEE & von-Mises \\ \hline
		{\bf AR(3)}& 0.029 & 0.051 & 0.027  & 0.032 & 0.12 & 0.31 & {\bf 0.016} \\ \hline
		{\bf AR(7)}& { 0.67} & {\bf 0.43} & 1.23  & 0.90 & 0.95 & 2.40 & 0.70 \\ \hline
		{\bf AR(15)}& {1.15} & {\bf 0.68} & 1.51  & 0.98 & 1.61 & 4.14 & 1.42 \\ \hline
	\end{tabular}
	}
	\vskip -0.00in
	\caption{RMSE of entropy rate estimations based on entropy estimators for the autoregressive model.
		The smallest (best) RMSE value is shown in bold. }  \label{tb:ere}
	\vskip -0.0in
\end{table*}

\subsection{Application to Optimal Experimental Design} \label{sec:lv}
In this section, we apply entropy estimation to an optimal experimental design (OED) problem. Simply put,
the goal of OED is to determine the optimal experimental conditions (e.g., locations of sensors) that maximize certain utility function 
associated with the  experiments. 
Mathematically let $\lambda\in \mathcal{D}$ be design parameters representing experimental conditions, 
$\theta$ be the parameter of interest, and $Y$ be the observed data.
An often used utility function is the entropy of the data $Y$, resulting in the so-called maximum entropy sampling method~(MES)~\cite{sebastiani2000maximum}:
\begin{equation}
	\max_{\lambda\in \mathcal{D}}U(\lambda): = H(Y|\lambda),
\end{equation}
and therefore evaluating $U(\lambda)$ becomes an entropy estimation problem.
This utility function is equivalent to the mutual entropy criterion under certain conditions \cite{shewry1987maximum}. 
This formulation is particularly useful for problems with expensive or intractable likelihoods, as the likelihoods are not needed if the utility function is computed via entropy estimation. A common application of OED  is to determine the observation times for stochastic processes so that one can accurately estimate the model parameters and here we provide such an example, arising from the field of population dynamics.

Specifically we consider the Lotka-Volterra (LV) predator-prey model \cite{lotka1925elements,volterra1927variazioni}. 
Let $x$  and $y$ be the populations of prey and  predator respectively, and the LV model is given by \\
\centerline{$\dot{x}= a x-x y,\quad \dot{y}=b x y- y$,}
where $a$ and $b$ are respectively the growth rates of the prey and the predator. 
In practice, often the parameters $a$ and $b$ are not known and need to be estimated from the population data. 
In a Bayesian framework, one can assign a prior distribution on $a$ and $b$, and infer them from measurements made on the population $(x,y)$. 
Here we assume that the prior for both $a$ and $b$ is a uniform distribution $U[0.5,4]$. 
In particular we assume that the pair $(x+\epsilon_x,y+\epsilon_y)$, where $\epsilon_x, \epsilon_y\sim\mathcal{N}(0,0.01)$ are independent observation noises, is measured at $d=5$ time points located within the interval $[0,10]$,  
and the goal is to determine the observation times for the  experiments. 
As is mentioned earlier, we shall determine the observation times using the MES method. 
Namely, the design parameter in this example is $\lambda=(t_1,...,t_d)$, the data $Y$ is the pair $(x+\epsilon_x,y+\epsilon_y)$ measured at $t_1,...,t_d$, and we want to
find $\lambda$ that maximizes the entropy $H(Y|\lambda)$. 

\begin{figure}
	\vspace{-0.0in}
	\centerline{\includegraphics[width=0.65\linewidth]{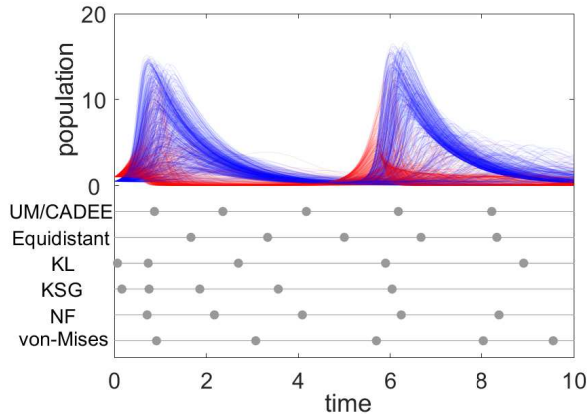}}
	\vskip -0.1in
	\caption{Top: some sample data paths of $(x,y)$; Bottom: the optimal observation times 
		obtained by the eight methods.}\label{f:oed}
	\vspace{-0.0in}
\end{figure}

\begin{table*}[htp]
	\footnotesize
	\resizebox{\textwidth}{8mm}{
	\centering
	\begin{tabular}{|c|c|c|c|c|c|c|c|c|}
		\hline
		{\bf Method}& {UM-tKL} & {UM-tKSG} & CADEE & Equidistant & KL &KSG  & NF & von-Mises \\ \hline
		{\bf NMC}& \multicolumn{3}{c|}{\bf -1.45} & -2.73 & -1.65 & -1.56 & -1.48 & -1.81    \\ 
		{\bf (SE)}&\multicolumn{3}{c|}{\bf (0.0073)} &(0.0074)&(0.0072)&(0.0076)&(0.0072)&(0.0049) \\ \hline
		{\bf RMSE}& {\bf 0.73} & {\bf 0.48} & 0.86 & --- & 3.60 & 1.05 & 0.88 & 1.31 \\ \hline
	\end{tabular}
	}
	\vskip -0.00in
	\caption{The reference entropy values of the  observation time placements obtained by using all the methods.
		The smallest (best) entropy value is shown in bold. }  \label{tb:oed}
	\vskip -0.0in
\end{table*}

A common practice in such problems is not to optimize the observation times directly and instead parametrize them using the percentiles 
of a prescribed distribution to reduce the optimization dimensionality~\cite{Ryan2014towards}. Here we use a Beta distribution, resulting in
two distribution parameters to be optimized (see \cite{Ryan2014towards} and \ref{sec:betascheme} for further details). 
We solve the resulting optimization problem with a grid search where the entropy is evaluated by the seven aforementioned estimators
each with 10,000 samples. We plot in Fig.~\ref{f:oed} the optimal observation time placements computed with the seven aforementioned estimators, as well as the equidistant placement for a comparison purpose.
Also shown in the figure are some sample paths of the population $(x,y)$ where we can see that the population samples are generally
subject to larger variations near the two ends and relative smaller ones in the middle. 
Regarding the optimization results, we see that the optimal time placements obtained by the two UM based estimators and CADEE are the same, while they are different 
from the results of other methods. 
To validate the optimization results, we compute a reference entropy value for the optimal placement obtained by each method,
using Nested Monte Carlo (NMC) (see \cite{ryan2003estimating} and \ref{sec:nestedMC} for details) with a large sample size ($10^5\times10^5$), and show the results in Table~\ref{tb:oed}. 
Note that though the NMC can produce a rather accurate entropy estimate, it is too expensive to use directly in this OED problem. 
Using the reference values as the ground truth, we can further compute the RMSE of these estimates (over 20 repetitions), which are also reported in Table~\ref{tb:oed}. 
From the table one observes that the  placement of observation times computed by the two UM methods and CADEE yields the largest entropy values,
which indicates  that  these three methods clearly outperform all the other estimators in this OED problem. 
Moreover, from the RMSE results we can see that the UM based methods (especially UM-tKSG) yield smaller RMSE than CADEE, suggesting that they
are more statistically reliable than CADEE. 

\section{Conclusion}\label{sec:conclusion}
In summary, we have presented a uniformization based entropy estimator, which we believe can be useful for a wide range of real-world applications.
While our theoretical results provide some justification for the method, further analysis is needed to establish the convergence rate and understand the estimation bias. Additionally, the method can be extended to estimate other density functionals, such as the Renyi entropy and the Kullback-Leibler divergence.
Finally  in this work  the proposed method is demonstrated only with synthetic data, and 
it is therefore sensible to further examine the method with real-world data sets. 
We will explore these research problems in future studies. 

\section{Acknowledgments}
This work was partially supported by the China Scholarship Council (CSC). The authors would also like to thank Dr. Alexander Kraskov for discussion about the KSG estimator.

\bibliography{mybibfile}

\begin{thebibliography}{10}
\expandafter\ifx\csname url\endcsname\relax
  \def\url#1{\texttt{#1}}\fi
\expandafter\ifx\csname urlprefix\endcsname\relax\def\urlprefix{URL }\fi
\expandafter\ifx\csname href\endcsname\relax
  \def\href#1#2{#2} \def\path#1{#1}\fi

\bibitem{vasicek1976test}
O.~Vasicek, A test for normality based on sample entropy, Journal of the Royal
  Statistical Society: Series B (Methodological) 38~(1) (1976) 54--59.

\bibitem{goria2005new}
M.~N. Goria, N.~N. Leonenko, V.~V. Mergel, P.~L. Novi~Inverardi, A new class of
  random vector entropy estimators and its applications in testing statistical
  hypotheses, Journal of Nonparametric Statistics 17~(3) (2005) 277--297.

\bibitem{azzi2020sensitivity}
S.~Azzi, B.~Sudret, J.~Wiart, Sensitivity analysis for stochastic simulators
  using differential entropy, International Journal for Uncertainty
  Quantification 10~(1).

\bibitem{ranneby1984maximum}
B.~Ranneby, The maximum spacing method. an estimation method related to the
  maximum likelihood method, Scandinavian Journal of Statistics (1984) 93--112.

\bibitem{wolsztynski2005minimum}
E.~Wolsztynski, E.~Thierry, L.~Pronzato, Minimum-entropy estimation in
  semi-parametric models, Signal Processing 85~(5) (2005) 937--949.

\bibitem{sebastiani2000maximum}
P.~Sebastiani, H.~P. Wynn, Maximum entropy sampling and optimal bayesian
  experimental design, Journal of the Royal Statistical Society: Series B
  (Statistical Methodology) 62~(1) (2000) 145--157.

\bibitem{ao2020approximate}
Z.~Ao, J.~Li, An approximate {KLD} based experimental design for models with
  intractable likelihoods, in: International Conference on Artificial
  Intelligence and Statistics, PMLR, 2020, pp. 3241--3251.

\bibitem{beirlant1997nonparametric}
J.~Beirlant, E.~J. Dudewicz, L.~Gy{\"o}rfi, E.~C. Van~der Meulen, Nonparametric
  entropy estimation: An overview, International Journal of Mathematical and
  Statistical Sciences 6~(1) (1997) 17--39.

\bibitem{joe1989estimation}
H.~Joe, Estimation of entropy and other functionals of a multivariate density,
  Annals of the Institute of Statistical Mathematics 41~(4) (1989) 683--697.

\bibitem{hall1993estimation}
P.~Hall, S.~C. Morton, On the estimation of entropy, Annals of the Institute of
  Statistical Mathematics 45~(1) (1993) 69--88.

\bibitem{moon2018ensemble}
K.~R. Moon, K.~Sricharan, K.~Greenewald, A.~O. Hero~III, Ensemble estimation of
  information divergence, Entropy 20~(8) (2018) 560.

\bibitem{pichler2022differential}
G.~Pichler, P.~J.~A. Colombo, M.~Boudiaf, G.~Koliander, P.~Piantanida, A
  differential entropy estimator for training neural networks, in:
  International Conference on Machine Learning, PMLR, 2022, pp. 17691--17715.

\bibitem{gyorfi1987density}
L.~Gy{\"o}rfi, E.~C. Van~der Meulen, Density-free convergence properties of
  various estimators of entropy, Computational Statistics \& Data Analysis
  5~(4) (1987) 425--436.

\bibitem{chen2018density}
W.-C. Chen, A.~Tareen, J.~B. Kinney, Density estimation on small data sets,
  Physical review letters 121~(16) (2018) 160605.

\bibitem{miller2003new}
E.~G. Miller, A new class of entropy estimators for multi-dimensional
  densities, in: 2003 IEEE International Conference on Acoustics, Speech, and
  Signal Processing, 2003. Proceedings.(ICASSP'03)., Vol.~3, IEEE, 2003, pp.
  III--297.

\bibitem{kozachenko1987sample}
L.~Kozachenko, N.~N. Leonenko, Sample estimate of the entropy of a random
  vector, Problemy Peredachi Informatsii 23~(2) (1987) 9--16.

\bibitem{kraskov2004estimating}
A.~Kraskov, H.~St{\"o}gbauer, P.~Grassberger, Estimating mutual information,
  Physical review E 69~(6) (2004) 066138.

\bibitem{gao2015efficient}
S.~Gao, G.~Ver~Steeg, A.~Galstyan, Efficient estimation of mutual information
  for strongly dependent variables, in: Artificial intelligence and statistics,
  2015, pp. 277--286.

\bibitem{lord2018geometric}
W.~M. Lord, J.~Sun, E.~M. Bollt, Geometric k-nearest neighbor estimation of
  entropy and mutual information, Chaos: An Interdisciplinary Journal of
  Nonlinear Science 28~(3) (2018) 033114.

\bibitem{berrett2019efficient}
T.~B. Berrett, R.~J. Samworth, M.~Yuan, et~al., Efficient multivariate entropy
  estimation via $ k $-nearest neighbour distances, Annals of Statistics 47~(1)
  (2019) 288--318.

\bibitem{ariel2020estimating}
G.~Ariel, Y.~Louzoun, Estimating differential entropy using recursive copula
  splitting, Entropy 22~(2) (2020) 236.

\bibitem{kandasamy2015nonparametric}
K.~Kandasamy, A.~Krishnamurthy, B.~Poczos, L.~A. Wasserman, J.~M. Robins,
  Nonparametric von mises estimators for entropies, divergences and mutual
  informations., in: NIPS, Vol.~15, 2015, pp. 397--405.

\bibitem{fernholz2012mises}
L.~T. Fernholz, Von Mises calculus for statistical functionals, Vol.~19,
  Springer Science \& Business Media, 2012.

\bibitem{krishnamurthy2014nonparametric}
A.~Krishnamurthy, K.~Kandasamy, B.~Poczos, L.~Wasserman, Nonparametric
  estimation of renyi divergence and friends, in: International Conference on
  Machine Learning, PMLR, 2014, pp. 919--927.

\bibitem{gao2018demystifying}
W.~Gao, S.~Oh, P.~Viswanath, Demystifying fixed $ k $-nearest neighbor
  information estimators, IEEE Transactions on Information Theory 64~(8) (2018)
  5629--5661.

\bibitem{sricharan2013ensemble}
K.~Sricharan, D.~Wei, A.~O. Hero, Ensemble estimators for multivariate entropy
  estimation, IEEE transactions on information theory 59~(7) (2013) 4374--4388.

\bibitem{han2020optimal}
Y.~Han, J.~Jiao, T.~Weissman, Y.~Wu, Optimal rates of entropy estimation over
  lipschitz balls, The Annals of Statistics 48~(6) (2020) 3228--3250.

\bibitem{birge1995estimation}
L.~Birg{\'e}, P.~Massart, Estimation of integral functionals of a density, The
  Annals of Statistics (1995) 11--29.

\bibitem{singh2016finite}
S.~Singh, B.~P{\'o}czos, Finite-sample analysis of fixed-k nearest neighbor
  density functional estimators, in: Advances in neural information processing
  systems, 2016, pp. 1217--1225.

\bibitem{biau2015lectures}
G.~Biau, L.~Devroye, Lectures on the nearest neighbor method, Vol. 246,
  Springer, 2015.

\bibitem{rezende2015variational}
D.~Rezende, S.~Mohamed, Variational inference with normalizing flows, in:
  International Conference on Machine Learning, PMLR, 2015, pp. 1530--1538.

\bibitem{papamakarios2021normalizing}
G.~Papamakarios, E.~Nalisnick, D.~J. Rezende, S.~Mohamed, B.~Lakshminarayanan,
  Normalizing flows for probabilistic modeling and inference, Journal of
  Machine Learning Research 22~(57) (2021) 1--64.

\bibitem{papamakarios2017masked}
G.~Papamakarios, T.~Pavlakou, I.~Murray, Masked autoregressive flow for density
  estimation, in: Advances in Neural Information Processing Systems, 2017, pp.
  2338--2347.

\bibitem{singh2003nearest}
H.~Singh, N.~Misra, V.~Hnizdo, A.~Fedorowicz, E.~Demchuk, Nearest neighbor
  estimates of entropy, American journal of mathematical and management
  sciences 23~(3-4) (2003) 301--321.

\bibitem{tsybakov1996root}
A.~B. Tsybakov, E.~Van~der Meulen, Root-n consistent estimators of entropy for
  densities with unbounded support, Scandinavian Journal of Statistics (1996)
  75--83.

\bibitem{efron1981jackknife}
B.~Efron, C.~Stein, The jackknife estimate of variance, The Annals of
  Statistics (1981) 586--596.

\bibitem{ihara1993information}
S.~Ihara, Information theory for continuous systems, Vol.~2, World Scientific,
  1993.

\bibitem{pagani2019n}
F.~Pagani, M.~Wiegand, S.~Nadarajah, An n-dimensional rosenbrock distribution
  for mcmc testing, arXiv preprint arXiv:1903.09556.

\bibitem{shannon1948mathematical}
C.~E. Shannon, A mathematical theory of communication, The Bell system
  technical journal 27~(3) (1948) 379--423.

\bibitem{darmon2016specific}
D.~Darmon, Specific differential entropy rate estimation for continuous-valued
  time series, Entropy 18~(5) (2016) 190.

\bibitem{shewry1987maximum}
M.~C. Shewry, H.~P. Wynn, Maximum entropy sampling, Journal of applied
  statistics 14~(2) (1987) 165--170.

\bibitem{lotka1925elements}
A.~J. Lotka, Elements of physical biology, Williams \& Wilkins, 1925.

\bibitem{volterra1927variazioni}
V.~Volterra, Variazioni e fluttuazioni del numero d'individui in specie animali
  conviventi, C. Ferrari, 1927.

\bibitem{Ryan2014towards}
E.~G. Ryan, C.~C. Drovandi, M.~H. Thompson, A.~N. Pettitt, Towards bayesian
  experimental design for nonlinear models that require a large number of
  sampling times, Computational Statistics \& Data Analysis 70 (2014) 45--60.

\bibitem{ryan2003estimating}
K.~J. Ryan, Estimating expected information gains for experimental designs with
  application to the random fatigue-limit model, Journal of Computational and
  Graphical Statistics 12~(3) (2003) 585--603.

\bibitem{hardy2006combinatorics}
M.~Hardy, Combinatorics of partial derivatives, arXiv preprint math/0601149.

\bibitem{Dinh2017density}
L.~Dinh, J.~Sohl{-}Dickstein, S.~Bengio, Density estimation using real {NVP},
  in: 5th International Conference on Learning Representations, {ICLR} 2017,
  Toulon, France, April 24-26, 2017, Conference Track Proceedings, 2017.

\bibitem{germain2015made}
M.~Germain, K.~Gregor, I.~Murray, H.~Larochelle, Made: Masked autoencoder for
  distribution estimation, in: International Conference on Machine Learning,
  PMLR, 2015, pp. 881--889.

\bibitem{DBLP:conf/iclr/Loaiza-GanemGC17}
G.~Loaiza{-}Ganem, Y.~Gao, J.~P. Cunningham, Maximum entropy flow networks, in:
  5th International Conference on Learning Representations, {ICLR} 2017,
  Toulon, France, April 24-26, 2017, Conference Track Proceedings,
  OpenReview.net, 2017.

\bibitem{rainforth2018nesting}
T.~Rainforth, R.~Cornish, H.~Yang, A.~Warrington, F.~Wood, On nesting monte
  carlo estimators, in: International Conference on Machine Learning, PMLR,
  2018, pp. 4267--4276.

\end{thebibliography}

\appendix
\section{Proofs of Theorem~1 and Theorem~2}
	Here we provide proofs of Theorems 1\&2. We follow closely the framework from \cite{singh2016finite} and \cite{gao2018demystifying} of finite-sample analysis of fixed $k$ nearest neighbor
	entropy estimators. They both gave a bias bound of roughly $O(\big(\frac{1}{N}\big)^{\gamma/d})$ ($\gamma$ is some positive constant) and a variance bound of roughly $O(\frac{1}{N})$ for the entropy estimator $\widehat{H}_{KL}$, under some mild assumptions. Similarly here we prove that the proposed $\widehat{H}_{tKL}$ and $\widehat{H}_{tKSG}$ also have such bias and variance bounds. More interestingly, our analysis relates the bias bound of $\widehat{H}_{tKL}$ to the gradient of density function.
	
	\subsection{Definitions and assumptions}\label{sec:definition}
	In this section, we introduce some notations and assumptions that the proofs rely on. As is mentioned in the main paper, we only consider distributions with densities supported on the unit cube in $\mathbb{R}^d$. Let $\mathcal{Q}:=[0,1]^d$ denote the unit cube in d-dimensional Euclidean space $\mathbb{R}^d$ and $P$ denote an unknown $\mu$-absolutely continuous Borel probability measure, where $\mu$ is the Lebesgue measure. Let $p:\mathcal{Q}\rightarrow [0,\infty)$ be the density of $P$.

	\begin{definition}[Twice the k-NN distance for cubes]\label{def1}
		Suppose $\{\mathrm{x}^{(i)}\}_{i=1}^{N-1}$ is set of $N-1$ i.i.d. samples from $P$. We define twice the maximum-norm k-NN distance for cubes by $\epsilon_k(\mathrm{x})=2||\mathrm{x}-\mathrm{x}^{*}||_\infty$, where $\mathrm{x}^*$ is the k-nearest element amongst $\{\mathrm{x}^{(i)}\}_{i=1}^{N-1}$ to $\mathrm{x}$ with respect to $\infty$-norm. 
	\end{definition}
	
	\begin{definition}[Twice the k-NN distance for rectangles]\label{def2}
		Suppose $\{\mathrm{x}^{{(1')}}, ..., \mathrm{x}^{{(k')}}\}$ is set of the k nearest elements amongst $\{\mathrm{x}^{(i)}\}_{i=1}^{N-1}$ to $\mathrm{x}$ with respect to $\infty$-norm. We define twice the k-NN distance in the marginal direction $\mathrm{x}_j$ by $\epsilon_k^{\mathrm{x}_j}(\mathrm{x})=2|\mathrm{x}_j-\mathrm{x}_j^{*j}|$, where $\mathrm{x}^{*j}$ is the k-nearest element amongst $\{\mathrm{x}^{{(1')}}, ..., \mathrm{x}^{{(k')}}\}$ in the marginal direction $\mathrm{x}_j$ to $\mathrm{x}$.  It should be noted that $\epsilon_k(\mathrm{x})=\max\limits_{1\leq j\leq d}\epsilon_k^{\mathrm{x}_j}(\mathrm{x})$.
	\end{definition}
	
	\begin{definition}[Truncated twice the k-NN distance]\label{def3}
		Since we only consider densities supported on the unit cube, we define so-called truncated distance for convenience. In the cubic case, we define truncated twice the k-NN distance in the marginal direction $\mathrm{x}_j$ by $\xi_{k}^{\mathrm{x}_j}(\mathrm{x})=\min\{\mathrm{x}_j+\epsilon_{k}(\mathrm{x})/2,1\}-\max\{\mathrm{x}_j-\epsilon_{k}(\mathrm{x})/2,0\}$. In the rectangular case, such distance in the marginal direction $\mathrm{x}_j$ is defined by $\zeta_{k}^{\mathrm{x}_j}(\mathrm{x})=\min\{\mathrm{x}_j+\epsilon_{k}^{\mathrm{x}_j}(\mathrm{x})/2,1\}-\max\{\mathrm{x}_j-\epsilon_{k}^{\mathrm{x}_j}(\mathrm{x})/2,0\}$.
	\end{definition}
	
	\begin{definition}[$r$-cell] \label{def4}
		We define the $r$-cell centered at $\mathrm{x}$ by
		$B(\mathrm{x};r) = \{\mathrm{x}'\in \mathbb{R}^d: ||\mathrm{x}'-\mathrm{x}||_\infty<r\}$ in the cubic case, and by $B(\mathrm{x};r_{1:d}) = \bigcap\limits_{j=1}^d\{\mathrm{x}'\in \mathbb{R}^d: |\mathrm{x}'_j-\mathrm{x}_j|<r_j\}$ in the rectangular case.
	\end{definition}
	
	\begin{definition}[Truncated $r$-cell] \label{def5}
		We define the truncated $r$-ball centered at $\mathrm{x}$ by
		$\overline{B}(\mathrm{x};r) = \mathcal{Q}\cap B(\mathrm{x};r)$ in the cubic case, and by $\overline{B}(\mathrm{x};r_{1:d}) = \mathcal{Q}\cap B(\mathrm{x};r_{1:d})$ in the rectangular case.
	\end{definition}
	
	\begin{definition}[Mass function] We define the mass of the cell $B(\mathrm{x};r/2)$ as a function with respect to $r$, which is given by $p_{r}(\mathrm{x})=P(B(\mathrm{x};r/2))$, and define the mass of the cell $B(\mathrm{x};r_{1:d}/2)$ as a function with respect to $r_1,...,r_d$, which is given by $q_{r_1,...,r_d}(\mathrm{x})=P(B(\mathrm{x};r_{1:d}/2))$.
	\end{definition}
	
	\begin{assumption}\label{assumption1}
		We make the following assumptions:
		\begin{enumerate}
			\item[(a)] $p$ is continuous and supported on $\mathcal{Q}$;
			\item[(b)] $p$ is bounded away from 0, i.e., $C_1 = \inf\limits_{\mathrm{x}\in\mathcal{Q}}p(\mathrm{x})>0$;
			\item[(c)] The gradient of $p$ is uniformly bounded on ${\mathcal{Q}^o}$, i.e., $C_2 = \sup\limits_{\mathrm{x}\in\mathcal{Q}^o}||\triangledown p(\mathrm{x})||_1<\infty$.
		\end{enumerate}
	\end{assumption}
	
	\subsection{Preliminary lemmas}\label{sec:lemmas}

	Here, we present some lemmas that support the proofs of the main results. 
	
	\begin{lemma}[\cite{kraskov2004estimating}]\label{lemma1}
		The expectation of $\log p_{\epsilon_k}(\mathrm{x})$ satisfies
		$$\mathbb{E}[\log p_{\epsilon_k}(\mathrm{x})]=\psi(k)-\psi(N).$$
	\end{lemma}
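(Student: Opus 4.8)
The plan is to reduce the claim to a classical fact about order statistics of independent uniform random variables, following the argument of Kraskov et al. Fix the query point $\mathrm{x}$. By Definition~\ref{def1} we have $\epsilon_k(\mathrm{x})/2=\|\mathrm{x}-\mathrm{x}^{*}\|_\infty$, where $\mathrm{x}^{*}$ is the $k$-th nearest of the $N-1$ i.i.d.\ samples $\{\mathrm{x}^{(i)}\}$, so that $p_{\epsilon_k}(\mathrm{x})=P\big(B(\mathrm{x};\|\mathrm{x}-\mathrm{x}^{*}\|_\infty)\big)$ is the $P$-mass of the $\infty$-ball about $\mathrm{x}$ whose radius equals the $k$-NN distance. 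The key object is the map $F_{\mathrm{x}}(r):=P\big(B(\mathrm{x};r)\big)$, which is nondecreasing in $r$ and, since $P$ is absolutely continuous (hence non-atomic), continuous.

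First I would set $V_i:=F_{\mathrm{x}}\big(\|\mathrm{x}-\mathrm{x}^{(i)}\|_\infty\big)$ for $i=1,\dots,N-1$. The random variable $\|\mathrm{x}-\mathrm{x}^{(i)}\|_\infty$ has distribution function $F_{\mathrm{x}}$, so by the probability integral transform each $V_i$ is uniform on $[0,1]$, and the $V_i$ are independent because the $\mathrm{x}^{(i)}$ are. Next, because $F_{\mathrm{x}}$ is monotone, a short counting argument shows that applying $F_{\mathrm{x}}$ to the $k$-th smallest of the distances $\|\mathrm{x}-\mathrm{x}^{(i)}\|_\infty$ yields exactly the $k$-th order statistic of the $V_i$; that is, $p_{\epsilon_k}(\mathrm{x})=V_{(k)}$. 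The $k$-th order statistic of $N-1$ i.i.d.\ uniforms is Beta-distributed, so $V_{(k)}\sim\mathrm{Beta}(k,\,N-k)$.

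It then remains to compute $\mathbb{E}[\log V_{(k)}]$. I would use the moment formula $\mathbb{E}[B^t]=\frac{\Gamma(\alpha+t)\,\Gamma(\alpha+\beta)}{\Gamma(\alpha)\,\Gamma(\alpha+\beta+t)}$ for $B\sim\mathrm{Beta}(\alpha,\beta)$, differentiate at $t=0$, and obtain $\mathbb{E}[\log B]=\psi(\alpha)-\psi(\alpha+\beta)$. Substituting $\alpha=k$ and $\beta=N-k$ (so $\alpha+\beta=N$) gives $\mathbb{E}[\log p_{\epsilon_k}(\mathrm{x})]=\psi(k)-\psi(N)$; since the right-hand side is independent of $\mathrm{x}$, the identity also holds after averaging over $\mathrm{x}\sim p$.

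The Beta order-statistic fact and the digamma integral are routine. The only delicate point — where I expect to spend the most care — is the second step: checking that the pushforward of $\|\mathrm{x}-\mathrm{x}^{(i)}\|_\infty$ under $F_{\mathrm{x}}$ is genuinely uniform on all of $[0,1]$ and that the distance ordering is carried faithfully onto the ordering of the $V_i$. Both follow from non-atomicity of $P$ together with the monotonicity of $F_{\mathrm{x}}$, with only minor bookkeeping if $F_{\mathrm{x}}$ has flat stretches, which does not occur here since under Assumption~\ref{assumption1}(b) the density is bounded below and $F_{\mathrm{x}}$ is strictly increasing on the relevant range.
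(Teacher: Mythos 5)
Your proof is correct. The paper itself gives no proof of Lemma~\ref{lemma1}; it is stated by citation to Kraskov et al., and your argument is essentially the one in that reference, repackaged cleanly: their derivation writes the density of the mass $q=p_{\epsilon_k}(\mathrm{x})$ and evaluates $k\binom{N-1}{k}\int_0^1 q^{k-1}(1-q)^{N-k-1}\log q\,dq$, which is exactly the $\mathrm{Beta}(k,N-k)$ expectation you obtain via the probability integral transform and order statistics. All the delicate points are handled properly: continuity of $F_{\mathrm{x}}$ follows from absolute continuity of $P$ (so the open ball in Definition~\ref{def4} versus the closed ball in the event $\{\|\mathrm{x}-\mathrm{x}^{(i)}\|_\infty\le r\}$ makes no difference), uniformity of $F_{\mathrm{x}}(D_i)$ needs only continuity rather than strict monotonicity, and the identity $V_{(k)}=F_{\mathrm{x}}(D_{(k)})$ holds for any nondecreasing map. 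One cosmetic remark: the result holds conditionally on the query point $\mathrm{x}$ and hence unconditionally, as you note, and no appeal to Assumption~\ref{assumption1}(b) is actually required.
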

	
	\begin{lemma}\label{lemma2}
		Let $\widetilde{P}$ be the probability measure of a uniform distribution supported on a $d$-dimensional (hyper-)cubic area $S:=B(\mathrm{x};l/2)$, and $\widetilde{p}(\mathrm{x})=\frac{1}{l^d}, \mathrm{x}\in S$  be the density function. Define $\widetilde{q}_{r_1,...,r_d}(\mathrm{x})=\widetilde{P}(B(\mathrm{x};r_1/2,...,r_d/2))$ and $\widetilde{p}_{r}(\mathrm{x})=\widetilde{P}(B(\mathrm{x};r/2))$. Then, we have 
		$$\mathbb{E}[\log \widetilde{q}_{\epsilon_k^{\mathrm{x}_1},...,\epsilon_k^{\mathrm{x}_d}}(\mathrm{x})]=\psi(k)-\frac{d-1}{k}-\psi(N),$$
		where $\epsilon_k^{\mathrm{x}_j}, j=1,...,d$ are defined as Definition~\ref{def2} after replacing $P$ by $\widetilde{P}$.
	\end{lemma}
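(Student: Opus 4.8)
The plan is to exploit the translation- and scale-invariance of the two sides to reduce to the unit cube, and then run the classical KSG-type conditioning argument on the $k$-nearest-neighbour $\infty$-distance.

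First I would normalize. Both sides of the asserted identity are unchanged under translating $S$ and rescaling $l$: the distances $\epsilon_k^{\mathrm{x}_j}$ scale linearly in $l$, whereas $\widetilde{q}_{\epsilon_k^{\mathrm{x}_1},\dots,\epsilon_k^{\mathrm{x}_d}}(\mathrm{x})$ is a probability and hence scale-free. So I may take $\mathrm{x}=0$ and $l=1$, i.e.\ $S=[-\tfrac12,\tfrac12]^d$ with $\widetilde{p}\equiv1$. Writing $\rho:=\epsilon_k(0)/2$ for the $\infty$-norm $k$-NN distance, we have $\epsilon_k^{\mathrm{x}_j}/2=\max_{1\le i\le k}|\mathrm{x}^{(i')}_j|\le\rho\le\tfrac12$ by Definition~\ref{def2}, so the rectangular cell $B(0;\epsilon_k^{\mathrm{x}_1}/2,\dots,\epsilon_k^{\mathrm{x}_d}/2)$ lies inside $S$ and therefore $\widetilde{q}_{\epsilon_k^{\mathrm{x}_1},\dots,\epsilon_k^{\mathrm{x}_d}}(0)=\prod_{j=1}^{d}\epsilon_k^{\mathrm{x}_j}$. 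It thus suffices to prove $\mathbb{E}\big[\sum_{j=1}^{d}\log\epsilon_k^{\mathrm{x}_j}\big]=\psi(k)-\tfrac{d-1}{k}-\psi(N)$.

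Next I would condition on $\rho$ together with the (a.s.\ unique) coordinate $m$ and the sign in which the $k$-th nearest neighbour attains its $\infty$-distance to $0$. Using the uniformity of the sample on the cube together with the standard description of the neighbour configuration given the $k$-NN distance, conditionally: the $k-1$ closer neighbours are i.i.d.\ uniform on $[-\rho,\rho]^d$; the $k$-th neighbour has its $m$-th coordinate equal to $\pm\rho$ and its other $d-1$ coordinates i.i.d.\ uniform on $[-\rho,\rho]$; and all of these are independent (the remaining $N-1-k$ points being irrelevant). Then $\epsilon_k^{\mathrm{x}_m}=2\rho=\epsilon_k$, while for $j\ne m$ the quantity $\epsilon_k^{\mathrm{x}_j}/2$ is the maximum of $k$ i.i.d.\ $\mathrm{Uniform}[0,\rho]$ variables, and these maxima are independent across $j\ne m$ because they are built from disjoint (independent) coordinates. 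Hence $\epsilon_k^{\mathrm{x}_j}\stackrel{d}{=}2\rho M_j$ where $M_j$ is the maximum of $k$ i.i.d.\ $\mathrm{Uniform}[0,1]$ variables (density $km^{k-1}$), and integration by parts gives $\mathbb{E}[\log M_j]=\int_0^1(\log m)\,km^{k-1}\,dm=-\tfrac1k$. Therefore
\[
\mathbb{E}\Big[\textstyle\sum_{j=1}^{d}\log\epsilon_k^{\mathrm{x}_j}\,\Big|\,\rho\Big]
=\log(2\rho)+(d-1)\big(\log(2\rho)-\tfrac1k\big)
=d\log\epsilon_k-\tfrac{d-1}{k}
=\log\widetilde{p}_{\epsilon_k}(0)-\tfrac{d-1}{k},
\]
using $\widetilde{p}_{\epsilon_k}(0)=(2\rho)^d=\epsilon_k^d$. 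Taking the outer expectation and applying Lemma~\ref{lemma1} to $\widetilde{P}$, which yields $\mathbb{E}[\log\widetilde{p}_{\epsilon_k}(0)]=\psi(k)-\psi(N)$, gives the claimed identity.

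The step I expect to be the main obstacle is making the conditional description above rigorous — specifically the assertion that, given the $k$-NN $\infty$-distance equals $\rho$, the $k$-th neighbour is uniform over the boundary $\partial[-\rho,\rho]^d$, hence lands on each of the $2d$ faces with equal probability with its free coordinates uniform within the face. I would establish this either via a shell-thickening limit (conditioning on $\rho\in[t,t+\delta t]$ and letting $\delta t\downarrow0$) or by writing out the joint density of the ordered neighbours directly; the remaining ingredients — coordinatewise independence, the elementary integral for $\mathbb{E}[\log M_j]$, and Lemma~\ref{lemma1} — are routine, and the full details of the conditioning would be deferred to the appendix.
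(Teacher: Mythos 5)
Your proof is correct, but it takes a genuinely different route from the paper's. The paper starts from the explicit joint density $f_{N,k}(r_1,\dots,r_d)$ of $(\epsilon_k^{\mathrm{x}_1},\dots,\epsilon_k^{\mathrm{x}_d})$ quoted from Kraskov et al.\ and evaluates $\mathbb{E}[\log\widetilde{q}]$ as a $d$-fold integral: it uses permutation symmetry to single out the maximal coordinate, computes the inner $(d-1)$ integrals in closed form (each contributing a factor $\frac{1}{k}u_d^k$ and a $-\frac{1}{k}$ from $\int u^{k-1}\log u\,du$), and reduces the remaining radial integral to the Beta-type identity $k\binom{N-1}{k}\int_0^1 t^{k-1}(1-t)^{N-k-1}\log t\,dt=\psi(k)-\psi(N)$. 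You instead normalize to the unit cube, condition on the $k$-NN radius $\rho$ and the face on which the $k$-th neighbour lies, identify each marginal span $\epsilon_k^{\mathrm{x}_j}$ for $j\neq m$ as $2\rho$ times the maximum of $k$ i.i.d.\ uniforms (whence the $-\frac{d-1}{k}$), and outsource the radial part to Lemma~1 applied to $\widetilde{P}$. The two arguments are equivalent in substance --- the paper's final Beta integral is exactly the content of Lemma~1 re-derived --- but yours makes the origin of the $-\frac{d-1}{k}$ correction transparent and avoids the multivariate integral manipulations, at the price of having to rigorously justify the conditional description of the neighbour configuration given $\rho$ (the point you correctly flag as the crux); the paper sidesteps that by taking the joint density formula as its starting point, effectively pushing the same conditioning argument into the cited reference. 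Your reduction to the centred unit cube and the observation that the rectangular cell stays inside $S$, so that $\widetilde{q}=\prod_j\epsilon_k^{\mathrm{x}_j}$, are both sound, and note that only the marginal conditional laws of the $\epsilon_k^{\mathrm{x}_j}$ are needed (linearity of expectation), so the independence across $j\neq m$ is not essential to your computation.
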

	\begin{proof}
		The probability density function  for $(\epsilon_k^{\mathrm{x}_1},...,\epsilon_k^{\mathrm{x}_d})$ is given by, 
		\begin{equation}
		\begin{aligned}
		f_{N,k}(r_1,...,r_d) &= \frac{(N-1)!}{k!(N-k-1)!}\times \frac{\partial^d(\widetilde{q}_{r_1,...,r_d}^k)}{\partial r_1\cdots\partial r_d} \times (1-\widetilde{p}_{r_{\mathrm{m}}})^{N-k-1},
		\end{aligned}
		\end{equation}
		where $\widetilde{p}_{r}=\widetilde{P}(B(\mathrm{x};r/2))$, and $r_{\mathrm{m}}=\max\limits_{1\leq j \leq d}r_j$ \cite{kraskov2004estimating}. Then we have
		\begin{equation}
		\begin{aligned}
		&\mathbb{E}[\log \widetilde{q}_{\epsilon_k^{\mathrm{x}_1},...,\epsilon_k^{\mathrm{x}_d}}(\mathrm{x})]=\int_{0}^{l}\cdots\int_{0}^{l}
		\left(\begin{matrix}
		N-1\\k
		\end{matrix}\right)
		\cdot \frac{\partial^d(\widetilde{q}_{r_1,...,r_d}^k)}{\partial r_1\cdots\partial r_d} \cdot (1-\widetilde{p}_{r_{\mathrm{m}}})^{N-k-1}\log \widetilde{q}_{r_1,...,r_d} dr_1\cdots dr_d\\
		&=\int_{0}^{l}\cdots\int_{0}^{l}
		\left(\begin{matrix}
		N-1\\k
		\end{matrix}\right)
		\cdot \frac{\partial^d\big((\frac{1}{l^d}r_1\cdots r_d)^k\big)}{\partial r_1\cdots\partial r_d} \cdot (1-\frac{1}{l^d}{r_{\mathrm{m}}^d})^{N-k-1}\log (\frac{1}{l^d}r_1\cdots r_d) dr_1\cdots dr_d\\
		&=\left(\begin{matrix}
		N-1\\k
		\end{matrix}\right)
		k^d\frac{1}{l^d}\int_{0}^{l}\cdots\int_{0}^{l}(\frac{1}{l^d}r_1\cdots r_d)^{k-1}(1-\frac{1}{l^d}{r_{\mathrm{m}}^d})^{N-k-1}\log (\frac{1}{l^d}r_1\cdots r_d) dr_1\cdots dr_d\\
		&=\left(\begin{matrix}
		N-1\\k
		\end{matrix}\right)
		k^d\int_{0}^{1}\cdots\int_{0}^{1}(u_1\cdots u_d)^{k-1}(1-{u_{\mathrm{m}}^d})^{N-k-1}\log(u_1\cdots u_d)du_1\cdots du_d,
		\end{aligned}
		\end{equation}
		where the last equality comes from the change of variables $u_i=\frac{1}{l}r_i, i=1,...,d$. Note that the integrand is symmetric under a permutation of the labels $1,...,d$, and so we have
		\begin{equation}
		\begin{aligned}\label{eq1}
		&\mathbb{E}[\log \widetilde{q}_{\epsilon_k^{\mathrm{x}_1},...,\epsilon_k^{\mathrm{x}_d}}(\mathrm{x})]\\
		=&dk^d
		\left(\begin{matrix}
		N-1\\k
		\end{matrix}\right)
		\int_{0}^{1}du_d \bigg(u_d^{k-1}(1-u_{d}^d)^{N-k-1}\int_{0}^{u_d}\cdots\int_{0}^{u_d}(u_1\cdots u_{d-1})^{k-1}\log(u_1\cdots u_d)du_1\cdots du_{d-1}\bigg)
		\end{aligned}
		\end{equation}
		Computing the integral over $u_1,...,u_{d-1}$ using the symmetry again, we obtain
		\begin{equation}\label{eq2}
		\begin{aligned}
		&\int_{0}^{u_d}\cdots\int_{0}^{u_d}(u_1\cdots u_{d-1})^{k-1}\log(u_1\cdots u_d)du_1\cdots du_{d-1}\\
		=&(d-1)\int_{0}^{u_d}\cdots\int_{0}^{u_d}(u_1\cdots u_{d-1})^{k-1}\log u_1 du_1\cdots du_{d-1}+\log u_m\int_{0}^{u_d}\cdots\int_{0}^{u_d}(u_1\cdots u_{d-1})^{k-1} du_1\cdots du_{d-1}\\
		=& I_1+I_2,
		\end{aligned}
		\end{equation}
		where $I_1$ is the first term and $I_2$ is the second term. By basic calculus, we have 
		\begin{equation}
		\begin{aligned}
		I_1 &= (d-1)\int_{0}^{u_d}u_1^{k-1}\log u_1 du_1 \bigg(\int_{0}^{u_d}(u_2)^{k-1}du_2\bigg)^{d-2}\\
		&=(d-1)\big(\frac{1}{k}u_d^k\big)^{d-1}(\log u_d-\frac{1}{k}),
		\end{aligned}
		\end{equation}
		and
		\begin{equation}
		\begin{aligned}
		I_2 & =\log u_d (\frac{1}{k}u_d^k\big)^{d-1},
		\end{aligned}
		\end{equation}
		which yield $I_1+I_2=(\frac{1}{k}u_d^k\big)^{d-1}\big(d\log u_d-\frac{d-1}{k}\big)$. Plug this into Eq~\eqref{eq1} and change the variables by $t=u_d^d$, and we finally have
		\begin{equation}
		\begin{aligned}
		&\mathbb{E}[\log \widetilde{q}_{\epsilon_k^{\mathrm{x}_1},...,\epsilon_k^{\mathrm{x}_d}}(\mathrm{x})]\\
		=&dk\left(\begin{matrix}
		N-1\\k
		\end{matrix}\right)
		\int_{0}^{1}u_d^{kd-1}(1-u_{d}^d)^{N-k-1}\big(d\log u_d-\frac{d-1}{k}\big)du_d\\
		=&k\left(\begin{matrix}
		N-1\\k
		\end{matrix}\right)
		\int_{0}^{1}t^{k-1}(1-t)^{N-k-1}\big(\log t -\frac{d-1}{k}\big)dt\\
		=& \psi(k)-\frac{d-1}{k}-\psi(N).
		\end{aligned}
		\end{equation}
	\end{proof}

	\begin{lemma}[Lemma 3 in \cite{singh2016finite}]\label{lemma3}
		Suppose $p$ satisfies Assumption (a) and (b). Then, for any $\mathrm{x}\in \mathcal{Q}$ and $r>\big(\frac{k}{C_1 N}\big)^{1/d}$, we have
		$$\mathbb{P}(\epsilon_k(\mathrm{x})>r)\leq e^{-C_1 r^d N}\big(\frac{e C_1 r^d N}{k}\big)^k.$$
	\end{lemma}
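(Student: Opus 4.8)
## Proof Proposal for Lemma~\ref{lemma3}

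The plan is to control the upper tail of $\epsilon_k(\mathrm{x})$ by relating the event $\{\epsilon_k(\mathrm{x})>r\}$ to a binomial counting event and then applying a Chernoff-type bound. First I would observe that, by Definition~\ref{def1}, $\epsilon_k(\mathrm{x}) = 2\|\mathrm{x}-\mathrm{x}^*\|_\infty$ where $\mathrm{x}^*$ is the $k$-th nearest neighbor of $\mathrm{x}$ among the $N-1$ i.i.d.\ samples. Hence $\epsilon_k(\mathrm{x})>r$ holds if and only if the open cube $B(\mathrm{x};r/2)$ contains at most $k-1$ of the $N-1$ sample points. Writing $W = \sum_{i=1}^{N-1}\mathds{1}\{\mathrm{x}^{(i)}\in B(\mathrm{x};r/2)\}$, we have $W\sim\mathrm{Binomial}(N-1,\,p_r(\mathrm{x}))$ where $p_r(\mathrm{x}) = P(B(\mathrm{x};r/2))$, and $\mathbb{P}(\epsilon_k(\mathrm{x})>r) = \mathbb{P}(W\le k-1)$.

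Next I would lower-bound the success probability $p_r(\mathrm{x})$. Since $\mathrm{x}\in\mathcal{Q}$ may lie near the boundary, the intersection $B(\mathrm{x};r/2)\cap\mathcal{Q}$ could be as small as a corner orthant of the cube, i.e.\ it always contains a set of Lebesgue measure at least $(r/2)^d$ after accounting for the worst-case $2^{-d}$ corner factor --- but more simply, using Assumption~(b) that $p\ge C_1$ on $\mathcal{Q}$, one gets $p_r(\mathrm{x}) \ge C_1\,\mu(B(\mathrm{x};r/2)\cap\mathcal{Q})$. A clean way to proceed is to note that the full cube $B(\mathrm{x};r/2)$ (before truncation) has $\mu$-measure $r^d$, and the relevant quantity in the statement is $C_1 r^d N$; so the intended bookkeeping is $\mathbb{E}[W] = (N-1)p_r(\mathrm{x})$ and the hypothesis $r > (k/(C_1 N))^{1/d}$ is exactly what makes $C_1 r^d N > k$, i.e.\ the mean exceeds $k$, so that $\{W\le k-1\}$ is a genuine lower-deviation event. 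I would then invoke the standard Chernoff bound for the lower tail of a binomial (or Poisson-type) random variable: for $\lambda = \mathbb{E}[W]$ and $k < \lambda$,
\begin{equation}
\mathbb{P}(W \le k) \le e^{-\lambda}\Big(\frac{e\lambda}{k}\Big)^{k}.
\end{equation}
Substituting $\lambda \ge C_1 r^d N$ (up to the $(N-1)$ vs.\ $N$ discrepancy and the corner factor, which I would absorb or handle by a careful constant) and using that $x\mapsto e^{-x}(ex/k)^k$ is decreasing for $x>k$ yields
\begin{equation}
\mathbb{P}(\epsilon_k(\mathrm{x})>r) \le e^{-C_1 r^d N}\Big(\frac{e C_1 r^d N}{k}\Big)^{k},
\end{equation}
which is the claimed bound.

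The main obstacle is the boundary/corner issue: for $\mathrm{x}$ near a face, edge, or vertex of $\mathcal{Q}$, the truncated cube $B(\mathrm{x};r/2)\cap\mathcal{Q}$ has volume strictly less than $r^d$, potentially by a factor of $2^{-d}$. I expect the resolution is that Lemma~3 of \cite{singh2016finite}, which this lemma imports, either works with a convention that effectively replaces the worst-case orthant volume by a comparable quantity, or the constant $C_1$ and the cube side are calibrated so that the clean inequality still holds; in a self-contained write-up I would either restrict attention to the stated form of the bound (treating the $r^d$ as the volume of the relevant cell under whatever truncation convention Definition~\ref{def4} uses) or explicitly carry the $2^{-d}$ factor and note it only improves (shrinks) the bound is false --- rather it weakens it, so one must be careful that the monotonicity argument is applied to the correct (smaller) mean. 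I would therefore double-check the direction of all inequalities at this step, as it is the one place a sign or factor error would invalidate the conclusion. Everything else --- the binomial identification and the Chernoff estimate --- is routine.
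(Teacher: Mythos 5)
The paper never proves this lemma --- it is imported verbatim as ``Lemma 3 in \cite{singh2016finite}'' --- so there is no internal proof to compare against. Your skeleton (identify $\{\epsilon_k(\mathrm{x})>r\}$ with the lower-tail event $\{W\le k-1\}$ for a binomial count $W$ of sample points in the cell, lower-bound the success probability via $p\ge C_1$, and apply the multiplicative Chernoff bound $\mathbb{P}(W\le k)\le e^{-\lambda}(e\lambda/k)^k$ together with the monotonicity of $\lambda\mapsto e^{-\lambda}(e\lambda/k)^k$ for $\lambda>k$) is exactly the argument used in the source, and all of that machinery is correct.

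However, the boundary issue you flag is a genuine gap in your write-up, and you leave it unresolved rather than resolved. Under this paper's conventions, $\epsilon_k(\mathrm{x})$ is \emph{twice} the $k$-NN distance (Definition~\ref{def1}), so the relevant cell is $B(\mathrm{x};r/2)$, a cube of side $r$; for $\mathrm{x}$ at a corner of $\mathcal{Q}$ its truncation to $\mathcal{Q}$ has volume only $(r/2)^d$, whence your argument guarantees only $\lambda=(N-1)\,p_r(\mathrm{x})\ge C_1 (N-1) r^d/2^d$. Since $e^{-\lambda}(e\lambda/k)^k$ is \emph{decreasing} in $\lambda$, a smaller guaranteed mean yields a \emph{larger} bound, so what you actually obtain is the stated inequality with $C_1$ replaced by $C_1/2^d$ (and $N$ by $N-1$) --- you cannot silently ``absorb'' these factors into the displayed constant, because the constant $C_1$ appears explicitly in the conclusion. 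The resolution is a matter of radius convention: in \cite{singh2016finite} the ball of radius $\epsilon$ is $\{\mathrm{x}':\|\mathrm{x}'-\mathrm{x}\|_\infty\le\epsilon\}$, of side $2\epsilon$, and each coordinate interval $[\mathrm{x}_j-\epsilon,\mathrm{x}_j+\epsilon]\cap[0,1]$ has length at least $\epsilon$, so the truncated volume is at least $\epsilon^d$ and the clean constant survives. To make your proof complete you should either (i) adopt that convention and check it is consistent with how the lemma is invoked, or (ii) carry the $2^{-d}$ and $\tfrac{N-1}{N}$ factors explicitly and prove the correspondingly weakened bound, which is all that the applications in \ref{biastkl}--\ref{varksg} actually require (there $r=a_N$ carries a free multiplicative constant). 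Your instinct that this is the one step where a factor error could invalidate the conclusion is exactly right; as written, the step is not yet closed.
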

	
	\begin{lemma}[Lemma 4 in \cite{singh2016finite}]\label{lemma4}
		Suppose $p$ satisfies Assumption (a) and (b). Then, for any $\mathrm{x}\in \mathcal{Q}$ and $\alpha>0$, we have
		$$\mathbb{E}[\epsilon_k^\alpha(\mathrm{x})]\leq (1+\frac{\alpha}{d})\big(\frac{k}{C_1 N}\big)^\frac{\alpha}{d}.$$
	\end{lemma}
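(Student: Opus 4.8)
The plan is to control the $\alpha$-th moment of $\epsilon_k(\mathrm{x})$ through its tail and then feed in the deviation estimate of Lemma~\ref{lemma3}. I would first pass to the layer-cake representation and substitute $t=r^{\alpha}$:
\[
\mathbb{E}\big[\epsilon_k^{\alpha}(\mathrm{x})\big]=\int_0^{\infty}\mathbb{P}\big(\epsilon_k^{\alpha}(\mathrm{x})>t\big)\,dt=\alpha\int_0^{\infty}r^{\alpha-1}\,\mathbb{P}\big(\epsilon_k(\mathrm{x})>r\big)\,dr .
\]
Let $r_0:=\left(\tfrac{k}{C_1 N}\right)^{1/d}$ be exactly the threshold appearing in Lemma~\ref{lemma3}, and split the $r$-integral at $r_0$. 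On $(0,r_0)$ the trivial bound $\mathbb{P}(\epsilon_k(\mathrm{x})>r)\le 1$ gives $\alpha\int_0^{r_0}r^{\alpha-1}\,dr=r_0^{\alpha}=\left(\tfrac{k}{C_1 N}\right)^{\alpha/d}$, which already accounts for the ``$1$'' in the constant $1+\alpha/d$.

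On $(r_0,\infty)$ I would apply Lemma~\ref{lemma3}, $\mathbb{P}(\epsilon_k(\mathrm{x})>r)\le e^{-C_1 r^d N}\left(\tfrac{eC_1 r^d N}{k}\right)^{k}$, and change variables to $u=\tfrac{C_1 r^d N}{k}$, so that $r=r_0\,u^{1/d}$ and the endpoint $r=r_0$ corresponds to $u=1$. All dependence on $N$ and $C_1$ then collapses into the prefactor $r_0^{\alpha}$, and the tail reduces to $\tfrac{\alpha}{d}\,r_0^{\alpha}\,e^{k}\int_1^{\infty}u^{\,k-1+\alpha/d}e^{-ku}\,du$, a purely numerical quantity. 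It remains to show this quantity is at most $\tfrac{\alpha}{d}\,r_0^{\alpha}$, which after the substitution $v=ku$ is the incomplete-gamma estimate $\Gamma\!\left(k+\tfrac{\alpha}{d},\,k\right)\le k^{\,k+\alpha/d}e^{-k}$; adding this to the contribution from $(0,r_0)$ yields $\mathbb{E}[\epsilon_k^{\alpha}(\mathrm{x})]\le\left(1+\tfrac{\alpha}{d}\right)\left(\tfrac{k}{C_1 N}\right)^{\alpha/d}$.

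I expect this last estimate to be the main obstacle, because the trivial bound $\mathbb{P}(\epsilon_k>r)\le 1$ used just below $r_0$ is somewhat wasteful and the constant $1+\alpha/d$ leaves no slack; to close the gap one may need to replace it on $(0,r_0)$ by the binomial-tail bound $\mathbb{P}(\epsilon_k(\mathrm{x})>r)\le\mathbb{P}\big(\mathrm{Bin}(N-1,C_1 r^d)\le k-1\big)$ and/or exploit that $\epsilon_k(\mathrm{x})\le 2$ (since $\mathrm{x}$ and its neighbours lie in $\mathcal{Q}=[0,1]^d$) to truncate the tail integral. A cleaner route I would also attempt avoids Lemma~\ref{lemma3} altogether: it is classical (and consistent with Lemma~\ref{lemma1}) that the mass $P\big(\overline B(\mathrm{x};\epsilon_k/2)\big)$ of the truncated $k$-NN cell follows a $\mathrm{Beta}(k,N-k)$ law, so combining a geometric inequality of the form $\epsilon_k^{d}\le c\,C_1^{-1}\,P\big(\overline B(\mathrm{x};\epsilon_k/2)\big)$ with the exact formula $\mathbb{E}\big[\mathrm{Beta}(k,N-k)^{\alpha/d}\big]=\tfrac{\Gamma(k+\alpha/d)\,\Gamma(N)}{\Gamma(k)\,\Gamma(N+\alpha/d)}$ and Gautschi-type bounds on ratios of Gamma functions reproduces the same $O\big((k/N)^{\alpha/d}\big)$ rate, with the only delicate point being the Gamma-ratio inequality (readily checked for $\alpha/d\le 1$, the case needed in the sequel).
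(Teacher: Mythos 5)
The paper never proves this lemma --- it is imported verbatim as ``Lemma 4 in \cite{singh2016finite}'' --- so the only question is whether your argument closes. Your main route (layer cake, split at $r_0=(k/(C_1N))^{1/d}$, Chernoff tail from Lemma~3) is the natural one, and the reduction of the tail to $\tfrac{\alpha}{d}r_0^{\alpha}e^{k}\int_1^{\infty}u^{k-1+\alpha/d}e^{-ku}\,du=\tfrac{\alpha}{d}r_0^{\alpha}\,e^{k}k^{-(k+\alpha/d)}\Gamma(k+\tfrac{\alpha}{d},k)$ is computed correctly. But the inequality you then need, $\Gamma(k+\tfrac{\alpha}{d},k)\le k^{k+\alpha/d}e^{-k}$, is false: for $k=1$ it reads $\int_1^{\infty}u^{\alpha/d}e^{-u}\,du\le e^{-1}$, which fails for every $\alpha>0$ since the integrand strictly dominates $e^{-u}$ on $(1,\infty)$ (at $\alpha/d=1$ the left side is $2e^{-1}$). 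So the obstacle you flagged is genuine: this route proves $\mathbb{E}[\epsilon_k^{\alpha}]\le\bigl(1+\tfrac{\alpha}{d}e^{k}k^{-(k+\alpha/d)}\Gamma(k+\tfrac{\alpha}{d},k)\bigr)r_0^{\alpha}$, i.e.\ the right rate $(k/(C_1N))^{\alpha/d}$ with a strictly larger constant (e.g.\ $3r_0$ rather than $2r_0$ for $k=1$, $\alpha=d$). That weaker form is all the paper ever uses downstream, but it is not the stated lemma.

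You should not expect any repair of the constant to succeed, because the statement as transcribed here is actually false: this paper redefines $\epsilon_k(\mathrm{x})=2\|\mathrm{x}-\mathrm{x}^{*}\|_{\infty}$ as \emph{twice} the $k$-NN distance, whereas in \cite{singh2016finite} the analogous quantity is the distance itself, and the missing factor $2^{\alpha}$ is fatal. Concretely, take $d=2$, $p$ uniform on $[0,1]^2$ (so $C_1=1$), $k=\alpha=1$, and $\mathrm{x}$ a corner of $\mathcal{Q}$: then $\epsilon_1=2\min_i\|X_i\|_{\infty}$ and $\mathbb{E}[\epsilon_1]=\sqrt{\pi}\,\Gamma(N)/\Gamma(N+\tfrac12)\sim\sqrt{\pi}\,N^{-1/2}\approx 1.77\,N^{-1/2}$, which exceeds the claimed $(1+\tfrac12)N^{-1/2}$. (The same factor-of-two slippage infects Lemma~3 as transcribed, since $\mu(\overline{B}(\mathrm{x};r/2))\ge (r/2)^d$, not $r^d$, near the boundary.) Your alternative Beta route is in fact the clean way to see all of this: $P(\overline{B}(\mathrm{x};\epsilon_k/2))\sim\mathrm{Beta}(k,N-k)$ exactly, $\epsilon_k\le 2(P/C_1)^{1/d}$, hence $\mathbb{E}[\epsilon_k^{\alpha}]\le 2^{\alpha}C_1^{-\alpha/d}\,\Gamma(k+\tfrac{\alpha}{d})\Gamma(N)/\bigl(\Gamma(k)\Gamma(N+\tfrac{\alpha}{d})\bigr)$, which is the correct statement, exhibits the unavoidable $2^{\alpha}$, and (for $\alpha\le d$, the only case used in the paper) recovers the advertised order with an explicit constant. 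In short: your computations are sound, the gap is real, and the honest fix is to restate the lemma with the extra factor $2^{\alpha}$ (or for the undoubled distance), which changes nothing in the paper beyond absolute constants.
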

	
	\begin{lemma}\label{lemma5}
		Suppose $p$ satisfies Assumption~\ref{assumption1}, then, for any $\mathrm{x}\in \mathcal{Q}$ and array $(r_1,...,r_d)$ that satisfy 
		$$\bigg\{
		\begin{matrix}
		\mathrm{x}_j+\frac{r_j}{2}\leq 1, if~ \mathrm{x}_j\leq\frac{1}{2}\\
		\mathrm{x}_j-\frac{r_j}{2}\geq 0, if~ \mathrm{x}_j>\frac{1}{2}
		\end{matrix}
		\bigg.
		$$
		for $j=1,...,d$, we have
		\begin{equation*}
		\bigg|\frac{\partial^d{q}_{r_1,...,r_d}(\mathrm{x})}{\partial r_{1}\cdots\partial r_{d}}-
		\frac{1}{2^{\sum_{j=1}^{d}\mathds{1}_j}}p(\mathrm{x})\bigg|\leq 
		\frac{1}{2^{\sum_{j=1}^{d}\mathds{1}_j+1}}C_2r_{\mathrm{m}},
		\end{equation*}
		and 
		\begin{equation*}
		\bigg|\frac{\partial^u{q}_{r_1,...,r_d}(\mathrm{x})}{\partial r_{1}\cdots\partial r_{u}}-\frac{1}{2^{\sum_{j=1}^{u}\mathds{1}_j}}p(\mathrm{x})\mu\big(\overline{B}(\mathrm{x}_{u+1:d};\frac{r_{u+1}}{2},...,\frac{r_d}{2})\big)\bigg|\leq \frac{1}{2^{\sum_{j=1}^{u}\mathds{1}_j+1}}C_2r_{\mathrm{m}}\mu\big(\overline{B}(\mathrm{x}_{u+1:d};\frac{r_{u+1}}{2},...,\frac{r_d}{2})\big),
		\end{equation*}
		where $u < d$, $r_{\mathrm{m}}=\max\limits_{1\leq j\leq d}r_j$ and $\mathds{1}_j$ is the indicator function admitting the value 1 if $[\mathrm{x}_j-\frac{r_j}{2},\mathrm{x}_j+\frac{r_j}{2}]$ intersects $[0,1]$ and  0 otherwisely.
	\end{lemma}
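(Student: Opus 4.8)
The plan is to evaluate the mixed partial derivatives of the truncated mass function by direct differentiation and then compare them termwise with $p(\mathrm{x})$. First I would write $q_{r_1,\dots,r_d}(\mathrm{x}) = \int_{I_1(r_1)\times\cdots\times I_d(r_d)} p(\mathrm{y})\,d\mathrm{y}$, where $I_j(r_j) = [\,\max\{\mathrm{x}_j-r_j/2,0\},\ \min\{\mathrm{x}_j+r_j/2,1\}\,]$ is the truncated interval in the $j$-th coordinate. The constraint imposed on $(r_1,\dots,r_d)$ ensures that, in each coordinate, at most one endpoint of $[\mathrm{x}_j-r_j/2,\mathrm{x}_j+r_j/2]$ can fall outside $[0,1]$: either $I_j(r_j)$ equals the full interval $[\mathrm{x}_j-r_j/2,\mathrm{x}_j+r_j/2]$, which is the case $\mathds{1}_j=0$, or $I_j(r_j)$ has precisely one endpoint pinned at $0$ or at $1$, which is the case $\mathds{1}_j=1$.

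Next I would differentiate with respect to $r_1,\dots,r_d$ (and, for the second estimate, only $r_1,\dots,r_u$), one coordinate at a time, using the fundamental theorem of calculus inside each nested integral and Fubini to reorder. Each $\partial/\partial r_j$ yields a factor $\tfrac12$ together with a sum over the \emph{moving} faces in direction $j$: the two faces $\{\mathrm{y}_j=\mathrm{x}_j\pm r_j/2\}$ when $\mathds{1}_j=0$, but only the single non-pinned face when $\mathds{1}_j=1$ (the pinned endpoint does not vary with $r_j$). Carrying out all $d$ differentiations turns $\partial^d q/\partial r_1\cdots\partial r_d$ into $2^{-d}$ times a sum of $p$ evaluated at the $2^{\,d-\sum_j\mathds{1}_j}$ corners $\mathbf{c}$ of $\overline{B}(\mathrm{x};r_{1:d}/2)$ whose $j$-th coordinate is $\mathrm{x}_j\pm r_j/2$; similarly $\partial^u q/\partial r_1\cdots\partial r_u$ becomes $2^{-u}$ times a sum, over the $2^{\,u-\sum_{j\le u}\mathds{1}_j}$ such corners in the first $u$ coordinates, of $\int_{\overline{B}(\mathrm{x}_{u+1:d};r_{u+1}/2,\dots,r_d/2)} p(\mathbf{c},\mathrm{y}_{u+1:d})\,d\mathrm{y}_{u+1:d}$. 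The claimed ``main term'' $2^{-\sum_j\mathds{1}_j}p(\mathrm{x})$ (respectively $2^{-\sum_{j\le u}\mathds{1}_j}p(\mathrm{x})\,\mu(\overline{B}(\mathrm{x}_{u+1:d};r_{u+1}/2,\dots,r_d/2))$) is exactly what results from replacing every evaluation of $p$ by $p(\mathrm{x})$ and counting corners.

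Then I would bound the error termwise. Any such corner $\mathbf{c}$, and indeed any point $(\mathbf{c},\mathrm{y}_{u+1:d})$ of the residual box, satisfies $\|\mathbf{c}-\mathrm{x}\|_\infty\le\max_j r_j/2 = r_{\mathrm{m}}/2$. By Assumption~\ref{assumption1}(c) and the fundamental theorem of calculus along the segment from $\mathrm{x}$ to the corner (contained in the convex cube $\mathcal{Q}$), together with H\"older's inequality pairing the $\|\cdot\|_1$ bound on the gradient against the $\|\cdot\|_\infty$ displacement, one gets $|p(\mathbf{c})-p(\mathrm{x})|\le C_2\|\mathbf{c}-\mathrm{x}\|_\infty\le C_2 r_{\mathrm{m}}/2$. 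Summing this over the $2^{\,d-\sum_j\mathds{1}_j}$ corners via the triangle inequality (and, in the $u<d$ case, integrating the pointwise bound over the residual box to produce the factor $\mu(\overline{B}(\mathrm{x}_{u+1:d};r_{u+1}/2,\dots,r_d/2))$) and multiplying by $2^{-d}$ (respectively $2^{-u}$) gives exactly $2^{-d}\cdot 2^{\,d-\sum_j\mathds{1}_j}\cdot C_2 r_{\mathrm{m}}/2 = C_2 r_{\mathrm{m}}/2^{\sum_j\mathds{1}_j+1}$, and likewise the stated bound in the truncated-order case.

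The step I expect to be the main obstacle is the bookkeeping in the differentiation: keeping track, coordinate by coordinate, of which faces move and hence how many boundary terms each derivative contributes, and verifying that this count is governed exactly by $\mathds{1}_j$ --- this is where the hypothesis on $(r_1,\dots,r_d)$ does the real work, since without it a coordinate interval could be pinned on both sides and the derivative structure would change. A minor additional technicality is that a corner (or $\mathrm{x}$ itself) may lie on $\partial\mathcal{Q}$, where $\nabla p$ is not assumed bounded; this is resolved by observing that the connecting segment meets $\partial\mathcal{Q}$ in a Lebesgue-null subset of itself (unless it degenerates into the boundary, a case handled by continuity of $p$ on the closed cube $\mathcal{Q}$ via an interior approximation), so the mean value estimate persists.
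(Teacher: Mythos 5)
Your proposal is correct and follows essentially the same route as the paper's proof: express $q_{r_1,\dots,r_d}$ as an iterated integral over the truncated box, differentiate coordinate-by-coordinate so that each pinned coordinate contributes one moving face and each unpinned coordinate two, obtain $2^{-u}$ times a sum over $2^{u-\sum_j\mathds{1}_j}$ corner evaluations, and bound each deviation $|p(\mathbf{c},\mathrm{y}_{u+1:d})-p(\mathrm{x})|$ by $C_2 r_{\mathrm{m}}/2$ via the $\|\cdot\|_1$--$\|\cdot\|_\infty$ pairing of the gradient bound. The paper merely fixes a representative case ($\mathrm{x}\in[0,\tfrac12]^d$ with the first $n$ coordinates pinned) and appeals to permutation symmetry, whereas you carry the indicator bookkeeping explicitly; your added remark on handling $\partial\mathcal{Q}$, where $\nabla p$ is only controlled on $\mathcal{Q}^o$, is a small refinement the paper leaves implicit.
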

	\begin{proof}
		For the sake of convenience, we only discuss the case when $\mathrm{x}\in [0,\frac{1}{2}]^d$ and $\mathds{1}_j=1$ for $j=1,...,n \leq u$. The proof for other cases can be obtained by permuting the labels $1,...,d$. By the definition of $	{q}_{r_1,...,r_d}(\mathrm{x})$, we have
		\begin{equation}
		\begin{aligned}
		{q}_{r_1,...,r_d}(\mathrm{x})&=\int_{\mathrm{x}_1-r_1/2}^{\mathrm{x}_1+r_1/2}\cdots \int_{\mathrm{x}_d-r_d/2}^{\mathrm{x}_d+r_d/2}p(\mathrm{x}'_1,...,\mathrm{x}'_d)d\mathrm{x}'_d\cdots d\mathrm{x}'_1\\
		&=\int_{0}^{\mathrm{x}_1+r_1/2}\cdots\int_{0}^{\mathrm{x}_n+r_n/2}\int_{\mathrm{x}_{n+1}-\frac{r_{n+1}}{2}}^{\mathrm{x}_{n+1}+\frac{r_{n+1}}{2}}\cdots \int_{\mathrm{x}_d-r_d/2}^{\mathrm{x}_d+r_d/2}p(\mathrm{x}'_1,...,\mathrm{x}'_d)d\mathrm{x}'_d\cdots d\mathrm{x}'_1,
		\end{aligned}
		\end{equation}
		and the partial derivative of it with respect to the first $n$ variables is given by
		\begin{equation}
		\begin{aligned}
		&\frac{\partial^n{q}_{r_1,...,r_d}(\mathrm{x})}{\partial r_{1}\cdots\partial r_{n}}\\
		=&\frac{1}{2^n}\int_{\mathrm{x}_{n+1}-\frac{r_{n+1}}{2}}^{\mathrm{x}_{n+1}+\frac{r_{n+1}}{2}}\cdots \int_{\mathrm{x}_d-r_d/2}^{\mathrm{x}_d+r_d/2} p(\mathrm{x}_1+\frac{r_1}{2},...,\mathrm{x}_n+\frac{r_n}{2},\mathrm{x}'_{n+1},...,\mathrm{x}'_d)d\mathrm{x}'_d\cdots d\mathrm{x}'_{n+1}.
		\end{aligned}
		\end{equation}
		Next we obtain the partial derivative of ${q}_{r_1,...,r_d}(\mathrm{x})$ with respect to the first $u$ variables
		\begin{equation}
		\begin{aligned}
		&\frac{\partial^u{q}_{r_1,...,r_d}(\mathrm{x})}{\partial r_{1}\cdots\partial r_{u}}\\
		=&\frac{1}{2^{u}}\int_{\mathrm{x}_{u+1}-r_{u+1}/2}^{\mathrm{x}_{u+1}+r_{u+1}/2}\cdots \int_{\mathrm{x}_d-r_d/2}^{\mathrm{x}_d+r_d/2}
		p(\mathrm{x}_1+\frac{r_1}{2},...,\mathrm{x}_n+\frac{r_n}{2},\mathrm{x}_{n+1}\pm\frac{r_{n+1}}{2},...,\mathrm{x}_u\pm\frac{r_u}{2},\mathrm{x}'_{u+1},..., \mathrm{x}'_d)d\mathrm{x}'_{u+1}\cdots d\mathrm{x}'_d\\
		=&\frac{1}{2^{u}}\int_{\overline{B}(\mathrm{x}_{u+1:d};\frac{r_{u+1}}{2},...,\frac{r_d}{2})}p(\mathrm{x}_1+\frac{r_1}{2},...,\mathrm{x}_n+\frac{r_n}{2},\mathrm{x}_{n+1}\pm\frac{r_{n+1}}{2},...,\mathrm{x}_u\pm\frac{r_u}{2},\mathrm{x}'_{u+1},..., \mathrm{x}'_d)d\mathrm{x}'_{u+1}\cdots d\mathrm{x}'_d,
		\end{aligned}
		\end{equation}
		where the notation $p(...,x\pm \frac{r}{2},...)=p(...,x+ \frac{r}{2},...)+p(...,x- \frac{r}{2},...)$.
		
		Finally, we have
		\begin{equation}
		\begin{aligned}
		&\bigg|\frac{\partial^u{q}_{r_1,...,r_d}(\mathrm{x})}{\partial r_{1}\cdots\partial r_{u}}-\frac{1}{2^{\sum_{j=1}^{u}\mathds{1}_j}}p(\mathrm{x})\mu\big(\overline{B}(\mathrm{x}_{u+1:d};\frac{r_{u+1}}{2},...,\frac{r_d}{2})\big)\bigg|\\
		\leq&\frac{1}{2^{u}}\int_{\overline{B}(\mathrm{x}_{u+1:d};\frac{r_{u+1}}{2},...,\frac{r_d}{2})}\bigg|p(\mathrm{x}_1+\frac{r_1}{2},...,\mathrm{x}_n+\frac{r_n}{2},\mathrm{x}_{n+1}\pm\frac{r_{n+1}}{2},...,\mathrm{x}_u\pm\frac{r_u}{2},\mathrm{x}'_{u+1},..., \mathrm{x}'_d)\\
		&~~~~~~~~~~~~~~~~~~~~~~~~~~~~~~~~~~~~~~~~~~~~~~~~~~~~~~~~~~~~~~~~~~~~~~~~~~~~~~~~~~~~~~~~~~~~~~~~~~~~~-2^{u-n}p(\mathrm{x})\bigg|d\mathrm{x}'_{u+1}\cdots d\mathrm{x}'_d\\
		\leq & \frac{2^{u-n}}{2^u}\int_{\overline{B}(\mathrm{x}_{u+1:d};\frac{r_{u+1}}{2},...,\frac{r_d}{2})}C_2 \frac{r_{\mathrm{m}}}{2}d\mathrm{x}'_{u+1}\cdots d\mathrm{x}'_d\\
		= & \frac{1}{2^{n+1}}C_2 r_{\mathrm{m}}\mu\big(\overline{B}(\mathrm{x}_{u+1:d};\frac{r_{u+1}}{2},...,\frac{r_d}{2})\big),
		\end{aligned}
		\end{equation}
		which completes the proof for $u<d$.
		
		Particularly, we have
		\begin{equation}
		\begin{aligned}
		\bigg|\frac{\partial^d{q}_{r_1,...,r_d}(\mathrm{x})}{\partial r_{1}\cdots\partial r_{d}}-\frac{1}{2^{\sum_{j=1}^{d}\mathds{1}_j}}p(\mathrm{x})\bigg|\leq \frac{1}{2^{\sum_{j=1}^{d}\mathds{1}_j+1}}C_2r_{\mathrm{m}}.
		\end{aligned}
		\end{equation}
		
	\end{proof}
	
	\begin{lemma}\label{lemma6}
		Suppose $p$ satisfies Assumption~\ref{assumption1}, then, for any $\mathrm{x}\in \mathcal{Q}$ and $r$ that satisfy 
		$$\bigg\{
		\begin{matrix}
		\mathrm{x}_j+\frac{r}{2}\leq 1, if~ \mathrm{x}\leq\frac{1}{2}\\
		\mathrm{x}_j-\frac{r}{2}\geq 0, if~ \mathrm{x}>\frac{1}{2}
		\end{matrix}
		\bigg.
		$$
		for $j=1,...,d$, we have
		\begin{equation*}
		\bigg|p_r(\mathrm{x})-p(\mathrm{x})\mu\big(\overline{B}(\mathrm{x};\frac{r}{2})\big)\bigg|\leq C_2\frac{r}{2}\overline{B}(\mathrm{x};\frac{r}{2}),
		\end{equation*}
		and 
		\begin{equation*}
		\bigg|\frac{d p_r(\mathrm{x})}{d r}-\sum_{j=1}^{d}\frac{1}{2^{\mathds{1}_j}}p(\mathrm{x})\mu\big(\overline{B}(\mathrm{x}_{\hat{j}};\frac{r}{2})\big)\bigg|\leq \sum_{j=1}^{d}\frac{1}{2^{\mathds{1}_j+1}}C_2r\mu\big(\overline{B}(\mathrm{x}_{\hat{j}};\frac{r}{2})\big),
		\end{equation*}
		where $m< d$ and $\mathds{1}_j$ is the indicator function admitting the value 1 if $[\mathrm{x}_j-\frac{r}{2},\mathrm{x}_j+\frac{r}{2}]$ intersects $[0,1]$ and  0 otherwiesly.
	\end{lemma}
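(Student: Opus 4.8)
The plan is to reduce Lemma~\ref{lemma6} to Lemma~\ref{lemma5} by means of the chain rule, after settling the first inequality directly. For that first inequality, write $p_r(\mathrm{x})=P(B(\mathrm{x};r/2))=\int_{\overline{B}(\mathrm{x};r/2)}p(\mathrm{x}')\,d\mathrm{x}'$, so that
\[
\Big|p_r(\mathrm{x})-p(\mathrm{x})\,\mu\big(\overline{B}(\mathrm{x};\tfrac r2)\big)\Big|=\Big|\int_{\overline{B}(\mathrm{x};r/2)}\big(p(\mathrm{x}')-p(\mathrm{x})\big)\,d\mathrm{x}'\Big|.
\]
Every $\mathrm{x}'\in\overline{B}(\mathrm{x};r/2)$ satisfies $\|\mathrm{x}'-\mathrm{x}\|_\infty\le r/2$, and since $\mathcal{Q}$ is convex the segment from $\mathrm{x}$ to $\mathrm{x}'$ lies in $\mathcal{Q}$, hence in $\mathcal{Q}^o$ off a set of measure zero; integrating $\nabla p$ along this segment and using H\"older's inequality gives $|p(\mathrm{x}')-p(\mathrm{x})|\le C_2\|\mathrm{x}'-\mathrm{x}\|_\infty\le C_2 r/2$. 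Integrating this over $\overline{B}(\mathrm{x};r/2)$ produces the bound $C_2\frac r2\,\mu(\overline{B}(\mathrm{x};\frac r2))$, which is the asserted estimate.

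For the derivative bound, observe that the cubic mass is the rectangular mass restricted to the diagonal, $p_r(\mathrm{x})=q_{r_1,\dots,r_d}(\mathrm{x})$ with $r_1=\dots=r_d=r$. By the computations in the proof of Lemma~\ref{lemma5}, each $\partial q_{r_1,\dots,r_d}(\mathrm{x})/\partial r_j$ is an integral of the continuous density $p$ over a face of the cell, hence jointly continuous in $(r_1,\dots,r_d)$, so $q$ is $C^1$ near the diagonal and the chain rule yields
\[
\frac{d\,p_r(\mathrm{x})}{dr}=\sum_{j=1}^{d}\frac{\partial\,q_{r_1,\dots,r_d}(\mathrm{x})}{\partial r_j}\bigg|_{r_1=\dots=r_d=r}.
\]
Applying Lemma~\ref{lemma5} with $u=1$, after relabeling the coordinates so that $r_j$ plays the role of $r_1$, gives for each $j$
\[
\bigg|\,\frac{\partial\,q_{r_1,\dots,r_d}(\mathrm{x})}{\partial r_j}\Big|_{r_1=\dots=r_d=r}-\frac{1}{2^{\mathds{1}_j}}p(\mathrm{x})\,\mu\big(\overline{B}(\mathrm{x}_{\hat{j}};\tfrac r2)\big)\,\bigg|\le\frac{1}{2^{\mathds{1}_j+1}}C_2\,r\,\mu\big(\overline{B}(\mathrm{x}_{\hat{j}};\tfrac r2)\big),
\]
since the hypothesis of Lemma~\ref{lemma5} reduces to the one imposed here once all $r_j=r$, and then $r_{\mathrm{m}}=r$ while $\mu(\overline{B}(\mathrm{x}_{u+1:d};\cdot))$ becomes the $(d-1)$-dimensional truncated cell $\mu(\overline{B}(\mathrm{x}_{\hat{j}};\frac r2))$. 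Summing over $j=1,\dots,d$ and invoking the triangle inequality gives the claimed bound; the degenerate case $d=1$ (where $u=1$ is inadmissible in Lemma~\ref{lemma5}) is checked by hand, noting that $\overline{B}(\mathrm{x}_{\hat{1}};\frac r2)$ has unit measure and $d p_r/dr$ equals $\tfrac12 p(\mathrm{x}\pm r/2)$ over the unclipped side(s), compared with $2^{-\mathds{1}_1}p(\mathrm{x})$ exactly as above.

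I expect the main obstacle to be bookkeeping rather than analysis: confirming that Lemma~\ref{lemma5} with $u=1$ specializes verbatim after the relabeling and the diagonal substitution (in particular that $r_{\mathrm{m}}$ collapses to $r$ and that the residual $(d-1)$-dimensional cell is exactly $\overline{B}(\mathrm{x}_{\hat{j}};\frac r2)$), and justifying differentiation under the integral sign together with the chain rule, both of which rest only on the continuity of $p$. A minor recurring subtlety, already present in the proof of Lemma~\ref{lemma5}, is that $\nabla p$ is controlled only on $\mathcal{Q}^o$, so the mean-value estimates near $\partial\mathcal{Q}$ must be read as limits of interior estimates, which is legitimate because $p$ is continuous on $\mathcal{Q}$.
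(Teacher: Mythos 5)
Your proposal is correct and follows essentially the same route as the paper: the first inequality is obtained by the same direct integral bound $|p(\mathrm{x}')-p(\mathrm{x})|\le C_2\|\mathrm{x}'-\mathrm{x}\|_\infty\le C_2 r/2$ over $\overline{B}(\mathrm{x};r/2)$, and the second by writing $p_r(\mathrm{x})=q_{r,\dots,r}(\mathrm{x})$, summing the diagonal partial derivatives via the chain rule and triangle inequality, and invoking Lemma~\ref{lemma5} with $u=1$ for each coordinate. You merely make explicit two points the paper leaves implicit (the justification of the chain rule/differentiation under the integral sign, and the $d=1$ edge case), which is harmless.
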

	
	\begin{proof}
		By the definition of ${p}_{r}(\mathrm{x})$, we have
		\begin{equation}
		\begin{aligned}
		{p}_{r}(\mathrm{x})&=\int_{\overline{B}(\mathrm{x};\frac{r}{2})}p(\mathrm{x}'_1,...,\mathrm{x}'_d)d\mathrm{x}'_d\cdots d\mathrm{x}'_1.
		\end{aligned}
		\end{equation}
		It then follows that, 
		\begin{equation}
		\begin{aligned}
		&\bigg|p_r(\mathrm{x})-p(\mathrm{x})\mu\big(\overline{B}(\mathrm{x};\frac{r}{2})\big)\bigg|\\
		\leq&\int_{\overline{B}(\mathrm{x};\frac{r}{2})}\big|p(\mathrm{x}'_1,...,\mathrm{x}'_d)-p(\mathrm{x})\big|d\mathrm{x}'_d\cdots d\mathrm{x}'_1\\
		\leq&\int_{\overline{B}(\mathrm{x};\frac{r}{2})}
		C_2\frac{r}{2}d\mathrm{x}'_d\cdots d\mathrm{x}'_1\\
		=&C_2\frac{r}{2}\overline{B}(\mathrm{x};\frac{r}{2}),
		\end{aligned}
		\end{equation}
		which completes proof of the first inequality. For the second inequality, one can easily see that 
		\begin{equation}
		p_r(\mathrm{x})=q_{r,...,r}(\mathrm{x}).
		\end{equation}
		Now using Lemma~\ref{lemma5}, we obtain
		\begin{equation}
		\begin{aligned}
		&\bigg|\frac{d p_r(\mathrm{x})}{d r}-\sum_{j=1}^{d}\frac{1}{2^{\mathds{1}_j}}p(\mathrm{x})\mu\big(\overline{B}(\mathrm{x}_{\hat{j}};\frac{r}{2})\big)\bigg|\\
		\leq&
		\sum_{j=1}^{d}\bigg|\frac{\partial q_{r_1,...,r_d}(\mathrm{x})}{\partial r_j}\Big|_{r_{1:d}=r}\Big.-\frac{1}{2^{\mathds{1}_j}}p(\mathrm{x})\mu\big(\overline{B}(\mathrm{x}_{\hat{j}};\frac{r}{2})\big)\bigg|\\
		\leq&\sum_{j=1}^{d}\frac{1}{2^{\mathds{1}_j+1}}C_2r\mu\big(\overline{B}(\mathrm{x}_{\hat{j}};\frac{r}{2})\big).
		\end{aligned}
		\end{equation}
	\end{proof}

	\subsection{Proof of bias bound for the truncated KL estimator}\label{biastkl}
	
	\begin{proof}
		Note that $\sum_{j=1}^{d}\log\xi_{i,j}$ are identically distributed, then we have
		\begin{equation}
		\begin{aligned}
		\mathbb{E}[\widehat{H}_{tKL}(X)]&=-\psi(k)+\psi(N) + \frac{1}{N}\sum_{i=1}^{N}\mathbb{E}\big[\sum_{j=1}^{d}\log\xi_{i,j}\big]\\
		&=-\psi(k)+\psi(N) + \mathbb{E}\big[\sum_{j=1}^{d}\log\xi_{k}^{\mathrm{x}_j}(\mathrm{x})\big]\\
		&=-\mathbb{E}[\log p_{\epsilon_k}(\mathrm{x})]+\mathbb{{E}}[\log\mu({B}(\mathrm{x};\xi_{k}^{\mathrm{x}_1}/2,...,\xi_{k}^{\mathrm{x}_d}/2))]\\
		&=-\mathbb{E}\big[\log\frac{P(B(\mathrm{x};\epsilon_k/2))}{\mu({B}(\mathrm{x};\xi_{k}^{\mathrm{x}_1}/2,...,\xi_{k}^{\mathrm{x}_d}/2))}\big]\\
		&=-\mathbb{E}\big[\log\frac{P(\overline{B}(\mathrm{x};\epsilon_k/2))}{\mu(\overline{B}(\mathrm{x};\epsilon_k/2))}\big],
		\end{aligned}
		\end{equation}
		where the third equality is from Lemma~\ref{lemma1} and the fifth equality is due to the fact that $p$ is supported on $\mathcal{Q}$. Note that 
		\begin{equation}
		C_1\leq \frac{P(\overline{B}(\mathrm{x};\epsilon_k/2))}{\mu(\overline{B}(\mathrm{x};\epsilon_k/2))} \leq \sup\limits_{\mathrm{x}\in \mathcal{Q}}p(\mathrm{x})<\infty,
		\end{equation}
		and we have
		\begin{equation}
		\begin{aligned}
		&\bigg|\log p(\mathrm{x})-\log\frac{P(\overline{B}(\mathrm{x};\epsilon_k/2))}{\mu(\overline{B}(\mathrm{x};\epsilon_k/2))}\bigg|\\
		\leq&\frac{1}{C_1}\bigg|p(\mathrm{x})-\frac{P(\overline{B}(\mathrm{x};\epsilon_k/2))}{\mu(\overline{B}(\mathrm{x};\epsilon_k/2))}\bigg|\\
		\leq&\frac{1}{C_1\mu(\overline{B}(\mathrm{x};\epsilon_k/2))}\int_{\overline{B}(\mathrm{x};\epsilon_k/2)}|p(\mathrm{x})-p(\mathrm{x'})| d\mathrm{x'}\\
		\leq&\frac{1}{C_1\mu(\overline{B}(\mathrm{x};\epsilon_k/2))}\int_{\overline{B}(\mathrm{x};\epsilon_k/2)}C_2||\mathrm{x}-\mathrm{x'}||_\infty d\mathrm{x'}\\
		\leq&\frac{C_2}{2C_1}\epsilon_k.
		\end{aligned}
		\end{equation}
		Finally, using Lemma~\ref{lemma4}, the bias bound of $\mathbb{E}[\widehat{H}_{tKL}(X)]$ can be obtained by
		\begin{equation}
		\begin{aligned}
		&\big|\mathbb{E}[\widehat{H}_{tKL}(X)]-H(X)\big|\\
		\leq&\underset{\mathrm{x} \sim p}{\mathbb{E}}{\mathbb{E}}\big[\bigg|\log p(\mathrm{x})-\log\frac{P(\overline{B}(\mathrm{x};\epsilon_k/2))}{\mu(\overline{B}(\mathrm{x};\epsilon_k/2))}\bigg|\big]\\
		\leq&\frac{C_2}{2C_1}\underset{\mathrm{x} \sim p}{\mathbb{E}}\mathbb{E}[\epsilon_k]\\
		\leq&\frac{C_2}{C_1^{1+1/d}}\big(\frac{k}{ N}\big)^\frac{1}{d},
		\end{aligned}
		\end{equation}
		which completes the proof.
	\end{proof}
	
	\subsection{Proof of variance bound for the truncated KL estimator}\label{vartkl}
	
	\begin{proof}
		For the sake of convenience, we define  $\alpha_i=\sum_{j=1}^{d}\log\xi_{i,j}$. We 
		then define $\alpha'_i, i=1,...,N$ as the estimators after $\mathrm{x}^{(1)}$ is resampled and $\alpha^*_i, i=2,...,N$ as the estimators after $\mathrm{x}^{(1)}$ is removed. Then, by the Efron-Stein inequality \cite{efron1981jackknife},
		\begin{equation}
		\begin{aligned}
		\mathrm{Var}[\widehat{H}_{tKL}(X)]&=\mathrm{Var}\bigg[\frac{1}{N}\sum_{i=1}^{N}\alpha_i\bigg]\\
		&\leq \frac{N}{2}\mathbb{E}\bigg[\bigg(\frac{1}{N}\sum_{i=1}^{N}\alpha_i-\frac{1}{N}\sum_{i=1}^{N}\alpha'_i\bigg)^2\bigg]\\
		&\leq N \mathbb{E}\bigg[\bigg(\frac{1}{N}\sum_{i=1}^{N}\alpha_i-\frac{1}{N}\sum_{i=2}^{N}\alpha^*_i\bigg)^2+\bigg(\frac{1}{N}\sum_{i=1}^{N}\alpha'_i-\frac{1}{N}\sum_{i=2}^{N}\alpha^*_i\bigg)^2\bigg]\\
		& = 2 N \mathbb{E}\bigg[\bigg(\frac{1}{N}\sum_{i=1}^{N}\alpha_i-\frac{1}{N}\sum_{i=2}^{N}\alpha^*_i\bigg)^2\bigg].
		\end{aligned}
		\end{equation}
		Let $\mathds{1}_{E_i}$ be the indicator function of the event $E_i=\{\epsilon_k(\mathrm{x}^{(1)})\neq\epsilon^*_k(\mathrm{x}^{(1)})\}$, where $\epsilon^*_k(\mathrm{x}^{(1)})$ is twice the $k$-NN distance of $\mathrm{x}^{(1)}$ when $\alpha^*_i$ are used. Then,
		\begin{equation}
		N\bigg(\frac{1}{N}\sum_{i=1}^{N}\alpha_i-\frac{1}{N}\sum_{i=2}^{N}\alpha^*_i\bigg) = \alpha_1+\sum_{i=2}^{N}\mathds{1}_{E_i}(\alpha_i-\alpha^*_i).
		\end{equation}
		By Cauchy-Schwarz inequality, we have
		\begin{equation}
		\begin{aligned}
		N^2\bigg(\frac{1}{N}\sum_{i=1}^{N}\alpha_i-\frac{1}{N}\sum_{i=2}^{N}\alpha^*_i\bigg)^2 &\leq \bigg(1+\sum_{i=2}^{N}\mathds{1}_{E_i}\bigg)\bigg(\alpha_1^2+\sum_{i=2}^{N}\mathds{1}_{E_i}(\alpha_i-\alpha^*_i)^2\bigg)\\
		&\leq (1+C_{k,d})\bigg(\alpha_1^2+\sum_{i=2}^{N}\mathds{1}_{E_i}(\alpha_i-\alpha^*_i)^2\bigg)\\
		&\leq (1+C_{k,d})\bigg(\alpha_1^2+2\sum_{i=2}^{N}\mathds{1}_{E_i}(\alpha_i^2+\alpha_i^{*2})\bigg),
		\end{aligned}
		\end{equation}
		where $C_{k,d}$ is the constant such that $\mathrm{x}^1$ is amongst the $k$-nearest neighbors of at most $C_{k,d}$ other samples. Note that $\alpha_i$ and $\alpha_i^*$ are identically distributed, we only need to bound
		\begin{subequations}
			\begin{equation}\label{eq:b1}
			\mathbb{E}[\alpha_1^2],
			\end{equation}
			\begin{equation}\label{eq:b2}
			(N-1)\mathbb{E}[\mathds{1}_{E_2}\alpha_2^2],
			\end{equation}
			\begin{equation}\label{eq:b3}
			(N-1)\mathbb{E}[\mathds{1}_{E_2}\alpha_2^{*2}].
			\end{equation}
		\end{subequations}
		
		\textit{Bound of \eqref{eq:b1}}:
		
		We separate \eqref{eq:b1} into two parts,
		\begin{equation}\label{eq15}
		\begin{aligned}
		\mathbb{E}\big[\alpha_1^2\big]=\underset{\mathrm{x}\in \mathcal{Q}}{\mathbb{E}}\underset{P:\epsilon_k< a_N}{\mathbb{E}}\big[\alpha_1^2\big]
		+
		\underset{\mathrm{x}\in \mathcal{Q}}{\mathbb{E}}\underset{P:\epsilon_k\geq a_N}{\mathbb{E}}\big[\alpha_1^2\big],
		\end{aligned}
		\end{equation}
		where $a_N=\big(\frac{2k\log N}{C_1N}\big)^{\frac{1}{d}}$.
		
		First, we consider the bound of the first term in Eq~\eqref{eq15}.
		For any $\mathrm{x}\in \mathcal{Q}$,
		\begin{equation}\label{eq16}
		\begin{aligned}
		&\underset{P:\epsilon_k< a_N}{\mathbb{E}}\big[\alpha_1^2\big]\\
		=&\int_{0}^{a_N}f_{N,k}(r)
		\big[\log \big(\xi_k^{\mathrm{x}_1}\cdots\xi_k^{\mathrm{x}_d}\big)\big]^2 dr.\\
		\end{aligned}
		\end{equation}
		where $f_{N,k}(r)=k \left(\begin{matrix}
		N-1\\k
		\end{matrix}\right)
		\cdot \frac{d{p}_{r}}{dr}\cdot p_r^{k-1} \cdot (1-{p}_{r})^{N-k-1}$ \cite{kraskov2004estimating}.
		Note that for sufficiently large $N$,
		\begin{equation}\label{eq17}
		\begin{aligned}
		&\int_{0}^{a_N}
		[\log \big(\xi_k^{\mathrm{x}_1}\cdots\xi_k^{\mathrm{x}_d}\big)\big]^2
		dr\\
		\leq&
		\int_{0}^{a_N}
		\big[	\log \big(\frac{r}{2}\cdots\frac{r}{2}\big)\big]^2
		dr\\
		\leq&C_3 \frac{(\log N)^3}{N^{1/d}},
		\end{aligned}
		\end{equation}
		for some $C_3>0$,
		we now focus on bounding $f_{N,k}(r)$. By basic calculus, we can see that 
		\begin{equation}
		k \left(\begin{matrix}
		N-1\\k
		\end{matrix}\right)
		\cdot p_r^{k-1} \cdot (1-{p}_{r})^{N-k-1}\leq C_4 N,
		\end{equation}
		for some $C_4>0$ and $p_r\in (0,1)$. Also, by Lemma~\ref{lemma6}, we have $\frac{d{p}_{r}}{dr}\leq C_5\frac{\log N}{N}$ for some $C_5>0$ and $r<a_N$. Therefore, the pdf term can be bounded by 
		\begin{equation}\label{eq18}
		f_{N,k}(r)\leq C_4 C_5\log N.
		\end{equation}
		Combining Eq~\eqref{eq17} and Eq~\eqref{eq18}, we can bound Eq~\eqref{eq16} by:
		\begin{equation}
		\underset{P:\epsilon_k< a_N}{\mathbb{E}}\big[\alpha_1^2\big]\leq C_3C_4C_5\frac{(\log N)^4}{N^{1/d}}\leq C_6,
		\end{equation}
		for some $C_6>0$.
		Thus, the first term in Eq~\eqref{eq15} is bounded by 
		\begin{equation}\label{eq21}
		\underset{\mathrm{x}\in \mathcal{Q}}{\mathbb{E}}\underset{P:\epsilon_k< a_N}{\mathbb{E}}\big[\alpha_1^2\big]
		\leq  C_6.
		\end{equation}
		
		Now we consider the second term in Eq~\eqref{eq15}. For $\epsilon_k\geq a_N$ and sufficiently large $N$, we have
		\begin{equation}\label{eq19}
		\begin{aligned}
		\big[\log\big({\xi_k^{\mathrm{x}_1}\cdots \xi_k^{\mathrm{x}_d}}\big)\big]^2
		&\leq \big[\log\big({\epsilon_k/2\cdots \epsilon_k}/2\big)\big]^2\\
		&\leq d^2\big[\log \big(\frac{a_N}{2}\big)\big]^2\\
		&\leq C_7 (\log N)^2,
		\end{aligned}
		\end{equation}
		for some $C_7>0$. Using Lemma~\ref{lemma3} and Eq~\eqref{eq19}, the second term in Eq~\eqref{eq15} can be bounded by
		\begin{equation}\label{eq35}
		\begin{aligned}
		\underset{\mathrm{x}\in \mathcal{Q}}{\mathbb{E}}\underset{P:\epsilon_k\geq a_N}{\mathbb{E}}\big[\alpha_1^2\big]
		&= \underset{\mathrm{x}\in \mathcal{Q}}{\mathbb{E}}\underset{P:\epsilon_k\geq a_N}{\mathbb{E}}\bigg[\big[\log\big({\xi_k^{\mathrm{x}_1}\cdots \xi_k^{\mathrm{x}_d}}\big)\big]^2\bigg]\\
		&\leq C_7(\log N)^2\cdot P(\epsilon_k\geq a_N)\\
		&\leq C_8 \frac{(\log N)^{k+2}}{N^{2k}},
		\end{aligned}
		\end{equation}
		for some $C_8>0$.
		
		Combining Eq~\eqref{eq21} and Eq~\eqref{eq35}, the expectation of $\alpha_1^2$ is bounded by 
		\begin{equation}\label{eq36}
		\mathbb{E}[\alpha_1^2]\leq C_9,
		\end{equation}
		for some $C_9>0$.
		
		\medskip

		\textit{Bound of \eqref{eq:b2}}:
		
		Since the event $E_2$ is equivalent to the event that $\mathrm{x}^{(1)}$ is amongst the $k$-NN of $\mathrm{x}^{(2)}$, $\mathbb{E}[\mathds{1}_{E_2}]=\mathbb{P}\{\mathrm{x}^{(1)} \in B(\mathrm{x}^{(2)}; \epsilon_k(\mathrm{x}^{(2)})\}=\frac{k}{N-1}$. Additionally, since $E_2$ is independent of $\epsilon_k(\mathrm{x}^{(2)})$, \eqref{eq:b2} is therefore bounded as
		\begin{equation}
		(N-1)\mathbb{E}[\mathds{1}_{E_2}\alpha_2^2]\leq (N-1)\mathbb{E}[\mathds{1}_{E_2}]\mathbb{E}[\alpha_2^2]\leq kC_9 ,
		\end{equation}
		where the second inequality is from Eq~\eqref{eq36}.
		
		\textit{Bound of \eqref{eq:b3}}:
		
		Using the independence between $E_2$ and $\epsilon^*_k(\mathrm{x}^{(2)})$ (twice the $k$-NN distance of $\mathrm{x}^{(2)}$ after $\mathrm{x}^{(1)}$ is removed), we can bound \eqref{eq:b3} as
		\begin{equation}
		(N-1)\mathbb{E}[\mathds{1}_{E_2}\alpha_2^{*2}]\leq (N-1)\mathbb{E}[\mathds{1}_{E_2}]\mathbb{E}[\alpha_2^{*2}]\leq kC_{10},
		\end{equation}
		for some $C_{10}>0$,
		where the second inequality is obtained from Eq~\eqref{eq36} when the sample size is reduced to $N-1$.
		
		Finally we obtain the bound of the variance of $\widehat{H}_{tKL}(X)$
		\begin{equation}
		\mathrm{Var}[\widehat{H}_{tKL}(X)]\leq C_{11} \frac{1}{N},
		\end{equation}
		for some $C_{11}>0$.
	\end{proof}
	
	\subsection{Proof of bias bound for the truncated KSG estimator}\label{biasksg}
	
	\begin{proof}
		We separate the $d$-dimensional unit cube $\mathcal{Q}$ into two subsets, $\mathcal{Q}=\mathcal{Q}_1+\mathcal{Q}_2$, where $\mathcal{Q}_1:= [\frac{a_N}{2}, 1-\frac{a_N}{2}]^d$, $a_N=\big(\frac{2k\log N}{C_1N}\big)^{\frac{1}{d}}$, and $\mathcal{Q}_2=\mathcal{Q}-\mathcal{Q}_1$. 
		Suppose that $\widetilde{P}$, $ \widetilde{p}$, and $\widetilde{q}_{\epsilon_k^{\mathrm{x}_1},...,\epsilon_k^{\mathrm{x}_d}}(\mathrm{x})$ are defined as in Lemma~\ref{lemma2} with $l=p(\mathrm{x})^{-\frac{1}{d}}$, and by Lemma~\ref{lemma2} and the fact that $\sum_{j=1}^{d}\log\zeta_{i,j}$ are identically distributed, we have
		\begin{equation}
		\begin{aligned}
		\mathbb{E}[\widehat{H}_{tKSG}(X)]&= -\psi(k)+\psi(N)+(d-1)/k + \frac{1}{N}\sum_{i=1}^{N}\mathbb{E}\big[\sum_{j=1}^{d}\log\zeta_{i,j}\big]\\
		&=\underset{\mathrm{x}\sim p}{\mathbb{E}}\underset{P}{\mathbb{E}}\big[\log{\zeta_k^{\mathrm{x}_1}\cdots \zeta_k^{\mathrm{x}_d}}\big]
		-
		\underset{\mathrm{x}\sim p}{\mathbb{E}}\underset{\widetilde{P}}{\mathbb{E}}\big[\log{\widetilde{q}_{\epsilon_k^{\mathrm{x}_1},...,\epsilon_k^{\mathrm{x}_d}}}\big]
		\\
		&=\underset{\mathrm{x}\sim p}{\mathbb{E}}\underset{P}{\mathbb{E}}\big[\log{\zeta_k^{\mathrm{x}_1}\cdots \zeta_k^{\mathrm{x}_d}}\big]
		-
		\underset{\mathrm{x}\sim p}{\mathbb{E}}\underset{\widetilde{P}}{\mathbb{E}}\big[\log\big({p(\mathrm{x})\epsilon_k^{\mathrm{x}_1}\cdots\epsilon_k^{\mathrm{x}_d}}\big)\big].
		\end{aligned}
		\end{equation}
		We decompose the bias into three terms and bound them separately: 
		\begin{equation}
		\begin{aligned}
		&\big|\mathbb{E}[\widehat{H}_{tKSG}(X)]-H(X)\big|\\
		=&\bigg|\underset{\mathrm{x}\sim p}{\mathbb{E}}\underset{P}{\mathbb{E}}\big[\log\big({\zeta_k^{\mathrm{x}_1}\cdots \zeta_k^{\mathrm{x}_d}}\big)\big]
		-
		\underset{\mathrm{x}\sim p}{\mathbb{E}}\underset{\widetilde{P}}{\mathbb{E}}\big[\log\big({\epsilon_k^{\mathrm{x}_1}\cdots\epsilon_k^{\mathrm{x}_d}}\big)\big]\bigg|\\
		\leq&I_1+I_2+I_3,
		\end{aligned}
		\end{equation}
		with
		\begin{equation}
		\begin{aligned}
		I_1&=\bigg|\underset{\mathrm{x}\in \mathcal{Q}_2}{\mathbb{E}}\underset{P:\epsilon_k< a_N}{\mathbb{E}}\big[\log\big({\zeta_k^{\mathrm{x}_1}\cdots \zeta_k^{\mathrm{x}_d}}\big)\big]\bigg| 
		+
		\bigg|\underset{\mathrm{x}\in \mathcal{Q}_2}{\mathbb{E}}\underset{\widetilde{P}:\epsilon_k< a_N}{\mathbb{E}}\big[\log\big({\epsilon_k^{\mathrm{x}_1}\cdots\epsilon_k^{\mathrm{x}_d}}\big)\big]\bigg|,\\
		I_2&=\bigg|\underset{\mathrm{x}\in \mathcal{Q}_1}{\mathbb{E}}\underset{P:\epsilon_k<a_N}{\mathbb{E}}\big[\log\big({\zeta_k^{\mathrm{x}_1}\cdots \zeta_k^{\mathrm{x}_d}}\big)\big]
		-
		\underset{\mathrm{x}\in \mathcal{Q}_1}{\mathbb{E}}\underset{\widetilde{P}:\epsilon_k<a_N}{\mathbb{E}}\big[\log\big({\epsilon_k^{\mathrm{x}_1}\cdots\epsilon_k^{\mathrm{x}_d}}\big)\big]\bigg|,\\
		I_3&=\bigg|\underset{\mathrm{x}\in \mathcal{Q}}{\mathbb{E}}\underset{P:\epsilon_k\geq a_N}{\mathbb{E}}\big[\log\big({\zeta_k^{\mathrm{x}_1}\cdots \zeta_k^{\mathrm{x}_d}}\big)\big]
		\bigg| 
		+
		\bigg|
		\underset{\mathrm{x}\in \mathcal{Q}}{\mathbb{E}}\underset{\widetilde{P}:\epsilon_k\geq a_N}{\mathbb{E}}\big[\log\big({\epsilon_k^{\mathrm{x}_1}\cdots\epsilon_k^{\mathrm{x}_d}}\big)\big]\bigg|,
		\end{aligned}
		\end{equation}
		where $\underset{P:\epsilon_k<a_N}{\mathbb{E}}$ means taking expectation under the probability measure $P$ over $\epsilon_k^{\mathrm{x}_j}<a_N, j=1,...,d$.

		\textit{Bound of $I_1$}:
		
		For any $\mathrm{x}\in \mathcal{Q}_2$,
		\begin{equation}\label{eq7}
		\begin{aligned}
		&\underset{P:\epsilon_k< a_N}{\mathbb{E}}\big[\log\big({\zeta_k^{\mathrm{x}_1}\cdots \zeta_k^{\mathrm{x}_d}}\big)\big]\\
		=&\int_{0}^{a_N}\cdots\int_{0}^{a_N}
		f_{N,k}(r_1,...,r_d)\log \big(\zeta_k^{\mathrm{x}_1}\cdots\zeta_k^{\mathrm{x}_d}\big) dr_1\cdots dr_d.\\
		\end{aligned}
		\end{equation}
		where $f_{N,k}(r_1,...,r_d)=\left(\begin{matrix}
		N-1\\k
		\end{matrix}\right)
		\cdot \frac{\partial^d({q}_{r_1,...,r_d}^k)}{\partial r_1\cdots\partial r_d} \cdot (1-{p}_{r_{\mathrm{m}}})^{N-k-1}$, and $r_m = \max\limits_{1\leq j\leq d}r_j$ \cite{kraskov2004estimating}. Note that for sufficiently large $N$, we have, 
		\begin{equation}\label{eq6}
		\begin{aligned}
		&\int_{0}^{a_N}\cdots\int_{0}^{a_N}
		\big|	\log \big(\zeta_k^{\mathrm{x}_1}\cdots\zeta_k^{\mathrm{x}_d}\big)\big|
		dr_1\cdots dr_d\\
		\leq&
		\int_{0}^{a_N}\cdots\int_{0}^{a_N}
		\big|	\log \big(\frac{r_1}{2}\cdots\frac{r_d}{2}\big)\big|
		dr_1\cdots dr_d\\
		\leq&
		\int_{0}^{a_N}\cdots\int_{0}^{a_N}
		\big|\log \big({r_1}\cdots{r_d}\big)\big|dr_1\cdots dr_d+\int_{0}^{a_N}\cdots\int_{0}^{a_N}
		d\log 2 dr_1\cdots dr_d		
		\\
		=&-d(a_N)^{d-1}\int_{0}^{a_N}\log r dr+d\log 2\bigg(\int_{0}^{a_N} dr\bigg)^d\\
		\leq&C_3\frac{\big(\log N\big)^{2}}{C_1 N},
		\end{aligned}
		\end{equation}
		for some $C_3>0$.
		We now focus on bounding $f_{N,k}(r_1,...,r_d)$. We omit the subscripts of ${q}_{r_1,...,r_d}$ for simplicity from now. By the multivariate version of Faà di Bruno's formula \cite{hardy2006combinatorics}, one obtains
		\begin{equation}\label{eq4}
		\begin{aligned}
		\frac{\partial^d({q}^k)}{\partial r_1\cdots\partial r_d}=\sum_{\pi\in \Pi}\frac{d^{|\pi|}q^k}{(d q)^{|\pi|}}\cdot \prod_{B\in \pi}\frac{\partial^{|B|}q}{\prod_{j\in B}\partial r_j},
		\end{aligned}
		\end{equation}
		where $\pi$ runs through the set $\Pi$ of all partitions of the set ${1,...,d}$. By Lemma~\ref{lemma5}, we have
		\begin{equation}
		\begin{aligned}
		\frac{\partial^{|B|}q}{\prod_{j\in B}\partial r_j}\leq p(\mathrm{x})r_{\mathrm{m}}^{d-|B|}+C_2 r_{\mathrm{m}}^{d-|B|+1},
		\end{aligned}
		\end{equation}
		which implies that 
		\begin{equation}\label{eq30}
		\begin{aligned}
		\prod_{B\in \pi}\frac{\partial^{|B|}q}{\prod_{j\in B}\partial r_j}\leq Mr_{\mathrm{m}}^{(|\pi|-1)d},
		\end{aligned}
		\end{equation}
		where $M=p^{*d}+1$ and $p^*=\sup\limits_{\mathrm{x}\in \mathcal{Q}}p(\mathrm{x})$. Therefore, for $|\pi|\leq k$ and $r_{\mathrm{m}}\leq a_N$ we can bound $f_{N,k}(r_1,...,r_d)$ as
		\begin{equation}\label{eq11}
		\begin{aligned}
		f_{N,k}(r_1,...,r_d)=
		&\sum_{\pi\in \Pi} 
		\left(\begin{matrix}
		N-1\\k
		\end{matrix}\right)
		\cdot \frac{d^{|\pi|}q^k}{(d q)^{|\pi|}}\cdot \prod_{B\in \pi}\frac{\partial^{|B|}q}{\prod_{j\in B}\partial r_j} \cdot (1-{p}_{r_{\mathrm{m}}})^{N-k-1}\\
		\leq&\sum_{\pi\in \Pi}\frac{(N-1)!}{(k-|\pi|)!(N-k-1)!}q^{k-|\pi|}(1-{p}_{r_{\mathrm{m}}})^{N-k-1}Mr_{\mathrm{m}}^{(|\pi|-1)d}\\
		\leq&\sum_{\pi\in \Pi} M\cdot N^k p_{r_{\mathrm{m}}}^{k-|\pi|}(1-{p}_{r_{\mathrm{m}}})^{N-k-1}r_{\mathrm{m}}^{(|\pi|-1)d}\\
		\leq&\sum_{\pi\in \Pi} CM\cdot N^{|\pi|}r_{\mathrm{m}}^{(|\pi|-1)d}\\
		\leq&\sum_{\pi\in \Pi} CM\bigg(\frac{2k\log N}{C_1}\bigg)^{|\pi|-1}N\\
		\leq& |\Pi|CM\bigg(\frac{2k\log N}{C_1}\bigg)^{k-1}N,
		\end{aligned}
		\end{equation}
		where the third inequality is due to the fact that $p^{k-|\pi|}(1-p)^{N-k-1}\leq CN^{-k+|\pi|}$ for $p\in [0,1]$.
		Combining Eq~\eqref{eq11} and Eq~\eqref{eq6}, we can bound the expectation in Eq~\eqref{eq7}  by 
		\begin{equation}
		\begin{aligned}
		\bigg|\underset{P:\epsilon_k< a_N}{\mathbb{E}}\big[\log\big({\zeta_k^{\mathrm{x}_1}\cdots \zeta_k^{\mathrm{x}_d}}\big)\big]\bigg|\leq C_4  \frac{\big(\log N\big)^{k+1}}{C_1^k}
		\end{aligned}
		\end{equation}
		for some $C_4>0$.
		It follows that the first term of $I_1$ is bounded by 
		\begin{equation}
		\begin{aligned}
		\bigg|\underset{\mathrm{x}\in \mathcal{Q}_2}{\mathbb{E}}\underset{P:\epsilon_k< a_N}{\mathbb{E}}\big[\log\big({\zeta_k^{\mathrm{x}_1}\cdots \zeta_k^{\mathrm{x}_d}}\big)\big]\bigg| 
		&\leq C_4  \frac{\big(\log N\big)^{k+1}}{C_1^k}\underset{\mathrm{x}\in \mathcal{Q}_2}{\mathbb{E}}[1]\\
		&\leq  C_4  \frac{\big(\log N\big)^{k+1}}{C_1^k}p^*\mu(x\in \mathcal{Q}_2)\\
		&\leq p^*C_4  \frac{\big(\log N\big)^{k+1}}{C_1^k}(d+1)a_N\\
		&= (d+1)p^*C_4 \frac{\big(\log N\big)^{k+1}}{C_1^k}\big(\frac{2k\log N}{C_1N}\big)^{\frac{1}{d}}.
		\end{aligned}
		\end{equation}
		Since $\widetilde{P}$ is a sepcial case of $P$, the second term of $I_1$ can also be bounded by the same order. Thus, $I_1$ is bounded by 
		\begin{equation}
		|I_1|\leq C_5\frac{\big(\log N\big)^{k+2}}{C_1^{k+1}N^{\frac{1}{d}}},
		\end{equation}
		for some $C_5>0$.
		
		\textit{Bound of $I_2$}:
		
		For any $\mathrm{x} \in \mathcal{Q}_1$ and $\epsilon_k^{\mathrm{x}_j}<a_N, j=1,...,d$, it is easy to see that $\zeta_k^{\mathrm{x}_j}=\epsilon_k^{\mathrm{x}_j}$. Thus, $I_2$ can be bounded and rewritten as 
		\begin{equation}
		\begin{aligned}
		I_2&\leq\underset{\mathrm{x}\in \mathcal{Q}_1}{\mathbb{E}}\bigg|\underset{P:\epsilon_k<a_N}{\mathbb{E}}\big[\log\big({\zeta_k^{\mathrm{x}_1}\cdots \zeta_k^{\mathrm{x}_d}}\big)\big]
		-
		\underset{\widetilde{P}:\epsilon_k<a_N}{\mathbb{E}}\big[\log\big({\epsilon_k^{\mathrm{x}_1}\cdots\epsilon_k^{\mathrm{x}_d}}\big)\big]\bigg|\\
		&=
		\underset{\mathrm{x}\in \mathcal{Q}_1}{\mathbb{E}}\bigg| \int_{0}^{a_N}\cdots\int_{0}^{a_N}
		\big({f}_{N,k}(r_1,...,r_d)-\widetilde{f}_{N,k}(r_1,...,r_d)\big)\log \big(r_1\cdots r_d\big) dr_1\cdots dr_d \bigg|,
		\end{aligned}
		\end{equation}
		where $\widetilde{f}_{N,k}(r_1,...,r_d)=\left(\begin{matrix}
		N-1\\k
		\end{matrix}\right)\frac{\partial^d(\widetilde{q}^k_{r1,...,r_d})}{\partial r_1\cdots\partial r_d} \cdot (1-\widetilde{p}_{r_{\mathrm{m}}})^{N-k-1}$. Again, we omit the subscripts of $\widetilde{q}_{r_1,...,r_d}$  in the following analysis. Since we have
		\begin{equation}\label{eq13}
		\begin{aligned}
		&\int_{0}^{a_N}\cdots\int_{0}^{a_N}
		\big|\log \big({r_1}\cdots{r_d}\big)\big|dr_1\cdots dr_d	
		\\
		\leq&C_3\frac{\big(\log N\big)^{2}}{C_1 N},
		\end{aligned}
		\end{equation}
		from \eqref{eq6}, we now focus on bounding 
		${f}_{N,k}(r_1,...,r_d)-\widetilde{f}_{N,k}(r_1,...,r_d)$.
		Recall the Faà di Bruno's formula in Eq~\eqref{eq4}, and we have
		\begin{equation}\label{eq10}
		\hspace{-0cm}
		\begin{aligned}
		&{f}_{N,k}(r_1,...,r_d)\\
		=&\sum_{\pi\in \Pi}
		\left(\begin{matrix}
		N-1\\k
		\end{matrix}\right)
		\frac{\partial^{|\pi|}q^k}{(\partial q)^{|\pi|}} \prod_{B\in \pi}\frac{\partial^{|B|}q}{\prod_{j\in B}\partial r_j}  (1-{p}_{r_{\mathrm{m}}})^{N-k-1}\\
		=&\sum_{\pi\in \Pi}
		\left(\begin{matrix}
		N-1\\k
		\end{matrix}\right)\frac{k!}{(k-|\pi|)!}\big(p(\mathrm{x})r_1\cdots r_d+O(r_1\cdots r_dr_{\mathrm{m}})\big)^{k-|\pi|}
		\\&~~~~~~~~~~~~~~\times
		\prod_{B\in \pi}\big(p(\mathrm{x})\prod_{j\in \widehat{B}}r_j+O(r_{\mathrm{m}}\prod_{j\in \widehat{B}}r_j)\big) \big(1-p(\mathrm{x})r_{\mathrm{m}}^d-O(r_{\mathrm{m}}^{d+1})\big)^{N-k-1}\\
		=&\sum_{\pi\in \Pi}
		\left(\begin{matrix}
		N-1\\k
		\end{matrix}\right)\frac{k!}{(k-|\pi|)!}\big(p(\mathrm{x})r_1\cdots r_d\big)^{k-|\pi|}\big(1+O(r_{\mathrm{m}})\big)^{k-|\pi|} \prod_{B\in \pi}\big(p(\mathrm{x})\prod_{j\in \widehat{B}}r_j\big)\\
		&~~~~~~~~~~~~~~~~~~~~~~~~~~\times\big(1+O(r_{\mathrm{m}})\big) \big(1-p(\mathrm{x})r_{\mathrm{m}}^d\big)^{N-k-1}\big(1-O(r_{\mathrm{m}}^{d+1})\big)^{N-k-1}\\
		=&\sum_{\pi\in \Pi}
		\left(\begin{matrix}
		N-1\\k
		\end{matrix}\right)\frac{k!}{(k-|\pi|)!}\big(p(\mathrm{x})r_1\cdots r_d\big)^{k-|\pi|}\cdot \prod_{B\in \pi}\big(p(\mathrm{x})\prod_{j\in \widehat{B}}r_j\big)\\
		&~~~~~~~~~~~~~~~~~~~~~~~\times \big(1-p(\mathrm{x})r_{\mathrm{m}}^d\big)^{N-k-1}\cdot\big(1+O(r_{\mathrm{m}})\big)^{k}\big(1-O(r_{\mathrm{m}}^{d+1})\big)^{N-k-1}\\
		=&\sum_{\pi\in \Pi}
		\left(\begin{matrix}
		N-1\\k
		\end{matrix}\right)\frac{\partial^{|\pi|}\widetilde{q}^k}{(\partial \widetilde{q})^{|\pi|}}\cdot \prod_{B\in \pi}\frac{\partial^{|B|}\widetilde{q}}{\prod_{j\in B}\partial r_j} \cdot (1-\widetilde{p}_{r_{\mathrm{m}}})^{N-k-1}\cdot\big(1+O(r_{\mathrm{m}})\big)^{k}\big(1-O(r_{\mathrm{m}}^{d+1})\big)^{N-k-1}\\
		=&\widetilde{f}_{N,k}(r_1,...,r_d)\cdot\big(1+O(r_{\mathrm{m}})\big)^{k}\big(1-O(r_{\mathrm{m}}^{d+1})\big)^{N-k-1}
		\end{aligned}
		\end{equation}
		where the second equality is from Lemma~\ref{lemma5} and Lemma~\ref{lemma6} and the fifth equality is from the fact that $\widetilde{q}=p(\mathrm{x})r_1\cdots r_d$ and $\widetilde{p}_{r_{\mathrm{m}}}=p(\mathrm{x})r_{\mathrm{m}}^d$ for $\mathrm{x}\in \mathcal{Q}_1$ and $r_{\mathrm{m}}\leq a_N$.
		
		By Eq~\eqref{eq10}, we obtain the bound of the difference ${f}_{N,k}(r_1,...,r_d)-\widetilde{f}_{N,k}(r_1,...,r_d)$
		\begin{equation}\label{eq12}
		\begin{aligned}
		&|{f}_{N,k}(r_1,...,r_d)-\widetilde{f}_{N,k}(r_1,...,r_d)|\\
		=&\bigg|\big(1+O(r_{\mathrm{m}})\big)^{k}\big(1-O(r_{\mathrm{m}}^{d+1})\big)^{N-k-1}-1\bigg|\widetilde{f}_{N,k}(r_1,...,r_d)\\
		\leq & C_6 r_{\mathrm{m}}\widetilde{f}_{N,k}(r_1,...,r_d)\\
		\leq & C_6\bigg(\frac{2k\log N}{C_1N}\bigg)^{\frac{1}{d}}|\Pi|CM\bigg(\frac{2k\log N}{C_1}\bigg)^{k-1}N,
		\end{aligned}
		\end{equation}
		for some $C_6>0$, where the last inequality is from Eq~\eqref{eq11} and the fact that $\widetilde{P}$ is a special case of $P$. Combining Eq~\eqref{eq12} and Eq~\eqref{eq13}, we obtain the bound of $I_2$
		\begin{equation}
		\begin{aligned}
		I_2&\leq C_3C_6\bigg(\frac{2k\log N}{C_1N}\bigg)^{\frac{1}{d}}|\Pi|CM\bigg(\frac{2k\log N}{C_1}\bigg)^{k-1} \frac{\big(\log N\big)^{2}}{C_1} \underset{\mathrm{x}\in \mathcal{Q}_1}{\mathbb{E}}[1]\\
		&\leq C_7 \frac{(\log N)^{k+2}}{C_1^{k+1} N^{\frac{1}{d}}},
		\end{aligned}
		\end{equation}
		for some $C_7>0$, as $\underset{\mathrm{x}\in \mathcal{Q}_1}{\mathbb{E}}[1]\leq 1$.
		
		\textit{Bound of $I_3$}:
		
		To bound the first term of $I_3$, we need to bound $\underset{P:\epsilon_k\geq a_N}{\mathbb{E}}\big[\big|\log\big({\zeta_k^{\mathrm{x}_1}\cdots \zeta_k^{\mathrm{x}_d}}\big)\big|\big]$ first.
		Note that the event $\{\epsilon_k\geq a_N\}$ is equivalent to that there is at least one $j\in\{1,...,d\}$ such that $\epsilon_k^{\mathrm{x}_j}\geq a_N$, and by the symmetry of the equation, the expectation over this set can be rewritten as
		\begin{equation}\label{eq28}
		\begin{aligned}
		\underset{P:\epsilon_k\geq a_N}{\mathbb{E}}\big[\big|\log\big({\zeta_k^{\mathrm{x}_1}\cdots \zeta_k^{\mathrm{x}_d}}\big)\big|\big]=\sum_{i=1}^{d}C_d^i \underset{P:\bigg\{\begin{matrix}
			\epsilon_{k,1:i}\geq a_N\\\epsilon_{k,i:d}< a_N
			\end{matrix}}{\mathbb{E}}\big[\big|\log\big({\zeta_k^{\mathrm{x}_1}\cdots \zeta_k^{\mathrm{x}_d}}\big)\big|\big].
		\end{aligned}
		\end{equation}
		Consider each term in Eq~\eqref{eq28}
		\begin{equation}\label{eq29}
		\begin{aligned}
		&\underset{P:\bigg\{\begin{matrix}
			\epsilon_{k,1:i}\geq a_N\\\epsilon_{k,i:d}< a_N
			\end{matrix}}{\mathbb{E}}\big[\big|\log\big({\zeta_k^{\mathrm{x}_1}\cdots \zeta_k^{\mathrm{x}_d}}\big)\big|\big]\\
		\leq&\underset{P:\bigg\{\begin{matrix}
			\epsilon_{k,1:i}\geq a_N\\\epsilon_{k,i:d}< a_N
			\end{matrix}}{\mathbb{E}}\big[\big|\log\big({\zeta_k^{\mathrm{x}_1}\cdots \zeta_k^{\mathrm{x}_i}}\big)\big|\big]+\underset{P:\bigg\{\begin{matrix}
			\epsilon_{k,1:i}\geq a_N\\\epsilon_{k,i:d}< a_N
			\end{matrix}}{\mathbb{E}}\big[\big|\log\big({\zeta_k^{\mathrm{x}_{i+1}}\cdots \zeta_k^{\mathrm{x}_d}}\big)\big|\big].
		\end{aligned}
		\end{equation}
		For $\epsilon_k^{\mathrm{x}_j}\geq a_N , j=1,...,i$ and sufficiently large $N$, we have
		\begin{equation}\label{eq3}
		\begin{aligned}
		\big|\log\big({\zeta_k^{\mathrm{x}_1}\cdots \zeta_k^{\mathrm{x}_i}}\big)\big|
		&\leq \big|\log\big({\epsilon_k^{\mathrm{x}_1}/2\cdots \epsilon_k^{\mathrm{x}_i}}/2\big)\big|\\
		&\leq \big|\log \big(\frac{a_N}{2}\big)^i\big|\\
		&\leq C_8 \log N,
		\end{aligned}
		\end{equation}
		for some $C_8>0$. Using Lemma~\ref{lemma3} and Eq~\eqref{eq3}, the first term of Eq~\eqref{eq29} can be bounded by
		\begin{equation}\label{eq33}
		\begin{aligned}
		\underset{P:\bigg\{\begin{matrix}
			\epsilon_{k,1:i}\geq a_N\\\epsilon_{k,i:d}< a_N
			\end{matrix}}{\mathbb{E}}\big[\big|\log\big({\zeta_k^{\mathrm{x}_1}\cdots \zeta_k^{\mathrm{x}_i}}\big)\big|\big]
		&\leq C_8\log N\cdot \mathbb{P}\{\epsilon_{k,1:i}\geq a_N,\epsilon_{k,i:d}< a_N\}\\
		&\leq C_8\log N\cdot P\{\epsilon_k\geq a_N\}\\
		&\leq C_9\frac{(\log N)^{k+1}}{N^{2k}},
		\end{aligned}
		\end{equation}
		For some $C_9>0$.
		
		Now consider the second term of Eq~\eqref{eq29}. 
		Like Eq~\eqref{eq6}, the integration with respect to Lebesgue measure can be bounded as 
		\begin{equation}\label{eq32}
		\begin{aligned}
		&\int_{a_N}^{1}\cdots\int_{a_N}^{1}\bigg(\int_{0}^{a_N}\cdots\int_{0}^{a_N}\big|\log\big({\zeta_k^{\mathrm{x}_{i+1}}\cdots \zeta_k^{\mathrm{x}_d}}\big)\big|dr_{i+1}\cdots dr_d\bigg)dr_d\cdots dr_{i}\\
		\leq& -(d-i)(a_N)^{d-i-1}\int_{0}^{a_N}\log r dr +(d-i)\log 2(\int_{0}^{a_N}dr)^{d-i}\\
		\leq&C_{10}\log N,
		\end{aligned}
		\end{equation}
		for some $C_{10}>0$.
		Again using the multivariate version of Faà di Bruno's formula, we can bound $f_{N,k}(r_1,...,r_d)$ for $|\pi|\leq k$ and $r_{\mathrm{m}}\geq a_N$  as
		\begin{equation}\label{eq31}
		\begin{aligned}
		f_{N,k}(r_1,...,r_d)=
		&\sum_{\pi\in \Pi} 
		\left(\begin{matrix}
		N-1\\k
		\end{matrix}\right)
		\cdot \frac{d^{|\pi|}q^k}{(d q)^{|\pi|}}\cdot \prod_{B\in \pi}\frac{\partial^{|B|}q}{\prod_{j\in B}\partial r_j} \cdot (1-{p}_{r_{\mathrm{m}}})^{N-k-1}\\
		\leq&\sum_{\pi\in \Pi}\frac{(N-1)!}{(k-|\pi|)!(N-k-1)!}q^{k-|\pi|}(1-{p}_{r_{\mathrm{m}}})^{N-k-1}Mr_{\mathrm{m}}^{(|\pi|-1)d}\\
		\leq&\sum_{\pi\in \Pi} \frac{(N-1)!}{(k-|\pi|)!(N-k-1)!}(1-C_1 a_N^d)^{N-k-1}M\\
		\leq& C_{11}\frac{1}{N^k},
		\end{aligned}
		\end{equation}
		for some $C_{11}>0$. Therefore, combining Eq~\eqref{eq32} and Eq~\eqref{eq31} leads to the bound of the second term of Eq~\eqref{eq29}
		\begin{equation}
		\underset{P:\bigg\{\begin{matrix}
			\epsilon_{k,1:i}\geq a_N\\\epsilon_{k,i:d}< a_N
			\end{matrix}}{\mathbb{E}}\big[\big|\log\big({\zeta_k^{\mathrm{x}_{i+1}}\cdots \zeta_k^{\mathrm{x}_d}}\big)\big|\big]
		\leq C_{10}C_{11}\frac{\log N}{N^k},
		\end{equation}
		which is a larger bound then Eq~\eqref{eq33}.
		As a result we can bound Eq~\eqref{eq29} by
		\begin{equation}\label{eq34}
		\underset{P:\bigg\{\begin{matrix}
			\epsilon_{k,1:i}\geq a_N\\\epsilon_{k,i:d}< a_N
			\end{matrix}}{\mathbb{E}}\big[\big|\log\big({\zeta_k^{\mathrm{x}_1}\cdots \zeta_k^{\mathrm{x}_d}}\big)\big|\big]\leq C_{10} C_{11}\frac{\log N}{N^k}.
		\end{equation}
		Given Eq~\eqref{eq34}, we are now able to estimate Eq~\eqref{eq28} and then the first term of $I_3$ by the same bound up to a constant. Similarly, we can also bound the second term of $I_3$ by $O\big(\frac{\log N}{N^k}\big)$. Thus, $I_3$ can be bounded by
		\begin{equation}
		I_3\leq C_{12}\frac{\log N}{N^k},
		\end{equation}
		for some $C_{12}>0$.
		
		Finally, combining the upper bounds of $I_1$, $I_2$ and $I_3$, we obtain that the bias is bounded by
		\begin{equation}
		\big|\mathbb{E}[\widehat{H}_{tKSG}(X)]-H(X)\big|\leq C_{13} \frac{(\log N)^{k+2}}{C_1^{k+1} N^{\frac{1}{d}}},
		\end{equation}
		for some $C_{13}>0$.
	\end{proof}

	\subsection{Proof of variance bound for the truncated KSG estimator}\label{varksg}
	
	\begin{proof}
		We let $\beta_i=\sum_{j=1}^{d}\log\zeta_{i,j}$, and define $\beta'_i, i=1,...,N$ as the estimators after $\mathrm{x}^{(1)}$ is resampled and $\beta^*_i, i=2,...,N$ as the estimators after $\mathrm{x}^{(1)}$ is removed. It should be noted that this proof can be completed by following the roadmap in \ref{vartkl}, and the only issue that needs to be validated here is that $\mathbb{E}[\beta_1^2]=O({(\log N)^{k+2}})$.
		
		Again, we separate $\mathbb{E}\big[\beta_1^2\big]$ into two parts,
		\begin{equation}\label{eq23}
		\begin{aligned}
		\mathbb{E}\big[\beta_1^2\big]=\underset{\mathrm{x}\in \mathcal{Q}}{\mathbb{E}}\underset{P:\epsilon_k< a_N}{\mathbb{E}}\big[\beta_1^2\big]
		+
		\underset{\mathrm{x}\in \mathcal{Q}}{\mathbb{E}}\underset{P:\epsilon_k\geq a_N}{\mathbb{E}}\big[\beta_1^2\big],
		\end{aligned}
		\end{equation}
		where $a_N$ is defined as in \ref{biasksg}.

		First, we consider the bound of the first term in Eq~\eqref{eq23}.
		For any $\mathrm{x}\in \mathcal{Q}$,
		\begin{equation}\label{eq24}
		\begin{aligned}
		&\underset{P:\epsilon_k< a_N}{\mathbb{E}}\big[\beta_1^2\big]\\
		=&\int_{0}^{a_N}\cdots\int_{0}^{a_N}
		f_{N,k}(r_1,...,r_d)\big[\log \big(\zeta_k^{\mathrm{x}_1}\cdots\zeta_k^{\mathrm{x}_d}\big)\big]^2 dr_1\cdots dr_d,\\
		\end{aligned}
		\end{equation}
		where $f_{N,k}(r_1,...,r_d)=\left(\begin{matrix}
		N-1\\k
		\end{matrix}\right)
		\cdot \frac{\partial^d({q}_{r_1,...,r_d}^k)}{\partial r_1\cdots\partial r_d} \cdot (1-{p}_{r_{\mathrm{m}}})^{N-k-1}$, and $r_{\mathrm{m}}=\max\limits_{1\leq j\leq d}r_j$ \cite{kraskov2004estimating}.
		
		Note that for sufficiently large $N$, we have,
		\begin{equation}\label{eq25}
		\begin{aligned}
		&\int_{0}^{a_N}\cdots\int_{0}^{a_N}
		\big[\log \big(\zeta_k^{\mathrm{x}_1}\cdots\zeta_k^{\mathrm{x}_d}\big)\big]^2 dr_1\cdots dr_d\\
		\leq&\int_{0}^{a_N}\cdots
		\int_{0}^{a_N}
		\big[	\log \big(\frac{r_1}{2}\cdots\frac{r_d}{2}\big)\big]^2
		dr_1\cdots dr_d\\
		=& d\int_{0}^{a_N}\cdots
		\int_{0}^{a_N}
		\big[	\log \big(\frac{r_1}{2}\big)\big]^2
		dr_1\cdots dr_d+d(d-1)\int_{0}^{a_N}\cdots
		\int_{0}^{a_N}
		\log \big(\frac{r_1}{2}\big)\log \big(\frac{r_2}{2}\big)
		dr_1\cdots dr_d\\
		\leq & C_3\frac{(\log N)^3}{N},
		\end{aligned}
		\end{equation}
		for some $C_3>0$.
		Recall Eq~\eqref{eq11}, and we can bound Eq~\eqref{eq24} as:
		\begin{equation}
		\underset{P:\epsilon_k< a_N}{\mathbb{E}}\big[\beta_1^2\big]\leq C_4{(\log N)^{k+2}},
		\end{equation}
		for some $C_4>0$.
		Thus, the first term in Eq~\eqref{eq23} is bounded by 
		\begin{equation}\label{eq26}
		\underset{\mathrm{x}\in \mathcal{Q}}{\mathbb{E}}\underset{P:\epsilon_k< a_N}{\mathbb{E}}\big[\beta_1^2\big]
		\leq C_4{(\log N)^{k+2}}.
		\end{equation}

		Now we consider the second term in Eq~\eqref{eq23}. 
		
		Like the bound analysis of $I_3$ in \ref{biasksg}, we can rewrite $\underset{P:\epsilon_k\geq a_N}{\mathbb{E}}\big[\beta_1^2\big]$ as
		\begin{equation}\label{eq37}
		\begin{aligned}
		\underset{P:\epsilon_k\geq a_N}{\mathbb{E}}\big[\beta_1^2\big]=\sum_{i=1}^{d}C_d^i \underset{P:\bigg\{\begin{matrix}
			\epsilon_{k,1:i}\geq a_N\\\epsilon_{k,i:d}< a_N
			\end{matrix}}{\mathbb{E}}\big[\beta_1^2\big].
		\end{aligned}
		\end{equation}
		Consider each term of Eq~\eqref{eq28}
		\begin{equation}\label{eq38}
		\begin{aligned}
		&\underset{P:\bigg\{\begin{matrix}
			\epsilon_{k,1:i}\geq a_N\\\epsilon_{k,i:d}< a_N
			\end{matrix}}{\mathbb{E}}\big[\beta_1^2\big]\\
		\leq&2\bigg(\underset{P:\bigg\{\begin{matrix}
			\epsilon_{k,1:i}\geq a_N\\\epsilon_{k,i:d}< a_N
			\end{matrix}}{\mathbb{E}}\big[\big|\log\big({\zeta_k^{\mathrm{x}_1}\cdots \zeta_k^{\mathrm{x}_i}}\big)\big|^2\big]+\underset{P:\bigg\{\begin{matrix}
			\epsilon_{k,1:i}\geq a_N\\\epsilon_{k,i:d}< a_N
			\end{matrix}}{\mathbb{E}}\big[\big|\log\big({\zeta_k^{\mathrm{x}_{i+1}}\cdots \zeta_k^{\mathrm{x}_d}}\big)\big|^2\big]\bigg)\\
		\end{aligned}
		\end{equation}
		For $\epsilon_k^{\mathrm{x}_j}\geq a_N , j=1,...,i$ and sufficiently large $N$, we have
		\begin{equation}\label{eq39}
		\begin{aligned}
		\big|\log\big({\zeta_k^{\mathrm{x}_1}\cdots \zeta_k^{\mathrm{x}_i}}\big)\big|^2
		&\leq \big|\log\big({\epsilon_k^{\mathrm{x}_1}/2\cdots \epsilon_k^{\mathrm{x}_i}}/2\big)\big|^2\\
		&\leq \big|\log \big(\frac{a_N}{2}\big)^i\big|^2\\
		&\leq C_5 (\log N)^2,
		\end{aligned}
		\end{equation}
		for some $C_5>0$. Using Lemma~\ref{lemma3} and Eq~\eqref{eq39}, the first term of Eq~\eqref{eq38} can be bounded by
		\begin{equation}\label{eq40}
		\begin{aligned}
		\underset{P:\bigg\{\begin{matrix}
			\epsilon_{k,1:i}\geq a_N\\\epsilon_{k,i:d}< a_N
			\end{matrix}}{\mathbb{E}}\big[\big|\log\big({\zeta_k^{\mathrm{x}_1}\cdots \zeta_k^{\mathrm{x}_i}}\big)\big|^2\big]
		&\leq C_5(\log N)^2\cdot \mathbb{P}\{\epsilon_{k,1:i}\geq a_N,\epsilon_{k,i:d}< a_N\}\\
		&\leq C_5(\log N)^2\cdot P\{\epsilon_k\geq a_N\}\\
		&\leq C_6,
		\end{aligned}
		\end{equation}
		for some $C_6>0$.
		
		Now consider the second term of Eq~\eqref{eq38}. 
		Like Eq~\eqref{eq25}, the integration with respect to Lebesgue measure is bounded as 
		\begin{equation}\label{eq41}
		\begin{aligned}
		&\int_{a_N}^{1}\cdots\int_{a_N}^{1}\bigg(\int_{0}^{a_N}\cdots\int_{0}^{a_N}\big|\log\big({\zeta_k^{\mathrm{x}_{i+1}}\cdots \zeta_k^{\mathrm{x}_d}}\big)\big|^2dr_{i+1}\cdots dr_d\bigg)dr_d\cdots dr_{i}\\
		\leq&C_{7},
		\end{aligned}
		\end{equation}
		for some $C_{7}>0$. Therefore, combining Eq~\eqref{eq41} and the PDF bound in Eq~\eqref{eq31} leads to the bound of the second term of Eq~\eqref{eq38}
		\begin{equation}
		\underset{P:\bigg\{\begin{matrix}
			\epsilon_{k,1:i}\geq a_N\\\epsilon_{k,i:d}< a_N
			\end{matrix}}{\mathbb{E}}\big[\big|\log\big({\zeta_k^{\mathrm{x}_{i+1}}\cdots \zeta_k^{\mathrm{x}_d}}\big)\big|^2\big]
		\leq C_8,
		\end{equation}
		for some $C_8>0$.
		As a result we can bound Eq~\eqref{eq38} by
		\begin{equation}\label{eq42}
		\underset{P:\bigg\{\begin{matrix}
			\epsilon_{k,1:i}\geq a_N\\\epsilon_{k,i:d}< a_N
			\end{matrix}}{\mathbb{E}}\big[\big|\log\big({\zeta_k^{\mathrm{x}_1}\cdots \zeta_k^{\mathrm{x}_d}}\big)\big|\big]\leq C_6+C_8.
		\end{equation}
		Given Eq~\eqref{eq42}, we are now able to estimate Eq~\eqref{eq37} and then the second term of Eq~\eqref{eq23} by the same bound up to a constant.

		Finally, the expectation of $\beta_1^2$ is bounded as
		\begin{equation}
		\mathbb{E}[\beta_1^2]\leq C_9 (\log N)^{k+2},
		\end{equation}
		for some $C_9>0$. Following the same procedure in \ref{vartkl}, we can obtain the bound of the variance of $\widehat{H}_{tKSG}(X)$
		\begin{equation}
		\mathrm{Var}[\widehat{H}_{tKSG}(X)]\leq C_{10}\frac{(\log N)^{k+2}}{N},
		\end{equation}
		for some $C_{10}>0$.
	\end{proof}

	\section{Proof of Corollary~2}\label{sec:cor2}
	\begin{proof}
		Given a UM $f$, the density of the original distribution satisfies the change of variable formula,
		\begin{equation}
			p_\mathrm{x}(\mathrm{x})=p_\mathrm{z}(f(\mathrm{x}))g(\mathrm{x}),
		\end{equation}
		where $g(\mathrm{x})=\left|\operatorname{det} \frac{\partial f(\mathrm{x})}{\partial \mathrm{x}}\right|$  is differentiable and positive for any $\mathrm{x}\in \mathbb{R}^d$ (\cite{papamakarios2017masked, Dinh2017density}). 
		Recall that $p_\mathrm{x}$ is differentiable, and it follows that, 
		\begin{equation}
			p_\mathrm{z}(\mathrm{z})=\frac{p_\mathrm{x}(f^{-1}(\mathrm{z}))}{g(f^{-1}(\mathrm{z})))},
		\end{equation}
		is also differentiable for any $\mathrm{z}\in Q^o$. Thus, the supreme  $C_2^N$ is a well defined random variable. 
		
		Since $p_{\mathrm{z}}^S$ is a differentiable density function defined on $\mathcal{Q}$, there exists a $\mathrm{z}^*\in \mathcal{Q}$ such that $p_{\mathrm{z}}^S(\mathrm{z^*})=1$. By mean value theorem, we have
		\begin{equation}
		\begin{aligned}
		|&1-p_{\mathrm{z}}^S(\mathrm{z})|\\ \leq
		& |\triangledown p_\mathrm{z}^S(\mathrm{\xi})\cdot(\mathrm{z^*}-\mathrm{z})|\\ \leq
		& ||\triangledown p_\mathrm{z}^S(\mathrm{\xi})||_1 \cdot ||\mathrm{z^*}-\mathrm{z}||_\infty\\ \leq
		& C_2^N,
		\end{aligned}
		\end{equation}
		where $\xi$ is some vector in $\mathcal{Q}$. Thus, we have
		\begin{equation}
		1-C_2^N\leq p_\mathrm{x}^N(\mathrm{x}) \leq 1+C_2^N.
		\end{equation}
		Now define $C_1^N = \inf\limits_{\mathrm{z}\in\mathcal{Q}}p_\mathrm{z}^S(\mathrm{z})$. For $N>M$, the bias can then be bounded by
		\begin{equation}
		\begin{aligned}
		& \big|\mathbb{E}[\widehat{H}_\mathrm{UM-tKL}(X)]-H(X)\big|\\ 
		\leq& \mathbb{E}_{UM}\big|\mathbb{E}_{X}[\widehat{H}_\mathrm{UM-tKL}(X)]-H(X)\big|
		\\ \leq
		& \mathbb{E}\big[\frac{C_2^N}{(C_1^ N)^{1+1/d}}\big]\big(\frac{k}{ N}\big)^\frac{1}{d}\\ \leq
		& C_{UM-tKL}^N \big(\frac{k}{ N}\big)^\frac{1}{d},
		\end{aligned}
		\end{equation}
		where $C^N_{UM-tKL}=\frac{1}{(1-\bar{C})^{1+1/d}}\mathbb{E}[C_2^N]$. Note that $C_2^N \underset{N \rightarrow \infty}{\stackrel{\mathbb{P}}{\longrightarrow}} 0$ and $C_2^N \leq \bar{C}, \,a.s.$ for any $N>M$, we have $\lim\limits_{N\rightarrow \infty}\mathbb{E}[C_2^N]=0$ and therefore $\lim\limits_{N\rightarrow \infty}C^N_{UM-tKL}=0$.
		The MSE can be bounded by
		\begin{equation}\label{eq50}
		    \begin{aligned}
		    & \mathbb{E}[(\widehat{H}_\mathrm{UM-tKL}(X)-H(X))^2]\\
		    \leq & 2\mathbb{E}[(\widehat{H}_\mathrm{UM-tKL}(X)-\mathbb{E}_{X}[\widehat{H}_\mathrm{UM-tKL}(X)])^2]+2\mathbb{E}[(\mathbb{E}_{X}[\widehat{H}_\mathrm{UM-tKL}(X)]-H(X))^2]\\
		    =&2\mathbb{E}_{\mathrm{UM}}\mathbb{E}_{X}[(\widehat{H}_\mathrm{UM-tKL}(X)-\mathbb{E}_{X}[\widehat{H}_\mathrm{UM-tKL}(X)])^2]+2\mathbb{E}_{\mathrm{UM}}[(\mathbb{E}_{X}[\widehat{H}_\mathrm{UM-tKL}(X)]-H(X))^2]
		    \end{aligned}
		\end{equation}
		Note that when $N>M$, $C_1^N$ and $C_2^N$ satisfy Assumption~\ref{assumption1}. Then by Theorem~1, we can bound the first term of Eq.~\eqref{eq50} by
		\begin{equation}
		    \begin{aligned}
		    2\mathbb{E}_{\mathrm{UM}}\mathbb{E}_{X}[(\widehat{H}_\mathrm{UM-tKL}(X)-\mathbb{E}_{X}[\widehat{H}_\mathrm{UM-tKL}(X)])^2]
		    \leq  C_1 \frac{1}{N},
		    \end{aligned}
		\end{equation}
		for some $C_1>0$.
		The second term of Eq.~\eqref{eq50} can be bounded by
		\begin{equation}
		    \begin{aligned}
		    &2\mathbb{E}_{\mathrm{UM}}[(\mathbb{E}_{X}[\widehat{H}_\mathrm{UM-tKL}(X)]-H(X))^2]\\
		    \leq & 2\mathbb{E}\big[\frac{(C_2^N)^2}{(C_1^ N)^{2(1+1/d)}}\big]\big(\frac{k}{ N}\big)^\frac{2}{d}\\
		    \leq & D^N_{UM-tKL}\big(\frac{k}{ N}\big)^\frac{2}{d}
		    \end{aligned}
		\end{equation}
		where $D^N_{UM-tKL}=\frac{2}{(1-\bar{C})^{2(1+1/d)}}\mathbb{E}[(C_2^N)^2]$. Again, we have,$\lim\limits_{N\rightarrow \infty}D^N_{UM-tKL}=0$ for any $N>M$. Thus, the MSE is bounded by 
		\begin{equation}
		    \mathbb{E}[(\widehat{H}_\mathrm{UM-tKL}(X)-H(X))^2]\leq C_1\frac{1}{N}+D^N_{UM-tKL}\big(\frac{k}{ N}\big)^\frac{2}{d}.
		\end{equation}
	\end{proof}

	\section{Proof of Corollary~3}\label{sec:cor3}
	\begin{proof}
	For $N>M$, the bias can be bounded by
		\begin{equation}
		\begin{aligned}
		&\big|\mathbb{E}[\widehat{H}_\mathrm{UM-tKSG}(X)]-H(X)\big|\\ \leq
		& C\mathbb{E}\big[\frac{\bar{p}_\mathrm{z}^S\big((\bar{p}_\mathrm{z}^S)^d+1\big)}{C_1^{k+1}}\big] \frac{(\log N)^{k+2}}{ N^{\frac{1}{d}}} \\ \leq
		& C_{UM-tKSG}\frac{(\log N)^{k+2}}{ N^{\frac{1}{d}}},
		\end{aligned}
		\end{equation}
		where $C$ is a positive constant, $\bar{p}_\mathrm{z}^S=\sup\limits_{\mathrm{z}\in \mathcal{Q}}{p}_\mathrm{z}^S(\mathrm{z})$ and $C_{UM-tKSG}=C\frac{(1+\bar{C})\big((1+\bar{C})^d+1\big)}{(1-\bar{C})^{k+1}}$. Similarly as the proof of Corollary 2 and by Theorem 2, we can bound the MSE by
		\begin{equation}
		    \mathbb{E}[(\widehat{H}_\mathrm{UM-tKSG}(X)-H(X))^2]\leq C_2\frac{(\log N)^{k+2}}{N}+D^N_{UM-tKSG}\frac{(\log N)^{2(k+2)}}{ N^{\frac{2}{d}}},
		\end{equation}
		where $C_2$ is a positive constant and $D^N_{UM-tKSG}=\Big(C\frac{(1+\bar{C})\big((1+\bar{C})^d+1\big)}{(1-\bar{C})^{k+1}}\Big)^2$. 
	\end{proof}

	\section{Further details of the numerical examples}
	\subsection{Implementation details of the estimators}
	
	{\bf The setup of MAF:} We use a MAF built by 10 autoregressive layers \cite{germain2015made} for Hybrid Rosenbrock distribution and one built by 5 autoregressive layers for Even Rosenbrock distribution and the application of experimental design. Each layer has two hidden layers of 50 units and tanh nonlinearities. In each experiment, half of the samples  are used to train the MAF model and the other half are used to estimate the entropy.
	
	{\bf The implementation of CADEE and non-Mises estimator:} The two estimators are implemented using the code provided by \cite{ariel2020estimating} and \cite{kandasamy2015nonparametric} with the default parameters.
	
	\subsection{The two multivariate Rosenbrock distributions} \label{sec:hr}
	\textbf{Hybrid Rosenbrock Distribution.}
	The density of the hybrid Rosenbrock distribution is given by
	\begin{equation}
	\pi(\mathbf{x}) \propto \exp \left\{-a(x_{1}-\mu)^{2}-\sum_{j=1}^{n_{2}} \sum_{i=2}^{n_{1}} b_{j, i}(x_{j, i}-x_{j, i-1}^{2})^{2}\right\},\label{e:hr}
	\end{equation}
	where the dimensionality of $\bf{x}$ is $d=(n_1-1)n_2+1$. The variable $x_{j,1}=x_1$ for $j=1,...,n_2$.
	The normalization constant of Eq.~\eqref{e:hr}  is 
	\begin{equation}
	\frac{\sqrt{a} \prod_{i=2, j=1}^{n_{1}, n_{2}} \sqrt{b_{j, i}}}{\pi^{d / 2}}.
	\end{equation}
	
	In this experiment, we set $\mu=1.0$, $a=1.0$, $b_{j,i}=0.1$ for all $i$ and $j$, $n_1=4$ and $n_2$ ranging from 1 to 7. This setting forms a class of distributions with dimensions ranging from 4 to 22. 
	
	\textbf{Even Rosenbrock Distribution.} 
	The density of the even Rosenbrock distribution is given by
	\begin{equation}
	\pi(\mathbf{x}) \propto \exp \left\{-\sum_{i=1}^{d / 2}\left[\left(x_{2 i-1}-\mu_{2 i-1}\right)^{2}-c_i\left(x_{2 i}-x_{2 i-1}^{2}\right)^{2}\right] \right\}, \label{e:er}
	\end{equation}
	where the dimensionality $d$ must be an even number. The normalization constant for Eq.~\eqref{e:er} is 
	\begin{equation}
	\frac{\prod_{i=1}^{d/2}\sqrt{c_i}}{\pi^{d/2}}.
	\end{equation}
	In this experiment, we set $\mu_{2i-1}=0$, $c_i = 12.5$ for $i=1,...,d/2$ with $d$ ranging from 2 to 22. This setting forms a class of distributions with dimensions ranging from 2 to 22.
	
	\textbf{Hybrid Rosenbrock Distribution with Discontinuous Density.}
	The density of the hybrid Rosenbrock distribution with discontinuous density is given by
	\begin{equation}
	    \pi(\mathbf{x}) = \mathrm{unifpdf}(x_1,\mu, \sqrt{\frac{1}{8a}})\times \prod_{j=1}^{n_{2}} \prod_{i=2}^{n_{1}} \mathrm{unifpdf}(x_{j,i}, x_{j,i-1}^2,\sqrt{\frac{1}{8b}})\label{e:hru}
	\end{equation}
	where $\mathrm{unifpdf}(x,\alpha,\beta)$ is the pdf of the continuous uniform distribution on the interval $[\alpha-\beta,\alpha+\beta]$, evaluated at the values in $x$, and where the dimensionality of $\bf{x}$ is $d=(n_1-1)n_2+1$. The variable $x_{j,1}=x_1$ for $j=1,...,n_2$.
	
	In this experiment, we set $\mu=1.0$, $a=1.0$, $b_{j,i}=0.1$ for all $i$ and $j$, $n_1=4$ and $n_2$ ranging from 1 to 7. This setting forms a class of distributions with dimensions ranging from 4 to 22.
	
	\textbf{Even Rosenbrock Distribution with Discontinuous Density.} 
	The density of the even Rosenbrock distribution with discontinuous density is given by
	\begin{equation}
	    \pi(\mathbf{x}) =\prod_{i=1}^{d / 2}\left[\mathrm{unifpdf}(x_{2i-1},\mu_{2i-1},0.5)\times \mathrm{unifpdf}(x_{2i}, x_{2i-1}^2, c_i)\right], \label{e:eru}
	\end{equation}
	where the dimensionality $d$ must be an even number.
	
	In this experiment, we set $\mu_{2i-1}=0$, $c_i = 0.025$ for $i=1,...,d/2$ with $d$ ranging from 2 to 22. This setting forms a class of distributions with dimensions ranging from 2 to 22.
	
	\subsection{Entropy estimator only using NF}
	In this section we describe a simplified version of the proposed method, which estimate the entropy only using NF (without the truncated entropy estimators). 
	To start with, we recall Eq.~(12) in the main paper,
	\begin{equation}
	H(X) = H(Z)+\int p_\mathrm{z}(\mathrm{z})\log\bigg|\det\frac{\partial f^{-1}(\mathrm{z})}{\partial \mathrm{z}}\bigg|d\mathrm{z}.
	\end{equation}
	The main idea of this simplified method is to assume that  the transformed random variable $Z$ exactly follows 
	a uniform distribution and as a result $H(Z)=0$. 
	Therefore the entropy of $X$ is estimated as, 
	\begin{equation}
	\hat{H}_{NF}(X) = \frac{1}{n}\sum_{i=1}^{n}\log\bigg|\det\frac{\partial f^{-1}(\mathrm{z}^{(i)})}{\partial \mathrm{z}}\bigg|,
	\end{equation}
	where $\mathrm{z}^{(i)}=f(\mathrm{x}^{(i)})$. 
	A limitation of this method is quite obvious -- the transformed random variable $Z$ is usually not uniformly distributed and simply taking its entropy 
	to be zero will undoubtedly introduce bias, which is demonstrated by the numerical examples in the main paper. 
	It should also be noted that, while not in the context of entropy estimation, a NF based approach has been used for maximum entropy modeling \cite{DBLP:conf/iclr/Loaiza-GanemGC17}.
	
	\subsection{The Beta scheme for parametrizing the observation times}\label{sec:betascheme}
	
	In the optimal experimental design (OED) example, we use a lower dimensional
	parameterization scheme to reduce the dimensionality of the optimization problem~ 
	\cite{Ryan2014towards}. 
	In particular we use the Beta scheme \cite{Ryan2014towards} to allocate the placements of the observation times. Specifically, 
	let $Q(\cdot,\alpha,\beta)$ be the quantile function of the beta distribution with shape parameters $\alpha$ and $\beta$,
	and
	the $d$ observation times $\lambda=(t_1,...,t_d)$ in the time interval $[0, T]$ are allocated as, 
	\begin{equation}
	t_i = T\cdot Q( \frac{i}{d+1},\alpha, \beta), \quad i=1, ... ,d.
	\end{equation}
	As such the $d$-dimensional variable $\lambda$ is parametrized  by $\alpha>0$ and $\beta>0$.  
	
	\subsection{Nested Monte Carlo}\label{sec:nestedMC}
	Here we describe the Nested Monte Carlo (NMC) approach that is used to estimate the entropy in the experimental design example.
	Recall that the entropy of interest is $H(Y)$ (here for simplicity we omit the design parameter $\lambda$): 
	\begin{equation}
	H(Y) = \int \log p(y) p(y) dy,
	\end{equation} 
	which can be estimated via Monte Carlo (MC): 
	\begin{equation}
	H(Y)\approx -\frac{1}{M}\sum_{i=1}^M \log p(y^{(i)}), \label{e:outer}
	\end{equation}
	where $y^{(i)}$ are drawn from $p(y)$. 
	A difficulty here is that we do not have an explicit expression of $p(y)$.
	Note however that in this example the likelihood $p(y|\theta)$ and the prior $p(\theta)$ are available and we can therefore write
	\begin{equation}
	p(y) = \int p(y|\theta)p(\theta) d\theta.
	\end{equation}
	It follows that $p(y)$ can also be estimated via MC: 
	\begin{equation}
	p(y^{(i)})\approx\frac{1}{N}\sum_{j=1}^N p(y^{(i)}|\theta^{(j)}), \label{e:inner}
	\end{equation}
	where $\theta^{(j)}$ are drawn from $p(\theta)$.
	Combining Eq.~\eqref{e:inner} and Eq.~\eqref{e:outer}, we obtain an estimator of $H(Y)$,
	which is referred to as the NMC method~\cite{ryan2003estimating}.
	In particular, Eq.~\eqref{e:inner} is usually referred to as the inner MC  and Eq.~\eqref{e:outer} is referred to as the outer one. 
	Since the theoretical results in \cite{ryan2003estimating,rainforth2018nesting} show that the mean squared
	error of NMC estimator decays at a rate of $O(\frac{1}{M}+\frac{1}{N})$, we can obtain an accurate evaluation of $H(Y)$ with a sufficiently large number of samples, and in the numerical example we use $M=N=1\times10^5$.
	We emphasize that such a large number of samples is not computationally feasible to use in the experimental design procedure,
	and thus in the example we have to resort to other entropy estimation methods. 
\end{document}